\RequirePackage{fix-cm}
\documentclass[11pt]{article}

\usepackage[a4paper,margin=1in]{geometry}
\usepackage{amsmath,amssymb,amsthm,mathtools}
\usepackage{enumitem}
\usepackage{setspace}
\usepackage{microtype}
\usepackage{booktabs}
\usepackage{array}
\usepackage[hidelinks]{hyperref}
\usepackage[nameinlink,capitalize,noabbrev]{cleveref}
\usepackage[round,authoryear]{natbib}

\newtheorem{theorem}{Theorem}[section]
\newtheorem{proposition}[theorem]{Proposition}
\newtheorem{lemma}[theorem]{Lemma}
\newtheorem{corollary}[theorem]{Corollary}
\newtheorem{axiom}{Axiom}
\newtheorem{definition}[theorem]{Definition}
\theoremstyle{remark}
\newtheorem{remark}[theorem]{Remark}

\newcommand{\R}{\mathbb R}
\newcommand{\1}{\mathbf 1}

\title{\textbf{Utility-Level-Dependent Ambiguity}}

\author{Kemal Ozbek\thanks{Department of Economics, University of Southampton, University Road, Southampton, S017 1BJ, United Kingdom. Email: mkemalozbek@gmail.com}}
\date{\today}

\begin{document}

\maketitle

\begin{abstract}
Experimental evidence suggests that ambiguity-sensitive choice can vary systematically with the circumstances of a decision. This paper isolates one channel within a stable preference relation: ambiguity weighting may depend on the act's certainty-equivalent level. After the standard Anscombe--Aumann calibration of consequence utility, a set of behavioral axioms yields a unique continuous family of normalized monotone capacities $\{\nu_v\}_{v\in(0,1)}$. Each nonendpoint act is evaluated by the Choquet integral associated with the capacity at its own interior certainty-equivalent level, while nonendpoint acts on the same indifference surface share the same capacity. Binary event comparisons identify local event weights at each elicited level and trace their cross-level variation, providing tests of the fixed-capacity restriction. Local uncertainty aversion is equivalent to convexity of $\nu_v$ and yields an implicit multiple-priors representation with certainty-equivalent-indexed local cores. Certainty translation invariance holds if and only if the capacity is fixed across levels, recovering the maintained nondegenerate fixed-capacity Choquet expected utility benchmark; global mixture-betweenness yields implicit additive utility, and imposing both restrictions recovers full-support subjective expected utility. The capacity schedule is a reduced-form ambiguity weighting whose variation may reflect changes in ambiguity perception, ambiguity attitude, or both.
\end{abstract}

\section{Introduction}
\label{sec:introduction}

A growing experimental literature points to a broad descriptive fact:
ambiguity-sensitive choice varies with the circumstances in which uncertainty
is faced. Most directly for the present question, \citet{BaillonPlacido2019}
reject constancy of ambiguity aversion as decision makers become better off
overall and find substantial evidence in the direction of decreasing ambiguity
aversion. Related experiments document variation with monetary stakes
\citep{BouchouichaEtAl2017}, across gain--loss and likelihood domains
\citep{BaillonBleichrodt2015}, and in certainty-independence restrictions
retained by prominent ambiguity models
\citep{TrautmannWakker2018,KonigKerstingKopsTrautmann2023}. These findings need
not share a single mechanism, and not all isolate the certainty-equivalent
level from other channels. Taken together, however, they motivate asking
whether ambiguity evaluation itself can vary in a systematic and behaviorally
disciplined way across certainty-equivalent levels.

This question creates a natural contrast with benchmark theories in which the
central object governing ambiguity evaluation is globally fixed. Choquet
expected utility uses a single capacity \citep{Schmeidler1989}; maxmin expected
utility uses a fixed set of priors \citep{GilboaSchmeidler1989}; variational
preferences use a global penalty over priors
\citep{MaccheroniMarinacciRustichini2006}; and smooth ambiguity uses a stable
second-order specification together with an ambiguity-attitude aggregator
\citep{KlibanoffMarinacciMukerji2005}. Such models may still generate
context-dependent observed choice through nonlinear utility, aggregation, or
other channels. The narrower distinction is that their central
ambiguity-evaluation object is not itself indexed by the decision maker's
certainty-equivalent level. Important exceptions allow ambiguity attitudes or
ambiguity evaluation to vary with wealth, indifference classes, or endogenous
cognitive choices; these are discussed below. Relatively little is known,
however, about when observed choice reveals a systematically varying
\emph{event-weighting object}, potentially nonadditive, while keeping that
variation behaviorally disciplined and identified.

This paper studies precisely this possibility. It asks whether the ambiguity
weighting relevant to a decision can vary with the decision maker's own
certainty-equivalent level. The answer is a family
\(
    v\longmapsto\nu_v,
\)
where each \(\nu_v\) is a capacity common to all acts on the level-\(v\)
indifference surface. Thus the model does not merely permit context-dependent
ambiguity behavior: it makes the local ambiguity weighting a revealed-preference
object and allows its cross-level variation to be traced from choice. The
interpretation is deliberately reduced form. At the level of the main theorem,
$\nu_v$ is most precisely a certainty-equivalent-indexed event-weighting object
that may be nonadditive. A change in $\nu_v$ may reflect a change in perceived
ambiguity, a change in attitude toward ambiguity, or both; the baseline
representation does not claim to separate these channels. The ambiguity-averse
interpretation becomes strongest in the convex-capacity subclass characterized
by local uncertainty aversion below.

The main representation gives this question a sharp answer. As usual in the
Anscombe--Aumann framework \citep{AnscombeAumann1963}, preferences over constant acts of lotteries can be
used to recover a common affine von Neumann--Morgenstern utility scale under
the standard risk-side axioms. Since this construction is standard, we take
the resulting scale as given and focus on the uncertainty component of
preferences. After state-contingent lotteries have been expressed on this
common scale, let $x$ denote the vector of state utilities and let $V(x)$ denote
its certainty equivalent. The main result then characterizes ambiguity
evaluation on this reduced domain. Under continuity, statewise monotonicity,
strict ordering of constant acts, \emph{comonotonic mixture-betweenness}, and
a boundary nondegeneracy condition, there exists a unique capacity $\nu_v$ at
every interior certainty-equivalent level $v$ such that
\(
    x\succeq v\mathbf1
    \;\text{if and only if}\;
    C_{\nu_v}(x)\geq v.
\)
The resulting certainty-equivalent map represents preferences on the full act
domain: $V(\mathbf0)=0$, $V(\mathbf1)=1$, and every other act satisfies
\(V(x)=C_{\nu_{V(x)}}(x)\). The capacity is therefore \emph{local to the
indifference level}: two acts with the same interior certainty equivalent are
evaluated by the same event weighting, while acts at different levels may be
evaluated by different capacities. This yields a disciplined form of
level-dependent event weighting, richer than a fixed-capacity Choquet model
but substantially more structured than an arbitrary act-dependent capacity.

The axiomatic construction combines two familiar departures from subjective
expected utility. Betweenness allows the local supporting evaluation to vary
across indifference levels, following \citet{Dekel1986} and
\citet{Payro2025}, while restricting mixture-betweenness to comonotonic acts
allows local state weights to depend on ranks. The distinctive step is
cross-rank consistency at each fixed certainty-equivalent level: rank-specific
local evaluations must assign the same aggregate weight to the same event.
Binary event comparisons impose this consistency, yielding a unique capacity
$\nu_v$ and a direct revealed-preference interpretation of its event weights.
Indifference-class dependence itself has precedents, in particular,
\citet{GrantRoordaYang2025} allow ambiguity aversion to depend on the
indifference class, so the contribution is the subjective-event capacity
construction, its cross-rank consistency, and its level-by-level
identification rather than level dependence by itself.

The representation also makes it possible to study ambiguity \emph{across
certainty-equivalent levels}. First, the paper introduces local uncertainty
aversion at level $v$: mixtures of any two acts indifferent to $v\mathbf1$ are
weakly preferred to the certainty level $v$. This condition is equivalent to
concavity of the local Choquet functional and hence to convexity of $\nu_v$.
When it holds, the capacity has a nonempty core $\mathcal P_v$ and
\( C_{\nu_v}(x)=\min_{p\in\mathcal P_v}p\cdot x\). The implicit Choquet model
therefore becomes an \emph{implicit multiple-priors model}:
\( V(x)=\min_{p\in\mathcal P_{V(x)}}p\cdot x\). Unlike
\citet{GilboaSchmeidler1989}, the locally relevant prior set can expand,
contract, or otherwise change with the certainty-equivalent level.

Second, the model gives a direct revealed-preference meaning to these changes.
For an event $A$, binary event comparisons identify the cutoff gain--loss
tradeoff around level $v$. Whenever this cutoff is interior, an indifference
\(
    (v+a)\mathbf1_A+(v-b)\mathbf1_{A^c}
    \sim v\mathbf1
\)
yields
\(
    \nu_v(A)=\frac{b}{a+b}.
\)
Thus $v\mapsto\nu_v(A)$ is the decision maker's local willingness-to-bet
profile for event $A$, including at extreme event weights where the cutoff is
identified from one-sided binary comparisons rather than an interior
indifference. In the convex-capacity subclass, a pointwise fall in the
capacity as $v$ rises is equivalent to expansion of the core, providing a
precise sense in which the locally relevant set of probability models becomes
larger. Any nontrivial such uniform event ordering is intrinsically nonadditive: two
additive probability measures cannot satisfy $p_w(A)\leq p_v(A)$ for every
event unless they are identical.

Third, the paper identifies the behavioral content of a \emph{fixed} capacity.
Within the implicit-Choquet class, certainty translation invariance,
\(
    V(x+c\mathbf1)=V(x)+c,
\)
holds if and only if $\nu_v$ is independent of $v$. Within the maintained
nondegenerate class, fixed-capacity Choquet expected utility is therefore
exactly the special case in which uncertainty evaluation is invariant to
feasible common certainty translations. When $\nu_v$ varies, the same
uncertain utility increment can therefore be evaluated differently at different
certainty-equivalent levels even though the underlying vNM utility over
lotteries is unchanged. In the convex-capacity subclass this difference can
also be read through changes in the locally relevant prior set. A close
cognitive comparison is \citet{PayroTakeokaXia2026}, who endogenize ambiguity
through costly capacity choice while maintaining certainty-translation
invariance. The present model instead permits ambiguity evaluation to vary
with the certainty-equivalent level; Section \ref{sec:related} develops the
comparison.

The paper proceeds as follows. Section \ref{sec:framework} introduces the
framework and Choquet notation. Section \ref{sec:representation} states the
axioms and the representation theorem; its auxiliary results and proofs are
collected in the appendices. Section \ref{sec:special-cases} develops benchmark restrictions that locate
full-support subjective expected utility, implicit additive utility, and the
fixed-capacity Choquet model within the representation. Section
\ref{sec:ambiguity-analysis} develops the ambiguity implications, including
local uncertainty aversion, certainty-equivalent-indexed multiple priors,
revealed-preference identification, the empirical content and falsifiability of
the representation, cross-level ambiguity, translation invariance and background shifts, and
boundary qualifications. Section \ref{sec:related} then
positions the model relative to implicit and betweenness representations,
Choquet and consequence-dependent weighting, ambiguity-attitude models,
cognitive endogenous ambiguity, and the multiple-priors literature. Section
\ref{sec:conclusion} concludes. All proofs are collected in the Appendices.

\section{Framework}
\label{sec:framework}

Let $S=\{1,\ldots,n\}$, with $n\geq2$, be a finite state space. An
Anscombe--Aumann act assigns a lottery over a finite prize set to every state.
Under the standard risk-side axioms, preferences over constant acts recover a
common affine von Neumann--Morgenstern utility index over lotteries. Since that
calibration is standard, the analysis begins from the resulting utility scale.
Thus, if $f$ is an act and $u$ is this affine utility index, write
\(
    x_s(f):=u(f(s)).
\)
Normalize the range of $u$ to $I=[0,1]$. An act can then be identified with a
state-utility vector
\(x=(x_1,\ldots,x_n)\in X:=I^S\).
The constant act delivering utility $c$ in every state is denoted by
$c\mathbf1$.

This reduction isolates the uncertainty component that is the focus of the
paper: the results below characterize ambiguity evaluation once consequences
have been expressed on the standard common affine scale. If instead different
state-dependent utility indices $u_s$ are used, an interstate cardinal
normalization is needed before comparisons such as $x_s>x_t$ and
comonotonicity are invariantly defined; see the discussion of state-dependent
affine utilities in Section \ref{sec:ambiguity-analysis}.

Two vectors $x,y\in X$ are \emph{comonotonic} if \((x_s-x_t)(y_s-y_t)\geq0 \; \text{for all }s,t\in S\). For a permutation $\sigma$ of $S$, define the rank cone
\(K_\sigma := \left\{ x\in X: x_{\sigma(1)}\leq\cdots\leq x_{\sigma(n)} \right\}\).
Every pair of vectors in $K_\sigma$ is comonotonic. The cones cover $X$ and
overlap at profiles with ties.

Let \(\succeq\) be a binary relation on \(X\), with symmetric and asymmetric
parts \(\sim\) and \(\succ\).

\begin{definition}
\label{def:capacity}
A \emph{capacity} on \(S\) is a set function
\(
    \nu:2^S\to[0,1]
\)
satisfying
\(
    \nu(\varnothing)=0,
    \;
    \nu(S)=1,
\)
and
\(
    A\subseteq B
    \;\Longrightarrow\;
    \nu(A)\leq\nu(B).
\)
The capacity is \emph{additive} if
\(
    \nu(A\cup B)=\nu(A)+\nu(B)
\)
for all disjoint \(A,B\subseteq S\).
\end{definition}

For \(x\in X\), choose a permutation \(\sigma\) such that
\(
    x_{\sigma(1)}\leq\cdots\leq x_{\sigma(n)}
\)
and define the upper events
\begin{equation}
    A_i^\sigma
    :=
    \{\sigma(i),\ldots,\sigma(n)\},
    \qquad
    i=1,\ldots,n,
    \label{eq:upper-events}
\end{equation}
with \(A_{n+1}^\sigma:=\varnothing\). The finite Choquet integral of \(x\)
with respect to \(\nu\) is
\begin{equation}
\begin{split}
    C_\nu(x)
    :=
    x_{\sigma(1)}
    +
    \sum_{i=2}^{n}
    \bigl(x_{\sigma(i)}-x_{\sigma(i-1)}\bigr)
    \nu(A_i^\sigma).
    \label{eq:choquet-integral}
\end{split}
\end{equation}
The value is independent of the ordering chosen at ties. On any fixed rank
cone, \(C_\nu\) is linear in \(x\). Although preferences are defined only on
\(X\), the same finite-sum formula canonically defines \(C_\nu(z)\) for every
\(z\in\mathbb R^S\). We use this extension only for algebraic manipulations in
the proofs. It satisfies
\(
    C_\nu(z+c\mathbf1)=C_\nu(z)+c
\)
for every \(c\in\mathbb R\), and
\(
    C_\nu(\lambda z)=\lambda C_\nu(z)
\)
for every \(\lambda\geq0\); all primitive preference comparisons remain on
\(X\).

\section{Representation}
\label{sec:representation}

Preferences and the behavioral axioms are stated on the full act domain
\(X=[0,1]^S\). The capacity schedule is indexed only by interior certainty-equivalent
levels $v\in(0,1)$; no capacities at $0$ or $1$ are needed. To obtain a
full-domain characterization nevertheless, we impose one boundary nondegeneracy
condition: every nontrivial boundary act must be indifferent to an interior
constant. Axioms \ref{ax:weak-order-continuity}--\ref{ax:strict-constants} already give every interior act a unique interior certainty equivalent; this condition ensures that the only acts with
endpoint certainty equivalents are $\mathbf0$ and $\mathbf1$. Thus every
nonendpoint act is evaluated by an identified member of the interior schedule
$\{\nu_v\}_{v\in(0,1)}$. The condition is a nondegeneracy requirement, not
a strict-monotonicity assumption: it rules out boundary acts whose certainty
equivalent coincides with one of the two endpoint constants.

The characterization rests on five behavioral axioms. The first three are
standard regularity and monotonicity requirements. The fourth replaces global
mixture-betweenness by its comonotonic counterpart, allowing the local state
weighting to vary across rank cones and, ultimately, to be represented by a
capacity. The fifth is the boundary nondegeneracy condition that makes the
interior capacity schedule sufficient to represent preferences on all of $X$.

\begin{axiom}[Continuous weak order]
\label{ax:weak-order-continuity}
The relation \(\succeq\) is complete and transitive. For every \(x\in X\),
the upper and lower contour sets
\(
    \{y\in X:y\succeq x\}
    \;\text{and}\;
    \{y\in X:x\succeq y\}
\)
are closed.
\end{axiom}

\begin{axiom}[Statewise monotonicity]
\label{ax:monotonicity}
If \(x_s\geq y_s\) for every \(s\in S\), then \(x\succeq y\).
\end{axiom}

\begin{axiom}[Strict ordering of constant acts]
\label{ax:strict-constants}
For all \(a,b\in I\),
\(
    a>b
    \;\Longrightarrow\;
    a\1\succ b\1.
\)
\end{axiom}

\begin{axiom}[Comonotonic mixture-betweenness]
\label{ax:comonotonic-betweenness}
For all comonotonic \(x,y\in X\) and every \(\lambda\in(0,1)\): (i) if \(x\succ y\), then
    \(
        x\succ
        \lambda x+(1-\lambda)y
        \succ y;
    \) and (ii) if \(x\sim y\), then
    \(
        \lambda x+(1-\lambda)y\sim x.
    \)

\end{axiom}

Axiom \ref{ax:comonotonic-betweenness} is ordinary
mixture-betweenness when the comparison is restricted to a rank cone. It is
strictly weaker than imposing mixture-betweenness globally, because acts in
different rank cones need not satisfy the axiom with one another.

\begin{axiom}[Boundary interiority]
\label{ax:boundary-interiority}
For every
\(x\in\partial X\setminus\{\mathbf0,\mathbf1\}\), there exists
\(v\in(0,1)\) such that
\(
    x\sim v\mathbf1.
\)
\end{axiom}

This is a substantive boundary nondegeneracy condition: every nontrivial
boundary act must have an interior certainty equivalent. It excludes models
in which such an act has an endpoint certainty equivalent. Together
with Axioms \ref{ax:weak-order-continuity}--\ref{ax:strict-constants}, the
condition ensures that every
$x\in X\setminus\{\mathbf0,\mathbf1\}$ has a unique certainty equivalent in
$(0,1)$. Hence the interior capacity schedule
$\{\nu_v:v\in(0,1)\}$ is sufficient to represent preferences on the full act
domain, without introducing separate endpoint capacities. In the subjective
expected utility benchmark, the condition is equivalent to full support of the
subjective probability. In the fixed-capacity Choquet benchmark, it requires
$0<\nu(A)<1$ for every nontrivial event
$\varnothing\subsetneq A\subsetneq S$, thereby excluding capacities that treat
a nontrivial event as null or certain.

\begin{definition}
\label{def:admissible-family}
A family $\{\nu_v:v\in(0,1)\}$ is \emph{admissible} if:
\begin{enumerate}[label=(\roman*),leftmargin=2.5em]
    \item every $\nu_v$ is a normalized monotone capacity;
    \item $v\mapsto\nu_v$ is continuous under
    \(
        \|\nu-\mu\|_\infty
        :=\max_{A\subseteq S}|\nu(A)-\mu(A)|;
    \)
    \item for every $x\in X\setminus\{\mathbf0,\mathbf1\}$, the equation
    \(
        C_{\nu_v}(x)=v
    \)
    has a unique solution $v\in(0,1)$.
\end{enumerate}
\end{definition}
The unique-root requirement is the representation-side coherence counterpart
of the behavioral fact that every act has a unique certainty equivalent. It
prevents a proposed capacity schedule from assigning two distinct
self-consistent certainty-equivalent levels to the same act. Importantly, this condition is
not an additional behavioral assumption in the forward direction of the
theorem: it is derived from the axioms and uniqueness of certainty
equivalents. The converse uses it to ensure that the implicit equation defines
a single-valued preference.

Admissibility is not a knife-edge restriction. The following remark provides a
simple sufficient condition and nonconstant examples.
\begin{remark}
\label{rem:admissibility-sufficient}
Suppose that the capacity
schedule is Lipschitz under the sup norm with constant $L<1$:
\(\|\nu_w-\nu_v\|_\infty \leq L|w-v|\).
The Lipschitz schedule has unique endpoint limits $\nu_0$ and $\nu_1$.
Suppose moreover that
\(\nu_0(A)>0 \;\text{for every nonempty }A\subseteq S, \; \nu_1(A)<1 \;\text{for every proper }A\subsetneq S\).
Then, for every $x\in X\setminus\{\mathbf0,\mathbf1\}$,
\(
    C_{\nu_0}(x)>0
    \; \text{and} \;
    C_{\nu_1}(x)<1.
\)
Moreover, for $w>v$,
\begin{align*}
    &[C_{\nu_w}(x)-w]-[C_{\nu_v}(x)-v]\\
    &\qquad\leq
    \operatorname{osc}(x)\|\nu_w-\nu_v\|_\infty-(w-v)
    \leq -(1-L)(w-v)<0,
\end{align*}
where $\operatorname{osc}(x)=\max_s x_s-\min_s x_s$. Hence $v\mapsto C_{\nu_v}(x)-v$ is strictly decreasing. The endpoint
inequalities give opposite signs near $0$ and $1$ for every
$x\notin\{\mathbf0,\mathbf1\}$, so each such act has exactly one root in
$(0,1)$. Interior constant acts $c\mathbf1$, $c\in(0,1)$, have the unique
root $v=c$ automatically.

For an explicit nonconstant class, take two distinct capacities
$\nu^0$ and $\nu^1$ satisfying
\(0<\nu^k(A)<1 \;\text{for every } \varnothing\subsetneq A\subsetneq S, \; k=0,1\),
a nonconstant Lipschitz function $g:[0,1]\to[0,1]$, and
$\varepsilon\in(0,1]$ satisfying
$\varepsilon\operatorname{Lip}(g)<1$. Then
\(\nu_v = [1-\varepsilon g(v)]\nu^0 +\varepsilon g(v)\nu^1, \; v\in(0,1)\),
is an admissible nonconstant family. If $\nu^0$ and $\nu^1$ are also convex,
then every $\nu_v$ is convex.\hfill \qed
\end{remark}

The preceding axioms restrict preferences locally within comonotonic regions while allowing the evaluation of uncertainty to vary across indifference levels. The main result shows that these restrictions are exactly captured by a continuous family of capacities indexed by the decision maker's certainty-equivalent level. Thus, although the weighting of events may change with the certainty-equivalent level, all nonendpoint acts sharing the same interior certainty-equivalent level are evaluated using the same capacity. The resulting representation provides a disciplined form of utility-level-dependent ambiguity while retaining the Choquet structure within each level.

\begin{theorem}
\label{thm:implicit-choquet}
On the full act domain $X=[0,1]^S$, the following are equivalent:
\begin{enumerate}[label=(\roman*),leftmargin=2.5em]
    \item Preferences satisfy Axioms
    \ref{ax:weak-order-continuity}--\ref{ax:boundary-interiority}.

    \item There exists a unique admissible family of capacities
    $\{\nu_v:v\in(0,1)\}$ such that, defining
    \(
        V(\mathbf0):=0,
        \;
        V(\mathbf1):=1,
    \)
    and, for every $x\in X\setminus\{\mathbf0,\mathbf1\}$, letting $V(x)$ be
    the unique solution in $(0,1)$ of \(C_{\nu_v}(x)=v\),
    the map $V:X\to[0,1]$ represents preferences:
    \begin{equation}
        x\succeq y
        \quad\text{if and only if}\quad
        V(x)\geq V(y),
        \quad x,y\in X.
        \label{eq:main-value-representation}
    \end{equation}
\end{enumerate}
\end{theorem}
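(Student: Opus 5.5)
We prove (i)$\Rightarrow$(ii) in four steps, then the converse.

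\emph{Step 1 (certainty equivalents).} Axioms \ref{ax:weak-order-continuity}--\ref{ax:strict-constants} and statewise monotonicity give $\mathbf1\succeq x\succeq\mathbf0$ for every $x$. Connectedness of $I$ and closed contour sets then yield some $c$ with $x\sim c\mathbf1$, and strict ordering of constants makes $c$ unique. Call it $V(x)$. For an interior act, monotonicity and strictness give $V(x)\in(0,1)$ by comparing with $(\min_s x_s)\mathbf1$ and $(\max_s x_s)\mathbf1$. For boundary acts other than $\mathbf0,\mathbf1$, Axiom \ref{ax:boundary-interiority} gives the same conclusion. So $V$ represents $\succeq$ and satisfies $V(\mathbf0)=0$ and $V(\mathbf1)=1$. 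Continuity of $V$ follows from closed contour sets.

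\emph{Step 2 (local linear evaluations on rank cones).} Fix $v\in(0,1)$ and a rank cone $K_\sigma$. Let $E_v^\sigma=\{x\in K_\sigma: x\sim v\mathbf1\}$, the indifference set inside the cone. It contains $v\mathbf1$, and every element is comonotonic with $v\mathbf1$. Part (ii) of Axiom \ref{ax:comonotonic-betweenness} makes $E_v^\sigma$ convex. Part (i) shows that the strict upper and lower sets within $K_\sigma$ are convex. The argument follows Dekel's betweenness: the upper and lower sets in $K_\sigma$ are both convex, so $E_v^\sigma$ is the intersection of $K_\sigma$ with a hyperplane through $v\mathbf1$. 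Hence there is a vector $w^{\sigma,v}\ge0$ with $\sum_s w^{\sigma,v}_s=1$ such that, for $x\in K_\sigma$,
\[
x\succeq v\mathbf1 \iff w^{\sigma,v}\cdot x\ge v.
\]
Nonnegativity comes from monotonicity, and the normalization from evaluating at constants. The hyperplane is unique because $K_\sigma$ has nonempty interior and $v\mathbf1$ lies in its relative interior direction along constants. Uniqueness is only needed for the cone-restricted functionals of the form $x_{\sigma(1)}+\sum_i (x_{\sigma(i)}-x_{\sigma(i-1)})\,w^{\sigma,v}(A_i^\sigma)$. To make this step rigorous, we perturb within a neighbourhood of $v\mathbf1$ in $K_\sigma$ and use strict separation from part (i). This step gives the main technical effort, but it is standard.

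\emph{Step 3 (cross-rank consistency and the capacity). This is the main obstacle.} We define $\nu_v(A)=\sum_{s\in A}w^{\sigma,v}_s$ for any $\sigma$ in which $A$ is an upper event. We must show this is independent of $\sigma$. Consider binary acts $(v+a)\mathbf1_A+(v-b)\mathbf1_{A^c}$. Such an act lies in every cone in which $A$ is an upper set. Its preference comparison with $v\mathbf1$ is a single behavioral fact, so every such cone's functional must produce the same sign of $a\,w(A)-b(1-w(A))$ for all small $a,b>0$. Letting $b/a$ vary pins down $w(A)$ as the common cutoff $b/(a+b)$. This includes the extreme cases $0$ and $1$ via one-sided comparisons. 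Small perturbations near $v$ keep these acts inside $X$, since $v$ is interior.

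Monotonicity of $\nu_v$ then follows from $w\ge0$, and normalization is immediate. The Abel summation on $K_\sigma$ gives $w^{\sigma,v}\cdot x=C_{\nu_v}(x)$, so $x\succeq v\mathbf1\iff C_{\nu_v}(x)\ge v$ on all of $X$. Applying this with $\ge$ and with $\le$ at $v=V(x)$ gives $C_{\nu_{V(x)}}(x)=V(x)$. Uniqueness of the root, i.e.\ admissibility (iii), holds because any root $v$ yields $x\sim v\mathbf1$ by the same equivalence, hence $v=V(x)$. Uniqueness of the family follows from the binary-bet identification.

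\emph{Step 4 (continuity of $v\mapsto\nu_v$).} Use the cutoff characterization. The set of pairs $(a,b)$ with the bet weakly preferred to $v\mathbf1$ varies closedly in $v$ by continuity of $\succeq$. Strictness of the interior cutoff, via the linear representation, upgrades upper and lower semicontinuity of $\nu_v(A)$ to continuity. Because $S$ is finite, this gives sup-norm continuity.

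\emph{Converse.} Given an admissible family, define $V$ as stated and $\succeq$ by $V$, so weak order is immediate. We check the axioms in turn.
\begin{itemize}
\item[] \emph{Strict constants.} $V(c\mathbf1)=c$.
\item[] \emph{Boundary interiority.} It follows from $V(x)\in(0,1)$.
\item[] \emph{Continuity.} Suppose $x^k\to x$. Then any limit point of $V(x^k)$ solves the limiting equation by joint continuity of $(v,x)\mapsto C_{\nu_v}(x)$, and uniqueness identifies it. The endpoints are handled because roots cannot converge to $0$ or $1$ unless $x$ does.
\item[] \emph{Monotonicity.} Suppose $x\ge y$ and $V(x)<V(y)=v$. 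Then $C_{\nu_v}(x)\ge C_{\nu_v}(y)=v$, and $C_{\nu_u}(x)-u$ has a root at $V(x)<v$. The sign of $C_{\nu_u}(x)-u$ near the endpoints is fixed: it is positive near $0$ and negative near $1$, which we obtain by comparing at the constant $(\min x)\mathbf1$ and using uniqueness. Since a continuous function with a unique root must then be strictly negative above $V(x)$, this contradicts $C_{\nu_v}(x)\ge v$.
\item[] \emph{Comonotonic betweenness.} The same sign structure, $C_{\nu_u}(z)\ge u\iff V(z)\ge u$, gives it. For comonotonic $x,y$, $C_{\nu_u}$ is linear on their common cone, so mixtures fall between $x$ and $y$ at every level $u$.
\end{itemize}
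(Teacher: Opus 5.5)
Your forward direction is essentially correct, and it takes a different route from the paper in two places. First, instead of applying Payr\'o's Theorem 3.1 on each cone, you run a Dekel-style separation argument at each fixed level. Second, you obtain continuity of $v\mapsto\nu_v$ from the closed weak and open strict contour sets of the continuous $V$, evaluated on binary bets, rather than from continuity of $\Phi_\sigma(x,v)$ in $v$. This makes the forward direction self-contained. You still need to show that the hyperplane contains the whole slice $K_\sigma\cap\{x\sim v\mathbf1\}$, not only a neighbourhood of $v\mathbf1$. Part (i) applied along segments toward $\mathbf0$ and $\mathbf1$ makes the slice thin, and openness of strict contour sets keeps strictly ranked points off the hyperplane. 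Uniqueness of the family also goes through as you indicate. A rival family's root map $W$ satisfies $W(c\mathbf1)=c$, hence $W=V$. Then $C_{\mu_v}(x^A_{a,b})=v$ at the bets with $a=\varepsilon(1-\nu_v(A))$ and $b=\varepsilon\nu_v(A)$, which forces $\mu_v(A)=\nu_v(A)$.

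The genuine gap is in the converse, at nonconstant boundary acts. Your orientation argument compares $x$ with constants. That yields $C_{\nu_u}(x)-u>0$ for $u<\min_s x_s$ and $C_{\nu_u}(x)-u<0$ for $u>\max_s x_s$, which says nothing when $\min_s x_s=0$ or $\max_s x_s=1$. For $x=\mathbf1_A$ the function is $\nu_u(A)-u$. Uniqueness of its root $r$ does not by itself exclude $\nu_u(A)<u$ on $(0,r)$, or $\nu_u(A)>u$ on $(r,1)$.

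The axioms must hold on all of $X=[0,1]^S$, so this matters in three places. Your monotonicity check is unproved for pairs such as $x=\mathbf1_{\{1\}}\ge y=\frac12\mathbf1_{\{1\}}$: neither the sign of $x$'s function above its root nor the sign of $y$'s function below its root is available. Comonotonic betweenness relies on the same missing sign structure. Continuity is affected too. The bound $V(x^k)\in[\min_s x^k_s,\max_s x^k_s]$ keeps roots away from $0$ and $1$ only when $x$ is interior; for $x=\mathbf1_A$ this interval tends to $[0,1]$. So ``roots cannot converge to $0$ or $1$ unless $x$ does'' is exactly the claim that needs proof.

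The missing idea is the paper's single-crossing lemma, which uses admissibility on an auxiliary interior act to orient every nonendpoint act. Let $r=V(x)$ and suppose $C_{\nu_v}(x)<v$ for some $v<r$. Put $\beta=(r-v)/(r-C_{\nu_v}(x))\in(0,1)$ and $y=(1-\beta)r\mathbf1+\beta x\in(0,1)^S$. Translation invariance and positive homogeneity of the Choquet integral give $C_{\nu_r}(y)=r$ and $C_{\nu_v}(y)=v$, so one act has two roots, contradicting admissibility; the case $v>r$ is symmetric.

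Within your own framework there is an equivalent fix. For $\beta\in(0,1)$, the act $y_\beta=(1-\beta)r\mathbf1+\beta x$ is interior with root $r$, so your interior orientation applies to it. Letting $\beta\to1$ gives $C_{\nu_u}(x)\ge u$ for $u<r$, strict by uniqueness of the root, and the reverse inequality for $u>r$.

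Once single crossing holds for all nonendpoint acts, your continuity argument can be run as in the paper. Fix $a<V(x)<b$, use joint continuity to preserve the signs of $C_{\nu_a}(x^k)-a$ and $C_{\nu_b}(x^k)-b$, and conclude $V(x^k)\in(a,b)$. Monotonicity and comonotonic betweenness then follow, with the cases $x=\mathbf1$ or $y=\mathbf0$ handled by normalization.
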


By construction, every \(x\in X\setminus\{\mathbf0,\mathbf1\}\) satisfies
\(C_{\nu_{V(x)}}(x)=V(x)\). For nonendpoint acts, admissibility additionally
implies the single-crossing orientation of the implicit equation; together
with normalization at the endpoints, this yields, for every $x\in X$ and
$v\in(0,1)$,
\begin{equation}
x\succeq v\mathbf1
\quad\Longleftrightarrow\quad
V(x)\geq v
\quad\Longleftrightarrow\quad
C_{\nu_v}(x)\geq v.
\label{eq:main-threshold}
\end{equation}
The corresponding strict and equality relations follow analogously.

Relative to fixed-capacity Choquet expected utility, the theorem permits the
capacity to vary with the certainty-equivalent level while imposing a strong
cross-act discipline: the capacity is indexed only by the scalar $V(x)$, so
all nonendpoint acts on the same interior indifference surface share the same
event weighting.
Equation \eqref{eq:main-threshold} gives the corresponding local orientation
of preferences around every interior certainty-equivalent level. The boundary
nondegeneracy axiom makes this interior schedule sufficient for the full act
domain.

\paragraph{Proof sketch.}
For the forward direction, Axioms
\ref{ax:weak-order-continuity}--\ref{ax:strict-constants} first deliver a
continuous certainty-equivalent representation $V$. Fix a rank cone
$K_\sigma$. Because all acts in $K_\sigma$ are comonotonic, Axiom
\ref{ax:comonotonic-betweenness} becomes ordinary mixture-betweenness on that
cone. Axiom \ref{ax:boundary-interiority}, together with strict ordering of
constants, makes $\mathbf0$ and $\mathbf1$ respectively strict worst and strict
best acts. \citet{Payro2025}'s mixture-space representation can therefore be applied
locally: for each $v\in(0,1)$ there is a continuous vector $q^\sigma(v)$, normalized so
that its coordinates sum to one, such that
\(
    V(x)=v
    \;\text{if and only if}\;
    q^\sigma(v)\cdot x=v,
    \; x\in K_\sigma.
\)
Comonotonic betweenness determines the preferred side of this hyperplane, and
statewise monotonicity implies $q^\sigma(v)\in\Delta(S)$. Up to this point the
argument uses established implicit-linear representation machinery. The key
construction begins with the cross-cone consistency problem: for a fixed
certainty-equivalent level $v$, the local objects $q^\sigma(v)$ depend on the ranking
$\sigma$, and it is not automatic that they are generated by one event
capacity. For an event $A$, binary acts that take utility value $v+a$ on $A$ and
$v-b$ on $A^c$ belong to every rank cone in which $A$ is an upper event. Since all such
cones must represent the same primitive comparison, the aggregate weight
$\sum_{s\in A}q_s^\sigma(v)$ must be independent of the chosen cone. This
binary-act consistency argument is what permits the level-by-level definition
of a single monotone capacity $\nu_v$. Along each rank cone its Choquet
marginal weights coincide with $q^\sigma(v)$, so
\(
    C_{\nu_v}(x)=q^\sigma(v)\cdot x,
\)
which yields the threshold representation
\eqref{eq:main-threshold}. Thus the proof does not require
a new separation theorem or a new property of the Choquet integral; its main
additional step is the cross-rank event consistency that assembles level-indexed local
probability vectors into $\nu_v$. The same binary-event threshold comparisons identify each
$\nu_v$ uniquely. Moreover, any rival admissible representing family has the
same normalized certainty-equivalent map and therefore the same level-by-level
threshold comparisons, so this event-wise identification yields uniqueness of
the entire family. Continuity of the local vectors $q^\sigma(v)$ implies
continuity of $v\mapsto\nu_v$. The unique certainty equivalent of each act then
becomes the unique solution of $C_{\nu_v}(x)=v$, establishing admissibility.

For the converse, start from an admissible family $\{\nu_v\}$, define the
endpoint values directly, and use the unique interior root for every other act.
Admissibility implies a single-crossing property for $C_{\nu_v}(x)-v$ around
that root, including for nonconstant boundary acts. This yields the threshold
representation and statewise monotonicity. The root map is continuous on the
full act domain, including at $\mathbf0$ and $\mathbf1$, while constants satisfy
$V(c\mathbf1)=c$. Finally, fixed-$v$ Choquet affinity on comonotonic pairs
places every strict mixture between the certainty-equivalent levels of its
endpoints and preserves equality for indifferent endpoints. Hence all five
axioms follow on $X$. The detailed construction and both directions are given
in Appendices \ref{app:auxiliary} and \ref{app:representation-proofs}.

\section{Benchmark restrictions}
\label{sec:special-cases}

The representation separates two logically distinct departures from
full-support subjective expected utility within the maintained class:
nonadditivity across states and dependence of the
local state weighting on the indifference level. The benchmark restrictions
below now apply to the represented preference on the full domain $X$.

To separate rank dependence from utility-level dependence, first ask what
changes when mixture-betweenness is imposed globally rather than only within
comonotonic regions.

\begin{proposition}
\label{prop:global-betweenness}
Within the class characterized above, the following are equivalent:
\begin{enumerate}[label=(\roman*),leftmargin=2.5em]
    \item mixture-betweenness holds for all pairs of acts, not only
    comonotonic pairs;
    \item \(\nu_v\) is additive for every \(v\in(0,1)\).
\end{enumerate}
In that case there is a continuous family
\(p(v)\in\Delta(S)\) such that
\(V(x)=v \;\text{if and only if}\; p(v)\cdot x=v\).
\end{proposition}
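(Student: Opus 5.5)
The plan is to prove the two implications separately, working throughout with the representation of \cref{thm:implicit-choquet} and the threshold characterization \eqref{eq:main-threshold}. The key fact is that, at a fixed interior level $v$, the level set $\{x:V(x)=v\}$ coincides with $\{x\in X: C_{\nu_v}(x)=v\}$. The preferred side is $\{C_{\nu_v}\geq v\}$.

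For (ii)$\Rightarrow$(i), suppose every $\nu_v$ is additive, with $p(v)$ the induced probability vector. Then $C_{\nu_v}(x)=p(v)\cdot x$ on all of $X$, with no rank dependence, and $p(v)$ is continuous in $v$ by admissibility. I would rerun the converse part of \cref{thm:implicit-choquet} without the comonotonicity restriction. Take arbitrary $x,y$ with $V(x)=v_x\geq V(y)=v_y$.
\begin{itemize}[leftmargin=2.5em]
\item If $x\sim y$ with common interior level $v$, linearity of $p(v)\cdot{}$ gives $p(v)\cdot(\lambda x+(1-\lambda)y)=v$. Uniqueness of roots then yields $V(\lambda x+(1-\lambda)y)=v$.
\item If $x\succ y$, fix $w\in(v_y,v_x)$ or $w=v_x$, or $w=v_y$. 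By \eqref{eq:main-threshold}, $p(w)\cdot x\ge w$ and $p(w)\cdot y\le w$, with the appropriate strict inequalities. Linearity places the mixture on the correct side of every such threshold, which gives $v_y<V(\lambda x+(1-\lambda)y)<v_x$.
\end{itemize}
Endpoint cases ($x=\mathbf 1$ or $y=\mathbf 0$) are handled by the same threshold argument at interior levels near the endpoints. This is exactly the comonotonic argument in the theorem with ``fixed-$v$ Choquet affinity on comonotonic pairs'' replaced by global affinity. The final claim of the proposition, $V(x)=v$ iff $p(v)\cdot x=v$, is then immediate from the definition of $V$.

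For (i)$\Rightarrow$(ii), I would show that global betweenness forces each level set to be a hyperplane section, which forces additivity. Fix $v\in(0,1)$.
\begin{itemize}[leftmargin=2.5em]
\item Global betweenness (ii) makes the indifference set $\{V=v\}$ convex.
\item Global betweenness (i) makes $\{V\geq v\}$ and $\{V\leq v\}$ convex as well. If $x\succeq v\mathbf1$ and $y\succeq v\mathbf1$, the mixture lies weakly above the worse endpoint.
\end{itemize}
So $\{C_{\nu_v}\geq v\}$ and $\{C_{\nu_v}\le v\}$ are both convex subsets of $X$. Since $C_{\nu_v}$ is positively homogeneous and translation equivariant on $\R^S$, both sets are translates of cones near $v\mathbf 1$. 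Concretely, $C_{\nu_v}(v\mathbf1+z)\ge v$ iff $C_{\nu_v}(z)\ge0$ for small $z$, and $v\mathbf1$ is interior.

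Hence the cones $\{z:C_{\nu_v}(z)\ge0\}$ and $\{z:C_{\nu_v}(z)\le0\}$ are both convex. I would then extract additivity with binary test vectors. Take disjoint nonempty $A,B$ and look at $z=a\mathbf1_A-b\mathbf1_{A^c}$, where the sign of $C_{\nu_v}(z)$ is governed by $\nu_v(A)$ versus $b/(a+b)$. Then take convex combinations of such vectors lying in the zero set, using weights calibrated so the combination lands on $\mathbf1_{A\cup B}$-type profiles.

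The step I expect to be the main obstacle is turning the convexity of both half-cones into additivity. The cleanest route I would try first is this. The zero set $\{z: C_{\nu_v}(z)=0\}$ is the intersection of two convex cones whose union is $\R^S$ and which share this boundary, so it is a convex cone of codimension one. A convex cone separating $\R^S$ into two convex pieces must be a hyperplane $\{q\cdot z=0\}$. Normalizing with $C_{\nu_v}(\mathbf1)=1$ gives $q\in\Delta(S)$, by monotonicity.

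On each rank cone, $C_{\nu_v}$ is linear and vanishes exactly on this hyperplane, so it is a positive multiple of $q\cdot z$. Comparing values at $\mathbf 1$, which lies in every rank cone, fixes the multiple at one. Thus $C_{\nu_v}(z)=q\cdot z$ everywhere, and in particular $\nu_v(A)=C_{\nu_v}(\mathbf1_A)=q(A)$ is additive. Setting $p(v):=q$ and inheriting continuity from the admissible family completes the proof.
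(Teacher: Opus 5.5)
Your proof is correct. The direction (ii)$\Rightarrow$(i) is essentially the paper's: you use global affinity of $p(w)\cdot x$ at the two levels $w=V(x)$ and $w=V(y)$, then the single-crossing orientation. One small correction on the endpoint cases: no limiting argument near the endpoints is needed, since a nonendpoint mixture has an interior root by admissibility.

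For (i)$\Rightarrow$(ii) you take a genuinely different route. The paper re-applies Theorem 3.1 of \citet{Payro2025} to all of $X$ rather than cone by cone. That gives a global linear functional $p(v)\cdot x$, and the paper then invokes the uniqueness clause of Proposition~\ref{prop:revealed-capacity} to conclude $\nu_v(A)=\sum_{s\in A}p_s(v)$. You instead stay inside the already-constructed Choquet representation and use only convex geometry. Convexity of both weak contour sets at level $v$, together with positive homogeneity and translation equivariance of $C_{\nu_v}$, makes $\{C_{\nu_v}\ge 0\}$ and $\{C_{\nu_v}\le 0\}$ closed convex cones covering $\R^S$, hence opposite half-spaces.

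Two steps you leave implicit should be written out:
\begin{enumerate}[label=(\alph*),leftmargin=2.5em]
    \item the scaling argument that upgrades convexity near $v\mathbf 1$ to convexity of the whole cones;
    \item the fact that the two cones have disjoint interiors, because $C_{\nu_v}(z+t\mathbf 1)=C_{\nu_v}(z)+t$ forces $\{C_{\nu_v}=0\}$ to have empty interior. This is what lets the separating hyperplane identify both cones as half-spaces.
\end{enumerate}
The ending can also be shortened. Once $C_{\nu_v}(z)\ge 0\iff q\cdot z\ge 0$ with $q\cdot\mathbf 1=1$, translation equivariance gives $C_{\nu_v}(z)\ge t\iff q\cdot z\ge t$ for every $t$, hence $C_{\nu_v}=q\cdot{}$ without any rank-cone comparison. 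The abandoned binary test-vector sketch can simply be dropped.

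What your route buys is a self-contained, level-by-level argument. It shows that additivity of $\nu_v$ is forced by what global betweenness says at the single level $v$. In fact convexity of the indifference set alone suffices: that is local uncertainty neutrality at $v$, so this direction also follows directly from the neutrality clause of Proposition~\ref{prop:local-ua}. It also avoids re-verifying Payr\'o's hypotheses on the non-comonotonic domain.

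The paper's route is shorter given that external theorem and mirrors the construction in Lemma~\ref{lem:local-linear}. However, it must appeal to uniqueness of the capacity to transfer the conclusion back to $\nu_v$.
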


Thus global betweenness eliminates nonadditivity: the state weighting may
still vary with \(v\), but at each certainty-equivalent level it is an ordinary probability.

The proposition identifies the level-dependent additive-probability benchmark studied
by the mixture-betweenness literature. In particular, the finite-state
Anscombe--Aumann specialization of \citet{Payro2025} takes the form
of an implicit expected utility whose state probabilities may depend on the
implicit utility level.

The complementary restriction asks what happens when comparisons remain
comonotonic but betweenness is strengthened to independence. For clarity, say
that preferences satisfy \emph{comonotonic independence} if, for every
pairwise comonotonic $x,y,z\in X$ and every $\alpha\in(0,1)$,
\(x\succeq y \;\text{if and only if}\; \alpha x+(1-\alpha)z \succeq \alpha y+(1-\alpha)z\).
This is the standard independence restriction applied only within a common
comonotonic region.

\begin{proposition}
\label{prop:fixed-capacity}
Within the class characterized by Theorem \ref{thm:implicit-choquet},
strengthening comonotonic mixture-betweenness to comonotonic independence is
equivalent to a constant capacity family:
\(\nu_v=\nu \; \text{for all }v\in(0,1)\),
with
\(V(x)=C_\nu(x), \; x\in X\).
The fixed capacity necessarily satisfies
\(
    0<\nu(A)<1
\)
for every nontrivial event
\(\varnothing\subsetneq A\subsetneq S\).
\end{proposition}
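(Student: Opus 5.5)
Both directions work inside the representation of Theorem \ref{thm:implicit-choquet}, using the threshold equivalence \eqref{eq:main-threshold} and the fact that $C_\nu$ is linear on each rank cone $K_\sigma$.

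\emph{Sufficiency.} Suppose $\nu_v=\nu$ for all $v\in(0,1)$. For $x\neq\mathbf0,\mathbf1$, the defining equation $C_{\nu}(x)=v$ gives $V(x)=C_\nu(x)$ directly. At the endpoints, $C_\nu(\mathbf0)=0$ and $C_\nu(\mathbf1)=1$. Hence $V=C_\nu$ on all of $X$.

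Comonotonic independence is then immediate. If $x,y,z$ are pairwise comonotonic, then $x,y,z$ and every mixture of them lie in a common cone $K_\sigma$. Take a permutation ordering $x+y+z$; pairwise comonotonicity guarantees that this ordering orders each of the three vectors. On that cone $C_\nu$ is linear, so
\[
C_\nu(\alpha x+(1-\alpha)z)-C_\nu(\alpha y+(1-\alpha)z)=\alpha\bigl(C_\nu(x)-C_\nu(y)\bigr),
\]
which yields the equivalence. Comonotonic independence implies comonotonic betweenness, so nothing is lost.

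\emph{Bounds on the fixed capacity.} The strict inequalities $0<\nu(A)<1$ come from Axiom \ref{ax:boundary-interiority}. The act $\mathbf1_A$ for a nontrivial $A$ is a boundary act, and $V(\mathbf1_A)=C_\nu(\mathbf1_A)=\nu(A)$. Since this certainty equivalent must be interior, $0<\nu(A)<1$.

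\emph{Necessity.} The main step is to show that $\nu_v$ does not depend on $v$. I would use the binary-event identification: for nontrivial $A$, $\nu_v(A)$ is pinned down by comparisons of $v\mathbf1$ with $(v+a)\mathbf1_A+(v-b)\mathbf1_{A^c}$. The key observation is that binary acts on $A$ and constants are pairwise comonotonic, so comonotonic independence applies to them.

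Fix levels $v<w$ in $(0,1)$. Take an interior indifference
\[
x:=(v+a)\mathbf1_A+(v-b)\mathbf1_{A^c}\sim v\mathbf1,
\]
with $a,b>0$ small and $\nu_v(A)=b/(a+b)$. The existence of such a pair for each ratio follows from $0<\nu_v(A)<1$. That bound is itself obtained as follows: monotonicity gives $\nu_v(A)\in[0,1]$, and strictness follows from the same argument via Axiom \ref{ax:boundary-interiority} applied to $\mathbf1_A$ and to $c\mathbf1_A$, whose certainty equivalent is the root $c\,\nu_v(A)=v$.

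Mixing both sides with a constant $z=c\mathbf1$ at weight $1-\alpha$ preserves indifference. This gives
\[
\alpha x+(1-\alpha)c\mathbf1\sim(\alpha v+(1-\alpha)c)\mathbf1.
\]
The left side is a binary act around the level $v'=\alpha v+(1-\alpha)c$ with gains $\alpha a$ and losses $\alpha b$, so it has the same ratio. Therefore $\nu_{v'}(A)=\nu_v(A)$. Choosing $c\in[0,1]$ and $\alpha\in(0,1)$ lets $v'$ reach any target in $(0,1)$, in particular $w$. One can take $c=1$ when $w>v$ and $c=0$ when $w<v$.

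Since the event $A$ was arbitrary, $\nu_w=\nu_v$. Then $V=C_\nu$ by the sufficiency computation.

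The main obstacle is this necessity step, specifically checking that the mixed act is again a valid interior binary comparison that identifies $\nu_{v'}(A)$ through \eqref{eq:main-threshold}. Near the boundary, the displacements must stay inside $[0,1]$, which holds because they are scaled by $\alpha$. The indifference must also read off the capacity exactly. This works because at level $v'$ the threshold equation $C_{\nu_{v'}}(y)=v'$ for a binary act $y$ on $A$ reduces to $\alpha a\,\nu_{v'}(A)-\alpha b\,(1-\nu_{v'}(A))=0$.
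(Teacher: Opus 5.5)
Your sufficiency direction and the bound $0<\nu(A)<1$ match the paper, and your necessity mechanism is sound. However, one auxiliary step fails as stated. You claim $0<\nu_v(A)<1$ at \emph{every} level $v$ before constancy is known, and you justify it by Axiom~\ref{ax:boundary-interiority} applied to $\mathbf1_A$ and $c\mathbf1_A$. That argument only pins down $\nu_v(A)$ at the levels $v=V(c\mathbf1_A)$, where $\nu_v(A)=v/c$, and these levels fill only $(0,V(\mathbf1_A)]$. Even there, $v/c<1$ would need $V(c\mathbf1_A)<c$, which weak monotonicity does not supply. Without comonotonic independence the claim is in fact false. With $n=2$, the schedule $\nu_v(\{1\})=\tfrac12|v-\tfrac12|$, $\nu_v(\{2\})=\tfrac12$ has Lipschitz constant $\tfrac12$ and satisfies the endpoint conditions, so it is admissible by Remark~\ref{rem:admissibility-sufficient}. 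It therefore satisfies Axioms~\ref{ax:weak-order-continuity}--\ref{ax:boundary-interiority}, yet $\nu_{1/2}(\{1\})=0$. The paper flags exactly this possibility in Section~\ref{sec:revealed}. So, from an arbitrary starting level $v$, the interior indifference with $a,b>0$ that you rely on need not exist.

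The repair is immediate and keeps your route. Anchor at $v_0:=V(\mathbf1_A)$: Axiom~\ref{ax:boundary-interiority} gives $v_0\in(0,1)$, and the root equation gives $\nu_{v_0}(A)=v_0$. Then mix $\mathbf1_A\sim v_0\mathbf1$ with $c\mathbf1$, taking $c=0$ for lower levels and $c=1$ for higher ones. This yields $(1-\alpha)c+\alpha\nu_w(A)=w=\alpha v_0+(1-\alpha)c$, hence $\nu_w(A)=V(\mathbf1_A)$ for every $w\in(0,1)$. Alternatively, as the paper does, take $a=\varepsilon(1-r)$ and $b=\varepsilon r$ with $r=\nu_v(A)$, and use the equality threshold relation; the cases $r\in\{0,1\}$ are then covered by the one-sided identifications in Proposition~\ref{prop:revealed-capacity}. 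With this fix your necessity argument is a more direct route than the paper's. The paper first derives certainty translation invariance of $V$ from comonotonic independence: it mixes with $\mathbf0$ to get $f\sim g$, then swaps the common component to $\mathbf1$. It then translates a binary indifference while keeping $a,b$ fixed. You mix once with a constant, which shrinks both increments by $\alpha$ but preserves $b/(a+b)$. Your route avoids the two-step scaling detour. The paper's route has the side benefit of exhibiting comonotonic independence $\Rightarrow$ translation invariance, which ties the result to Proposition~\ref{prop:translation-fixed}.
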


Thus comonotonic independence eliminates cross-level variation while
retaining the possibility of nonadditive state weighting. This recovers the
maintained nondegenerate fixed-capacity CEU benchmark. Because Axiom
\ref{ax:boundary-interiority} is maintained, this fixed-capacity CEU benchmark
excludes capacities with null or certain nontrivial events; equivalently,
$0<\nu(A)<1$ for every
$\varnothing\subsetneq A\subsetneq S$.

The preceding two restrictions eliminate the two departures from subjective
expected utility separately. Imposing both therefore recovers a single fixed
additive probability.

\begin{corollary}
\label{cor:seu-intersection}
Within the class characterized by Theorem~\ref{thm:implicit-choquet}, the
following are equivalent:
\begin{enumerate}[label=(\roman*),leftmargin=2.5em]
    \item preferences satisfy global mixture-betweenness and comonotonic
    independence;
    \item the representing family is both additive and independent of the
    certainty-equivalent level: there exists a single full-support probability
    $p\in\Delta(S)$ such that
    \(\nu_v(A)=\sum_{s\in A}p_s\) for every $A\subseteq S$ and
    $v\in(0,1)$;
    \item preferences admit the full-support subjective expected utility representation
    \(V(x)=p\cdot x \; \text{for all }x\in X\)
    for a single full-support $p\in\Delta(S)$.
\end{enumerate}
\end{corollary}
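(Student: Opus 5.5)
The plan is to prove the cycle (i)$\Rightarrow$(ii)$\Rightarrow$(iii)$\Rightarrow$(i), leaning on Propositions \ref{prop:global-betweenness} and \ref{prop:fixed-capacity} together with the uniqueness part of Theorem~\ref{thm:implicit-choquet}.

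\emph{Step 1: (i)$\Rightarrow$(ii).} Assume global mixture-betweenness and comonotonic independence. By Proposition~\ref{prop:global-betweenness}, every $\nu_v$ is additive. By Proposition~\ref{prop:fixed-capacity}, $\nu_v=\nu$ for all $v$, and $0<\nu(A)<1$ for every nontrivial $A$. Hence $\nu$ is a single additive capacity. Setting $p_s:=\nu(\{s\})$ gives $p\in\Delta(S)$ with $\nu(A)=\sum_{s\in A}p_s$. Full support follows from $\nu(\{s\})>0$, which is the nondegeneracy clause of Proposition~\ref{prop:fixed-capacity} applied to singletons; this uses $n\geq2$, so that singletons are proper events.

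\emph{Step 2: (ii)$\Rightarrow$(iii).} If $\nu_v$ is the additive measure $p$ for all $v$, then $C_{\nu_v}(x)=p\cdot x$, because the Choquet integral of an additive capacity is the expectation. The defining equation $C_{\nu_v}(x)=v$ therefore has the unique root $v=p\cdot x$ for $x\notin\{\mathbf0,\mathbf1\}$. At the endpoints, $p\cdot\mathbf0=0=V(\mathbf0)$ and $p\cdot\mathbf1=1=V(\mathbf1)$. So $V(x)=p\cdot x$ on all of $X$, and by \eqref{eq:main-value-representation} this function represents $\succeq$.

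\emph{Step 3: (iii)$\Rightarrow$(i).} Suppose $V(x)=p\cdot x$ with full-support $p$.

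The first task is to confirm that the preference lies in the class of Theorem~\ref{thm:implicit-choquet}. Axioms \ref{ax:weak-order-continuity}--\ref{ax:strict-constants} are immediate from linearity. Axiom~\ref{ax:comonotonic-betweenness} holds because $V$ is linear: a strict mixture has value strictly between the two endpoint values, and equals them when they coincide. For Axiom~\ref{ax:boundary-interiority}, any $x\in\partial X\setminus\{\mathbf0,\mathbf1\}$ has some coordinate strictly below $1$ and some coordinate strictly above $0$. With full support this gives $0<p\cdot x<1$, so $x\sim(p\cdot x)\mathbf1$ with an interior constant.

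Given membership in the class, the two properties in (i) follow directly. Global mixture-betweenness and comonotonic independence (indeed full independence) hold because $V$ is affine: $V(\alpha x+(1-\alpha)z)=\alpha V(x)+(1-\alpha)V(z)$.

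The only delicate point is that Propositions \ref{prop:global-betweenness} and \ref{prop:fixed-capacity} are stated "within the class". Step~3 therefore has to check the maintained Axioms, including full support for Axiom~\ref{ax:boundary-interiority}, rather than assume them. Optionally, Step~3 can also be closed via (ii): the family $\nu_v\equiv p$ is admissible (it is constant, continuous, and has unique roots) and represents $V$, so by the uniqueness in Theorem~\ref{thm:implicit-choquet} it is the representing family. This confirms that the conditions are equivalent in their intended sense.
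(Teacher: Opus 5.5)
Your proof is correct and follows essentially the same route as the paper. Propositions~\ref{prop:global-betweenness} and~\ref{prop:fixed-capacity} give additivity, level-independence, and full support through the singleton nondegeneracy $0<\nu(\{s\})<1$; the Choquet integral of an additive capacity then reduces the implicit equation to $V(x)=p\cdot x$; and the converse follows from global affinity together with full support, which secures Axiom~\ref{ax:boundary-interiority}. Your explicit boundary-interiority check and endpoint verification simply spell out steps the paper states in one line.
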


Full-support subjective expected utility is therefore exactly the intersection
of the two benchmark restrictions within the maintained class: additivity at
every level and invariance of the weighting object across levels. The full-support
qualification is exactly the boundary implication of Axiom
\ref{ax:boundary-interiority}.

\begin{figure}[htbp]
\centering
\begin{tabular}{>{\raggedright\arraybackslash}p{0.28\textwidth}
                >{\centering\arraybackslash}p{0.27\textwidth}
                >{\centering\arraybackslash}p{0.30\textwidth}}
\toprule
& \textbf{Fixed across CE levels}
& \textbf{May vary with \(v\)} \\
\midrule
\textbf{Additive at every level}
& Full-support SEU: \(p\)
& Implicit additive utility: \(p(v)\) \\
\addlinespace
\textbf{Nonadditive at some level}
& Nonadditive fixed-capacity CEU: \(\nu\)
& Nonadditive implicit Choquet utility: \(\nu_v\) \\
\bottomrule
\end{tabular}
\caption{Four benchmark classes within the maintained nondegenerate class}
\label{fig:four-classes}
\end{figure}

The vertical move from the top to the bottom row permits genuine
nonadditivity by weakening the global mixture restriction to a comonotonic one.
The horizontal move from the left to the right column permits the weighting
object to vary across indifference levels by weakening independence to
betweenness. The general representation encompasses all four cells: global
mixture-betweenness restricts it to the top row, comonotonic independence
restricts it to the left column, and imposing neither restriction permits the
genuinely nonadditive, level-varying bottom-right case.

\section{Ambiguity implications}
\label{sec:ambiguity-analysis}

The representation permits the ambiguity evaluation to vary with the decision
maker's certainty-equivalent level while remaining common to all nonendpoint acts on a given
indifference surface. This section develops the main implications of that
structure: local ambiguity attitudes, certainty-equivalent-indexed multiple
priors, revealed-preference identification, the empirical content and
falsifiability of the representation, cross-level changes in event weighting,
certainty-level shifts and common translations, and the qualifications needed
for interpretation.

\subsection{Ambiguity attitudes at a certainty-equivalent level}
\label{sec:local-ambiguity}

The representation gives each interior indifference level its own Choquet
functional. This section asks what familiar ambiguity-attitude restrictions
mean locally. The first notion is the level-specific analogue of uncertainty
aversion: randomizing between two acts on the same indifference surface should
not reduce welfare.

\begin{definition}
\label{def:local-ua}
Fix $v\in(0,1)$. Preferences satisfy \emph{local uncertainty aversion at
$v$} if, for every $x,y\in X$ satisfying
\(
    x\sim v\mathbf1
    \;\text{and}\;
    y\sim v\mathbf1,
\)
and every $\lambda\in[0,1]$, \(  \lambda x+(1-\lambda)y
    \succeq
    v\mathbf1\). Local uncertainty neutrality at $v$ requires indifference; local uncertainty seeking reverses the weak inequality.
\end{definition}

The next result gives local uncertainty aversion a familiar Choquet
interpretation. Recall that a capacity is convex (supermodular) if
\(\nu(A\cup B)+\nu(A\cap B) \geq \nu(A)+\nu(B) \; \text{for all }A,B\subseteq S\).

\begin{proposition}
\label{prop:local-ua}
Fix $v\in(0,1)$. Under the representation in Theorem
\ref{thm:implicit-choquet}, the following are equivalent:
\begin{enumerate}[label=(\roman*),leftmargin=2.5em]
    \item preferences satisfy local uncertainty aversion at $v$;
    \item the functional $C_{\nu_v}$ is concave on $X$;
    \item $\nu_v$ is a convex capacity.
\end{enumerate}
Likewise, local uncertainty neutrality at $v$ is equivalent to additivity of
$\nu_v$.
\end{proposition}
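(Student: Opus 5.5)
The equivalence of (ii) and (iii) is the classical fact that a Choquet functional on a finite state space is concave if and only if its capacity is supermodular. Concavity of $C_{\nu_v}$ on the convex set $X$ is equivalent to concavity on $\mathbb R^S$ by translation invariance and positive homogeneity. Supermodularity is in turn equivalent to $C_{\nu_v}(x)=\min_{p\in\operatorname{core}(\nu_v)}p\cdot x$ (Shapley; Schmeidler 1989). For a self-contained route to (ii)$\Rightarrow$(iii), I would apply midpoint concavity to $x=\mathbf 1_{A\cup B}\cdot\tfrac12+\tfrac14$ style indicator combinations. Concretely, $C(\mathbf 1_A+\mathbf 1_B)=\nu(A\cup B)+\nu(A\cap B)$ because $\mathbf 1_A+\mathbf 1_B$ takes values $0,1,2$ with upper sets $A\cup B$ and $A\cap B$. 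Concavity together with positive homogeneity gives superadditivity, $C(\mathbf 1_A+\mathbf 1_B)\ge C(\mathbf 1_A)+C(\mathbf 1_B)$. To stay inside $X$, I would rescale by $\tfrac12$.

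The substance is (i)$\Leftrightarrow$(ii), using the threshold characterization \eqref{eq:main-threshold}: for every $z\in X$, $z\succeq v\mathbf1$ iff $C_{\nu_v}(z)\ge v$, and $z\sim v\mathbf1$ iff $C_{\nu_v}(z)=v$.

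For (ii)$\Rightarrow$(i), take $x,y\sim v\mathbf1$, so $C_{\nu_v}(x)=C_{\nu_v}(y)=v$. Concavity gives $C_{\nu_v}(\lambda x+(1-\lambda)y)\ge v$, and the threshold gives the preference.

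For (i)$\Rightarrow$(ii), local uncertainty aversion says the level set $L=\{z\in X:C_{\nu_v}(z)=v\}$ has convex hull inside the upper set $U=\{C_{\nu_v}\ge v\}$. I want to upgrade this to concavity of the positively homogeneous, translation-equivariant extension $C_{\nu_v}$. The plan is to show that the upper set $\{z\in\mathbb R^S: C_{\nu_v}(z)\ge 0\}$ (after translating by $-v\mathbf1$) is convex. A positively homogeneous function whose superlevel set at $0$ is convex and which is translation-equivariant is superadditive: if $C(a)=\alpha$ and $C(b)=\beta$, then $a-\alpha\mathbf1$ and $b-\beta\mathbf1$ lie in the zero-superlevel set, so their sum does too, giving $C(a+b)\ge\alpha+\beta$. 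Superadditivity plus homogeneity then yields concavity.

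To get convexity of $U_0=\{w: C(w)\ge0\}$ near the origin from convexity statements only about the level set, I would use the following steps.
\begin{enumerate}[label=(\alph*),leftmargin=2.5em]
\item Any $w$ with $C(w)\ge 0$ can be written, after positive scaling small enough that $v\mathbf1+w\in X$ (using $v\in(0,1)$ interior and homogeneity), as $w=w'+t\mathbf1$ with $C(w')=0$ and $t\ge0$.
\item Given $w_1,w_2\in U_0$, decompose each this way, apply local aversion to $v\mathbf1+w_1'$ and $v\mathbf1+w_2'$, and add back $\lambda t_1+(1-\lambda)t_2\ge0$ via monotonicity of $C$ (adding a nonnegative constant raises $C$).
\end{enumerate}

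The main obstacle is this scaling/domain step: local aversion is assumed only at the single level $v$ and only for acts in $X$. Homogeneity of $C_{\nu_v}$ around $v\mathbf1$ is what transfers the local convexity to the whole cone. I must check carefully that scaling $w$ by a common factor keeps both points in $X$ and preserves convex combinations, which it does since all operations are linear about $v\mathbf1$.

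For the neutrality statement, I would run both directions for $\succeq$ and $\preceq$ simultaneously. Neutrality makes $C_{\nu_v}$ both concave and convex, hence linear on $\mathbb R^S$, so $\nu_v(A)=C(\mathbf 1_A)$ is additive. Conversely, additive $\nu_v$ gives a linear $C$, and the threshold characterization yields indifference.
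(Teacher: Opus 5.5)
Your proposal is correct and follows essentially the paper's route. You get (ii)$\Rightarrow$(i) from the threshold relation. You get (i)$\Rightarrow$(ii) by recentering arbitrary acts at $v\mathbf1+\varepsilon\bigl(x-C_{\nu_v}(x)\mathbf1\bigr)$, so that local aversion at the single level $v$ transfers to global concavity through translation equivariance and positive homogeneity. You get (ii)$\Leftrightarrow$(iii) from the standard concavity--supermodularity equivalence.

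The detour through convexity of the zero-superlevel cone is harmless but unnecessary. Your step (b), combined with the exact identity $C_{\nu_v}(z+t\mathbf1)=C_{\nu_v}(z)+t$, already gives $C_{\nu_v}(\lambda w_1+(1-\lambda)w_2)\geq\lambda C_{\nu_v}(w_1)+(1-\lambda)C_{\nu_v}(w_2)$ directly, which is exactly how the paper argues.
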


The proposition gives the behavioral condition a precise local meaning:
the theorem identifies the event weighting $\nu_v$, while convexity determines
when that weighting has the standard ambiguity-averse shape. The condition may
therefore hold at some certainty-equivalent levels but not others.

\subsection{Certainty-equivalent-indexed multiple priors}

For a convex capacity $\nu$, define its core by
\(\operatorname{core}(\nu) := \left\{ p\in\Delta(S): p(A)\geq\nu(A) \text{ for every }A\subseteq S \right\}\).
A convex capacity has a nonempty core, and its Choquet integral is the lower
envelope of expectations over that core; this is the familiar connection
between ambiguity-averse Choquet and multiple-priors representations
\citep{Schmeidler1989,GilboaSchmeidler1989}.

Once the local capacities are convex, it is natural to ask whether the
representation can be written in the familiar multiple-priors form without
losing its utility-level dependence.

\begin{corollary}
\label{cor:implicit-multiple-priors}
Suppose local uncertainty aversion holds at every interior certainty-equivalent level. For
$v\in(0,1)$ define
\(
    \mathcal P_v:=\operatorname{core}(\nu_v).
\)
Then $\mathcal P_v$ is nonempty and compact, and
\(x\succeq v\mathbf1\) if and only if
\(\min_{p\in\mathcal P_v}p\cdot x\geq v\). For every
$x\in X\setminus\{\mathbf0,\mathbf1\}$,
\(V(x)=\min_{p\in\mathcal P_{V(x)}}p\cdot x\).
\end{corollary}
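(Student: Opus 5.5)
The plan is to combine Proposition~\ref{prop:local-ua} with the classical core representation of convex capacities, and then substitute into the threshold equivalence \eqref{eq:main-threshold} and the fixed-point identity from Theorem~\ref{thm:implicit-choquet}.

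\emph{Step 1: each core is nonempty and compact.} Fix $v\in(0,1)$. Local uncertainty aversion at $v$ gives, by Proposition~\ref{prop:local-ua}, that $\nu_v$ is a convex capacity. Compactness of $\mathcal P_v$ is immediate. It is the intersection of the compact simplex $\Delta(S)$ with finitely many closed half-spaces $\{p: p(A)\geq\nu_v(A)\}$.

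\emph{Step 2: the Choquet integral is a lower envelope.} For nonemptiness and the envelope formula I will use the standard greedy construction (Shapley; \citet{Schmeidler1989}). For a permutation $\sigma$, define $p^\sigma_{\sigma(i)}:=\nu_v(A_i^\sigma)-\nu_v(A_{i+1}^\sigma)$, with $A_i^\sigma$ as in \eqref{eq:upper-events}.
\begin{itemize}
\item Monotonicity and normalization of $\nu_v$ give $p^\sigma\in\Delta(S)$.
\item Supermodularity gives $p^\sigma(B)\geq\nu_v(B)$ for every $B\subseteq S$. This follows by induction on the rank ordering: add states one at a time and apply convexity to $B\cap A_i^\sigma$ and $A_{i+1}^\sigma$. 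Hence $p^\sigma\in\mathcal P_v$, so the core is nonempty.
\item Summation by parts in \eqref{eq:choquet-integral} shows $C_{\nu_v}(x)=p^\sigma\cdot x$ for $x\in K_\sigma$.
\item For any $p\in\mathcal P_v$, write $p\cdot x=x_{\sigma(1)}+\sum_{i\geq2}(x_{\sigma(i)}-x_{\sigma(i-1)})p(A_i^\sigma)$. The increments are nonnegative and $p(A_i^\sigma)\geq\nu_v(A_i^\sigma)$, so $p\cdot x\geq C_{\nu_v}(x)$.
\end{itemize}
Together these give $C_{\nu_v}(x)=\min_{p\in\mathcal P_v}p\cdot x$ for all $x\in X$, and the minimum is attained at $p^\sigma$. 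The only genuinely nontrivial step is the supermodularity induction showing $p^\sigma$ lies in the core. Since this is classical, I would cite it rather than reprove it.

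\emph{Step 3: the threshold statement.} Substituting the envelope formula into \eqref{eq:main-threshold} gives $x\succeq v\mathbf1$ iff $C_{\nu_v}(x)\geq v$ iff $\min_{p\in\mathcal P_v}p\cdot x\geq v$, for all $x\in X$ and $v\in(0,1)$.

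\emph{Step 4: the implicit representation.} For $x\notin\{\mathbf0,\mathbf1\}$, Theorem~\ref{thm:implicit-choquet} gives $V(x)\in(0,1)$ and $C_{\nu_{V(x)}}(x)=V(x)$. Applying Step~2 at the level $v=V(x)$ then yields $V(x)=\min_{p\in\mathcal P_{V(x)}}p\cdot x$.

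The one point to check is that the hypothesis covers the level $V(x)$ itself. It does, because local uncertainty aversion is assumed at every interior level.
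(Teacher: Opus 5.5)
Your proposal is correct and follows the paper's proof. Proposition~\ref{prop:local-ua} gives convexity of each $\nu_v$, the classical core representation gives $C_{\nu_v}(x)=\min_{p\in\mathcal P_v}p\cdot x$, and substituting into the threshold relation \eqref{eq:main-threshold} and the fixed-point identity $C_{\nu_{V(x)}}(x)=V(x)$ of Theorem~\ref{thm:implicit-choquet} finishes the argument. The only difference is that you sketch the greedy construction behind the core representation, which also settles nonemptiness and compactness, whereas the paper simply cites the representation.
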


Thus local uncertainty aversion converts the capacity schedule into a
schedule of locally relevant prior sets. Unlike maxmin expected utility, the
set of priors is local to the certainty-equivalent level generated by the act
rather than a fixed global object. This multiple-priors interpretation is
available precisely in the convex-capacity subclass.

\subsection{Revealed-preference identification and elicitation}
\label{sec:revealed}

The local capacity is identified from ordinary preference comparisons rather
than from an exogenous belief report. Proposition
\ref{prop:revealed-capacity} shows that binary event comparisons reveal the
cutoff associated with $\nu_v(A)$. Whenever that cutoff is interior, an
indifference
\((v+a)\mathbf1_A+(v-b)\mathbf1_{A^c} \sim v\mathbf1\)
reveals
\(\nu_v(A)=\frac{b}{a+b}\).
If $\nu_v(A)=0$ or $1$, no interior gain--loss indifference exists.
Although normalization fixes $\nu_v(\varnothing)=0$ and $\nu_v(S)=1$, an
admissible level-dependent schedule may assign capacity $0$ or $1$ to a
nontrivial event at particular certainty-equivalent levels. These boundary
values are instead revealed by one-sided comparisons: $\nu_v(A)=0$ if and
only if increasing utility only on $A$ leaves the act indifferent to
$v\mathbf1$, while $\nu_v(A)=1$ if and only if decreasing utility only on
$A^c$ leaves it indifferent to $v\mathbf1$. The primitive observable is the
binary gain--loss cutoff itself. Conditional on the common utility calibration
and on the behavioral restrictions supporting the capacity representation,
the comparison at each elicited certainty-equivalent level point-identifies
$\nu_v(A)$. The full preference relation identifies the entire profile
$v\mapsto\nu_v(A)$ in principle; a finite experiment traces that profile only
at the elicited levels unless an interpolation or functional-form restriction
is added. If the maintained axioms fail, the cutoffs remain observable but need
not assemble into a globally coherent capacity schedule.

The representation suggests a direct elicitation design. Hold the event
structure fixed, vary the calibrated reference certainty level, and repeatedly
locate the binary gain--loss cutoff; when the cutoff is interior, this amounts
to finding the utility tradeoff that makes the binary ambiguous bet indifferent
to the reference level. The design can hold the event and local gain--loss
structure fixed while shifting the reference certainty-equivalent level, which
helps distinguish level dependence from ordinary stake variation. It does not,
by itself, eliminate every competing channel: stake changes can also alter the
overall attractiveness and dispersion of an act, and reference dependence or
probability sensitivity may interact with the elicitation. Conditional on the
common vNM utility scale and the maintained behavioral model, however, the
design point-identifies $\nu_v(A)$ at each elicited level and traces its
cross-level pattern from binary threshold comparisons. Estimating a continuous
profile from finitely many elicited levels additionally requires interpolation
or functional-form structure.

The quantities $v$, $a$, and $b$ are utility-scale levels rather than raw monetary
amounts. A literal monetary implementation therefore requires calibration of
the common vNM utility index. In an Anscombe--Aumann experiment this can be
done directly: after normalizing a worst and a best lottery to utilities $0$
and $1$, objective mixtures of those lotteries implement any desired utility
level by their mixture probability. The elicitation can therefore vary the
reference certainty level and the local gain--loss increments in utility units
without assuming linear utility for money. The resulting structural
identification is conditional on both this calibration and the behavioral
restrictions of the representation; it is not a model-free elicitation of raw
monetary ambiguity weights.

\subsection{Empirical content and falsifiability}
\label{sec:falsifiability}

Allowing the capacity to vary with the certainty-equivalent level makes the
implicit Choquet model more flexible than fixed-capacity CEU, but it does not
make the model unrestricted. It is useful to distinguish restrictions that can
reject the implicit Choquet representation itself from restrictions that reject
only particular subclasses within it. The characterization theorem already
gives direct model-level tests. In particular, violations of statewise
monotonicity, strict ordering of constants, or the maintained boundary
interiority condition reject the represented class. Most distinctively, if
$x$ and $y$ are comonotonic and indifferent, comonotonic
mixture-betweenness requires every feasible mixture
$\lambda x+(1-\lambda)y$ to remain indifferent to them. A strict preference
for or against such a mixture directly violates the representation.

The capacity structure supplies additional overidentifying restrictions at
each fixed certainty-equivalent level. Binary comparisons can be used to
recover event weights $\nu_v(A)$, but those weights cannot be chosen
independently event by event: together they must form one normalized monotone
capacity. Hence they must satisfy
$\nu_v(\varnothing)=0$, $\nu_v(S)=1$, and
$A\subseteq B\Rightarrow \nu_v(A)\leq\nu_v(B)$. Once these event weights have
been elicited, the same capacity must also classify additional acts not used in the elicitation according to
\(
    x\succeq v\mathbf1
    \Longleftrightarrow
    C_{\nu_v}(x)\geq v.
\)
Failure of these within-level predictions rejects the implicit Choquet model,
rather than merely a special case of it.

A particularly sharp test targets the cross-rank consistency restriction behind the main
representation. Suppose local comparisons in two rank cones $K_\sigma$ and
$K_\tau$ recover the corresponding probability vectors $q^\sigma(v)$ and
$q^\tau(v)$. Whenever an event $A$ is an upper set in both rankings, Lemma
\ref{lem:binary-gluing} requires
\(
    \sum_{s\in A}q_s^\sigma(v)
    =
    \sum_{s\in A}q_s^\tau(v).
\)
If the same event receives different aggregate weights across such rank cones,
the local linear evaluations cannot be glued into a single capacity $\nu_v$;
this rejects precisely the implicit Choquet restriction while remaining
compatible with a more permissive rank-cone-specific implicit model.

The schedule also imposes cross-level coherence. If an act has certainty
equivalent $r=V(x)$, Lemma \ref{lem:single-crossing} requires
\(
    C_{\nu_v}(x)-v>0 \text{ for } v<r;
    \;
    C_{\nu_r}(x)-r=0;
    \;
    C_{\nu_v}(x)-v<0 \text{ for } v>r.
\)
Thus independently elicited capacities at different levels must orient the
same act consistently around its certainty equivalent. Continuity of
$v\mapsto\nu_v$ by itself has relatively little finite-sample bite, since a
finite set of admissible observations can often be joined by a continuous
schedule; the stronger empirical discipline comes from the many within-level
and cross-rank restrictions imposed by a single capacity at each $v$.

By contrast, several natural violations reject only benchmark subclasses.
Finding $\nu_v(A)\neq\nu_w(A)$ for some $v\neq w$ rejects the fixed-capacity
CEU restriction, not the implicit Choquet model. Failure of additivity at a
fixed level rejects the implicit additive subclass, while failure of convexity
rejects the local uncertainty-aversion subclass. The same elicited event
weights can therefore be used both to test the general representation and to
discriminate among its benchmark restrictions. The model is consequently flexible in how ambiguity weighting may change
across certainty-equivalent levels, while placing strong overidentifying
restrictions on choice within any given level.

\subsection{Cross-level changes in ambiguity weighting}
\label{sec:endogenous-ambiguity}

The representation allows two distinct kinds of variation. At a fixed $v$,
the nonadditivity of $\nu_v$, when present, describes the local departure from additive
event weighting. Across levels, changes in $v\mapsto\nu_v$ describe how the local
evaluation of the same event changes with the certainty-equivalent level.
The binary-bet formula in Proposition \ref{prop:revealed-capacity} makes the
second margin behaviorally testable through observable cutoff comparisons.

For a fixed event $A\subseteq S$, the map
\(
    v\longmapsto \nu_v(A)
\)
records how the local evaluation of that event changes across
certainty-equivalent levels. Through the binary-event comparisons in
Proposition \ref{prop:revealed-capacity}, this map has a direct
revealed-preference interpretation: it traces how willingness to accept
exposure to the same event changes as the reference certainty-equivalent level
changes. The index $v$ is endogenous---it is the certainty equivalent of the
act under evaluation---and is not itself an exogenous wealth or consumption
state. A change in a single event weight need not
have an unambiguous global interpretation. In the convex-capacity subclass,
however, a common movement across all events admits a stronger set-of-priors
interpretation.

\begin{definition}
\label{def:core-expansion}
Suppose $\nu_v$ and $\nu_w$ are convex, and let
$\mathcal P_z:=\operatorname{core}(\nu_z)$. For $w>v$, the local core
\emph{expands from $v$ to $w$} if \(   \mathcal P_v\subseteq\mathcal P_w\). It contracts if $\mathcal P_w\subseteq\mathcal P_v$.
\end{definition}

The definition is most useful if core inclusion can be read directly from the
capacity schedule and from local act evaluations. The next proposition
establishes exactly these equivalences.

\begin{proposition}
\label{prop:core-expansion}
Suppose $\nu_v$ and $\nu_w$ are convex. The following are equivalent:
\begin{enumerate}[label=(\roman*),leftmargin=2.5em]
    \item $\nu_w(A)\leq\nu_v(A)$ for every event $A$;
    \item $\mathcal P_v\subseteq\mathcal P_w$, where
    $\mathcal P_z=\operatorname{core}(\nu_z)$;
    \item $C_{\nu_w}(x)\leq C_{\nu_v}(x)$ for every $x\in X$.
\end{enumerate}
\end{proposition}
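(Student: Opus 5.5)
I will prove the cycle (i)$\Rightarrow$(iii)$\Rightarrow$(ii)$\Rightarrow$(i), relying only on two standard facts about convex capacities already invoked before Corollary \ref{cor:implicit-multiple-priors}. First, the core of a convex capacity is nonempty. Second, the Choquet integral is the lower envelope over the core: $C_\nu(x)=\min_{p\in\operatorname{core}(\nu)}p\cdot x$. For the second fact I also use the Shapley--Schmeidler observation that, on each rank cone $K_\sigma$, the marginal vector $p^\sigma_{\sigma(i)}=\nu(A_i^\sigma)-\nu(A_{i+1}^\sigma)$ lies in the core and attains the minimum.

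For (i)$\Rightarrow$(iii), fix $x\in X$ and use the representation $C_{\nu_v}(x)=\min_{p\in\mathcal P_v}p\cdot x$. If $\nu_w\le\nu_v$ eventwise, then every $p\in\mathcal P_v$ satisfies $p(A)\ge\nu_v(A)\ge\nu_w(A)$, so $\mathcal P_v\subseteq\mathcal P_w$. A minimum over a larger set is weakly smaller, hence $C_{\nu_w}(x)\le C_{\nu_v}(x)$. This argument in fact establishes (i)$\Rightarrow$(ii) and (ii)$\Rightarrow$(iii) simultaneously, since the core inclusion is exactly what the envelope step uses.

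For (iii)$\Rightarrow$(i), evaluate (iii) at indicator acts $x=\1_A\in X$. The Choquet formula \eqref{eq:choquet-integral} gives $C_\nu(\1_A)=\nu(A)$ directly: with the ordering placing $A^c$ first, the only nonzero increment is $1$ at the upper event $A$. So (iii) yields $\nu_w(A)\le\nu_v(A)$ for every $A$, including $\varnothing$ and $S$ trivially.

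It remains to close (ii)$\Rightarrow$(i), or equivalently (ii)$\Rightarrow$(iii), since (iii)$\Rightarrow$(i) is already shown. Given $\mathcal P_v\subseteq\mathcal P_w$, the envelope formula gives $C_{\nu_w}(x)=\min_{\mathcal P_w}p\cdot x\le\min_{\mathcal P_v}p\cdot x=C_{\nu_v}(x)$. Alternatively, one can argue directly that $\nu(A)=\min_{p\in\operatorname{core}(\nu)}p(A)$ for convex $\nu$: attainment comes from the marginal vector of any ordering listing $A$ last, i.e.\ as an upper event, and then the core inclusion gives $\nu_w(A)\le\nu_v(A)$. The only potentially delicate step is this exactness of convex capacities, namely that the minimum over the core equals $\nu(A)$. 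That is the standard Shapley result, and I take it as given via the Schmeidler envelope already cited. Convexity is used only there, which explains why the equivalence with (ii) needs the convexity hypothesis while (i)$\Leftrightarrow$(iii) does not.
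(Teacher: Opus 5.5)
Your proof is correct and follows essentially the same route as the paper. Both arguments get the core inclusion by relaxing the lower-probability constraints (using exactness of convex capacities for the converse), use the lower-envelope formula for (ii)$\Rightarrow$(iii), and use indicator acts $\1_A$ for (iii)$\Rightarrow$(i); the cycle you actually prove is (i)$\Rightarrow$(ii)$\Rightarrow$(iii)$\Rightarrow$(i) rather than the one you announce, which is harmless since it is still complete.
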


The proposition therefore gives three equivalent descriptions of the same
cross-level movement: event weights fall, the core expands, and every local
Choquet evaluation weakly falls.

The ordering in Proposition \ref{prop:core-expansion} is intrinsically
nonadditive. If both capacities are additive, say $\nu_z(A)=p_z(A)$, and
$p_w(A)\leq p_v(A)$ for every event, then applying the inequality to $A^c$
gives the reverse inequality for $A$. Hence $p_w=p_v$. A nontrivial uniform
expansion or contraction of the local core therefore cannot be mimicked merely
by changing an additive prior.

This distinction helps clarify the interpretation of utility-level dependence.
The model treats $\nu_v$ as a reduced-form ambiguity weighting and does not, by
itself, separate perceived ambiguity from attitude toward it. In the convex
subclass, monotone core expansion nevertheless has a transparent structural
meaning: the collection of locally relevant probability models becomes larger.
Section~\ref{subsec:related-attitudes} discusses the corresponding
interpretive literature.

\subsection{Translation invariance and background shifts}
\label{sec:background}

A central implication of utility-level-dependent ambiguity is that adding the
same certain amount to every state can change how the remaining uncertainty is
evaluated. Fixed-capacity Choquet expected utility rules this out because one
capacity applies at every level. The next result shows that this property
exactly characterizes the fixed-capacity benchmark within the present model.

\begin{definition}
\label{def:translation-invariance}
The certainty equivalent is \emph{translation invariant} if, whenever
$x,x+c\mathbf1\in X$, \(  V(x+c\mathbf1)=V(x)+c\).

\end{definition}

\begin{proposition}
\label{prop:translation-fixed}
Within the class characterized by Theorem \ref{thm:implicit-choquet}, the
following are equivalent:
\begin{enumerate}[label=(\roman*),leftmargin=2.5em]
    \item $V$ satisfies certainty translation invariance;
    \item $\nu_v=\nu_w$ for all interior $v,w$;
    \item there exists a fixed capacity $\nu$ such that
    $V(x)=C_\nu(x)$ for every $x\in X$.
\end{enumerate}
\end{proposition}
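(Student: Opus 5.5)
I would prove the cycle (ii)$\Rightarrow$(iii)$\Rightarrow$(i)$\Rightarrow$(ii), relying on the uniqueness and threshold parts of Theorem \ref{thm:implicit-choquet}, namely \eqref{eq:main-threshold}.

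\emph{(ii)$\Rightarrow$(iii).} Suppose $\nu_v=\nu$ for every $v$. Every $x\notin\{\mathbf0,\mathbf1\}$ then satisfies $V(x)=C_{\nu_{V(x)}}(x)=C_\nu(x)$. At the endpoints, normalization of $\nu$ gives $C_\nu(\mathbf0)=0$ and $C_\nu(\mathbf1)=1$, so $V=C_\nu$ on all of $X$.

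\emph{(iii)$\Rightarrow$(i).} This uses the extended Choquet identity $C_\nu(z+c\mathbf1)=C_\nu(z)+c$ recorded in Section \ref{sec:framework}. It applies whenever $x,x+c\mathbf1\in X$.

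\emph{(i)$\Rightarrow$(ii).} This is the substantive direction. Fix interior levels $v<w$ and set $c:=w-v$. I would show that $\nu_w(A)=\nu_v(A)$ for every nontrivial event $A$ by transporting the binary threshold comparisons from level $v$ to level $w$. Take the binary act $x=(v+a)\mathbf1_A+(v-b)\mathbf1_{A^c}$, with $a,b\ge0$ small enough that both $x$ and $x+c\mathbf1$ lie in $X$. Its shift is $x+c\mathbf1=(w+a)\mathbf1_A+(w-b)\mathbf1_{A^c}$. By \eqref{eq:main-threshold}, $x\succeq v\mathbf1$ iff $V(x)\ge v$. By translation invariance this holds iff $V(x+c\mathbf1)\ge w$, which in turn holds iff $x+c\mathbf1\succeq w\mathbf1$.

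On the other hand, \eqref{eq:main-threshold} at level $v$ gives $x\succeq v\mathbf1$ iff $C_{\nu_v}(x)\ge v$. Since $A$ is the upper event, this reads $v-b+(a+b)\nu_v(A)\ge v$, that is, $a\,\nu_v(A)\ge b\,(1-\nu_v(A))$. The analogous condition at $w$ involves $\nu_w(A)$. Hence for all small $a,b\ge0$,
\[
a\,\nu_v(A)\ge b\,(1-\nu_v(A)) \iff a\,\nu_w(A)\ge b\,(1-\nu_w(A)).
\]
The set of ratios $b/(a+b)\in[0,1]$ satisfying the first inequality is exactly $[0,\nu_v(A)]$, and for the second it is $[0,\nu_w(A)]$. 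These sets coincide, so $\nu_v(A)=\nu_w(A)$. The boundary cases $\nu_v(A)\in\{0,1\}$ are covered automatically, because the ratio argument allows $a=0$ or $b=0$; this is exactly the one-sided comparison mentioned in Section \ref{sec:revealed}. The result can equivalently be read through Proposition \ref{prop:revealed-capacity}, since binary cutoffs identify $\nu_v(A)$.

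The main point to check is feasibility. Both levels $v$ and $w$ lie strictly inside $(0,1)$, so choosing $a,b\le\min\{v,\,1-w\}$, positive, keeps $x$ and $x+c\mathbf1$ in $X$. Only small increments are needed, because the cutoff set is a cone in $(a,b)$ and is determined by arbitrarily small increments. With this, every pair of interior levels is handled directly and no chaining argument is required. Finally, all capacities agree on $\varnothing$ and $S$ by normalization, so $\nu_v=\nu_w$ as set functions, which completes the cycle.
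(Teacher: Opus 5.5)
Your proof is correct and follows essentially the same route as the paper: binary bets on $A$ around level $v$ are transported to level $w$ by certainty translation invariance and then read off through the level-$w$ threshold relation, with the trivial implications handled exactly as you do. The only difference is cosmetic. The paper uses a single calibrated indifference, with $a=\varepsilon(1-\nu_v(A))$ and $b=\varepsilon\nu_v(A)$, and reads $\nu_w(A)=b/(a+b)$ directly. You instead compare the whole sets of cutoff ratios, which is the same identification argument as in Proposition \ref{prop:revealed-capacity}.
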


This equivalence makes sensitivity to common certainty translations
diagnostic: any nonconstant capacity schedule necessarily violates certainty
translation invariance. The result supplies a disciplined connection to
background shifts, but the index $v$ remains the act's endogenous certainty
equivalent rather than an independently specified wealth variable.

The result also connects to work on how ambiguity attitudes vary with the
decision maker's position and to models that impose stronger certainty-shift
invariance. Those comparisons---including wealth effects, constant absolute
uncertainty aversion, recursive translation invariance, and cognitive capacity
choice---are developed in Section \ref{sec:related}.

\subsection{Boundary and interpretation}
\label{sec:boundary-interpretation}

\paragraph{Endpoint identification.}
Axiom \ref{ax:boundary-interiority} permits a full-domain characterization
without requiring capacities at certainty-equivalent levels $0$ and $1$. Every
$x\in X\setminus\{\mathbf0,\mathbf1\}$ has an interior certainty equivalent,
so its value is determined by the identified schedule $\{\nu_v\}_{v\in(0,1)}$.
The endpoint acts are represented directly by
$V(\mathbf0)=0$ and $V(\mathbf1)=1$. Thus the theorem is a full-domain characterization even though no endpoint
capacities $\nu_0$ and $\nu_1$ are required by the representation. Strong statewise monotonicity would imply the
same boundary-interiority property, but the maintained axiom imposes only the
weaker nondegeneracy condition needed for the representation. In particular,
it requires each nontrivial boundary act to have an interior certainty
equivalent without requiring every coordinatewise improvement to be strictly
preferred. In the SEU benchmark this is exactly full support; in the fixed-capacity
CEU benchmark it rules out a nontrivial event receiving capacity zero or one.

\paragraph{State-dependent affine utilities.}
If an act is reduced using genuinely different state-dependent affine indices
$u_s$, then comparisons such as $u_s(f(s))>u_t(f(t))$ and hence the rank cones
depend on the relative cardinal normalization of $u_s$ and $u_t$. The theorem
therefore requires either a common affine utility over lotteries or an
independently justified interstate cardinal scale. Once such a scale is fixed,
the same proof applies to $x_s=u_s(f(s))$ provided the induced utility domain
retains the full-domain richness and mixture structure used here, including common
constant profiles and the required rank cones. Without this additional domain richness,
the present theorem does not automatically extend to arbitrary state-dependent
utility specifications. Without interstate comparability, a rank-dependent
capacity interpretation is not invariant to admissible state-by-state
rescalings.

\paragraph{Affine reindexing.}
The normalization to $[0,1]$ fixes the common consequence scale once the worst
and best benchmark lotteries are chosen. Under an alternative positive affine
labeling $v'=a+bv$, $b>0$, the normalized range $[0,1]$ is mapped to
$[a,a+b]$, and on that transformed range the same preference reindexes the
capacity schedule as $\nu'_{v'}=\nu_{(v'-a)/b}$. Renormalizing the
transformed scale back to $[0,1]$ restores the original indexing convention.
Thus constancy of the schedule, event-wise orderings, and convexity are
invariant to positive affine relabeling, whereas numerical slopes with respect
to the level index are meaningful only relative to the chosen calibration.

\paragraph{Capacity interpretation.}
A capacity should not automatically be interpreted as a literal
subjective probability. At the level of the representation theorem, $\nu_v$
is a local event-weighting object, possibly nonadditive, identified by preferences. The
terminology of ambiguity is strongest when additional restrictions give it the
standard ambiguity-averse interpretation. In particular, when $\nu_v$ is
convex, the core representation provides the stronger structural
interpretation
\(
    C_{\nu_v}(x)=\min_{p\in\mathcal P_v}p\cdot x.
\)

\section{Related literature}
\label{sec:related}

The paper sits at the intersection of two lines of axiomatic decision theory
that have largely developed separately. One allows the local evaluation of an
act to depend on an implicitly determined utility or indifference level; the
other relaxes additivity across states through rank-dependent or Choquet-type
evaluation. The present representation combines both margins: event weighting may be
nonadditive, and the relevant capacity is indexed by the overall
certainty-equivalent level. Related work on ambiguity attitudes and multiple
priors helps clarify the economic interpretation of this certainty-equivalent-indexed capacity
schedule.

\subsection{Implicit and betweenness representations under uncertainty}
\label{subsec:related-implicit}

The first strand studies implicit representations generated by weakenings of
independence. Under risk, \citet{Dekel1986} shows how betweenness permits the
linear evaluation supporting an indifference surface to vary with the utility
level. \citet{ChewEpstein1989}, together with the correction in
\citet{ChewEpsteinWakker1993}, provide an especially close precedent: their
implicit rank-linear framework unifies betweenness and rank dependence on
objective lotteries, and their implicit rank-dependent expected utility
specialization permits the probability transformation itself to depend on the
utility level. The correction establishes that the auxiliary Theorem A used in
the original proof is false while preserving Theorem 2 with minor
modifications. Thus neither utility-level-dependent rank weighting nor the
combination of implicit and rank-dependent evaluation is claimed as new here.
\citet{Payro2025} develops a general mixture-betweenness representation on
mixture spaces and applies it to uncertainty. In the present finite-state Anscombe--Aumann environment, global
mixture-betweenness yields the additive special case
\(
    V(x)=p(V(x))\cdot x.
\)
The current paper preserves this indifference-level dependence but restricts
the mixture requirement to comonotonic acts, which allows the local state
weights to vary across rank cones and, after imposing cross-rank consistency,
produces a capacity rather than an additive probability.

Several papers on acts provide closely related implicit structures.
\citet{GrantKajiiPolak2000} introduce weak decomposability and obtain an
implicit state-additive representation under uncertainty. Their result is an
important predecessor of the additive branch of the present model: local
evaluations may depend on the value assigned to the act, but event aggregation
remains additive. \citet{GrantRoordaYang2025} develop Expected Balanced
Uncertain Utility, in which acts are ranked implicitly through a balancing
value and a balancing-value-dependent outcome-set utility. Importantly, their
implicit Hurwicz example already allows the degree of ambiguity aversion to
depend on the indifference class. Thus indifference-class dependence of
ambiguity attitudes is explicitly conceded as a precedent rather than claimed
as the novelty here. The additional object identified by the present theorem
is a single \emph{event capacity} \(\nu_v\) at each
certainty-equivalent level, constructed from rank-specific local probability
vectors and behaviorally identified by binary event threshold comparisons.
EBUU is organized around a prior on decomposable events and a balancing-value-
dependent utility of outcome sets, whereas the present representation is
organized around a capacity on subjective events at each certainty-equivalent
level. The paper does not establish, and does not rely on, a general nesting
relation between these two representation classes; such a relation is not
immediate from their different primitive objects.
\citet{RoordaJoosten2026}
ask when a preference ordering on finite-state monetary acts admits an additive
state-dependent utility representation per indifference curve and develop
additive sliding-balance representations. Relative to this strand, the present
paper's main additional margin is therefore the capacity indexed by the
indifference level and its event-wise identification.

\subsection{Choquet, cumulative, and consequence-dependent state weighting}
\label{subsec:related-choquet}

The second strand concerns nonadditive or consequence-sensitive state
weighting. \citet{Schmeidler1989} obtains Choquet expected utility from
comonotonic independence, with one fixed capacity applying to every act.
Because the present paper is finite-state, \citet{ChewKarni1994} is also a
natural benchmark: they axiomatize subjective and Choquet expected utility on
a finite state space through act-independence and comonotonic
act-independence. Section~\ref{sec:special-cases} shows that strengthening the
present model to comonotonic independence likewise eliminates the
certainty-equivalent-level dependence and recovers a fixed capacity.

\citet{ChewWakker1996} weaken the sure-thing principle on comonotonic acts and
characterize cumulative utility, a general rank-dependent form that extends
the Choquet integral and permits event evaluation to depend on consequence
levels. This is close to the present paper on the nonadditivity dimension but
different in the source of dependence: cumulative utility links event weights
to the consequence being evaluated, whereas here all nonendpoint acts on the
same interior indifference surface share one capacity $\nu_v$, indexed by the
overall value $v$. \citet{Hazen1987} is another close antecedent. His subjectively weighted
linear utility model allows assessed subjective probabilities to depend on the
consequences associated with events, while retaining additivity for disjoint
events. The present representation instead allows genuinely nonadditive local
state weighting and disciplines its act dependence through the scalar
certainty-equivalent index.

More recently, \citet{BaillonBleichrodtLiWakker2025} develop Source Theory,
which represents ambiguity through Savage subjective probabilities combined
with source-dependent nonlinear weighting functions. Their central variation
is across sources of uncertainty: the same subjective probability can receive
different decision weight under different sources. This is complementary to
the present construction. Here the source is held fixed and the weighting
object is instead a capacity on subjective events that may vary with the act's
certainty-equivalent level. The two approaches therefore index non-EU event
weighting along different dimensions, and no general nesting relation is
claimed.

The relation between these representation classes is therefore not an
immediate nesting statement. Cumulative utility uses an outcome-dependent
weighting $W(z,A)$, so an event's incremental weight may vary with the
particular consequence level $z$ inside the same act. Here, by contrast, a
single capacity $\nu_{V(x)}$ applies to every step of the Choquet chain for
act $x$, but that capacity may vary with the act's overall
certainty-equivalent level. These are different restrictions on context dependence rather than a claim
that one class is generally more or less general than the other. Cumulative
utility can let the weight on $(z,A)$ vary across consequence levels within a
single act, while the present model requires one capacity to govern all steps
of an act's Choquet chain but can vary that capacity with the act's overall
certainty equivalent. Conversely, two acts can share a particular consequence
level while having different certainty equivalents and hence different local
capacities here. These restrictions point in different directions, so the paper
does not assert a general nesting relation; fixed-capacity Choquet expected
utility is a clear common benchmark.

Taken together, these comparisons place the model between two familiar
benchmarks. The implicit literature permits supporting evaluations to vary
across indifference levels, while the Choquet and cumulative literatures permit
nonadditive or consequence-sensitive state weighting. The distinctive
object here is a single event capacity at each
certainty-equivalent level, with its weights identified from binary event
comparisons. The paper therefore does not claim novelty for implicit
linearization, utility-dependent rank weighting, or the Choquet integral in
isolation.

\subsection{Ambiguity perception and ambiguity attitudes}
\label{subsec:related-attitudes}

The interpretation of a changing capacity also relates to work separating
ambiguity from attitudes toward it. \citet{GhirardatoMarinacci2002} provide a
comparative behavioral foundation for ambiguity and ambiguity aversion, while
\citet{GhirardatoMaccheroniMarinacci2004} derive an unambiguous preference
relation and use it to distinguish revealed ambiguity from ambiguity attitude.
\citet{KlibanoffMarinacciMukerji2005} make the separation explicit through
second-order beliefs and a distinct ambiguity-attitude transformation.
\citet{DentiPomatto2022} provide an Anscombe--Aumann axiomatic foundation for
smooth ambiguity preferences and show that the relevant parameters can be
uniquely recovered from preferences, making the separation between ambiguity
perception and ambiguity attitude operational. \citet{Siniscalchi2009} instead
evaluates an act through a baseline expected
utility plus adjustments reflecting its exposure to distinct sources of
ambiguity and the decision maker's attitudes toward that exposure.

These papers motivate the deliberately cautious interpretation adopted here.
The baseline theorem identifies $\nu_v$ as a local ambiguity \emph{weighting};
it does not by itself determine whether a change in $\nu_v$ reflects a change
in perceived ambiguity, a change in attitude toward a fixed ambiguity source,
or both. The convex-capacity results in Section~\ref{sec:ambiguity-analysis}
provide a stronger interpretation through the certainty-equivalent-indexed local core
$\mathcal P_v$, while further behavioral restrictions would be required for a
full perception--attitude separation.

A closely related comparative-statics literature studies how ambiguity
attitudes change with the decision maker's position. In particular,
\citet{CerreiaVioglioMaccheroniMarinacci2022} characterize wealth-dependent
absolute and relative ambiguity attitudes and distinguish wealth shifts from
utility shifts. Their exercise compares ambiguity attitudes across exogenously
shifted environments; the present representation instead endogenously indexes
the event-weighting object itself by the certainty equivalent and asks when
that object is a Choquet capacity. An earlier Anscombe--Aumann benchmark, \citet{GrantPolak2013}, axiomatizes
mean--dispersion preferences with constant absolute uncertainty aversion,
placing certainty translations directly at the center of ambiguity-attitude
invariance. In a recursive ambiguity framework,
\citet{MarinacciPrincipiStanca2026} show that translation invariance implies
ambiguity attitudes that are constant across welfare levels. Proposition~\ref{prop:translation-fixed}
provides a complementary static characterization by showing that, within the
implicit-Choquet class, the absence of certainty-translation effects is equivalent
to constancy of the entire capacity schedule.

The experimental literature reinforces the usefulness of allowing the local
weighting to move. \citet{BaillonBleichrodt2015} document a fourfold pattern of
ambiguity attitudes across gains, losses, and likelihood ranges, while
\citet{BouchouichaEtAl2017} find systematic stake effects. Such evidence is
broader than the present model's certainty-equivalent-level channel and may also involve
reference dependence or probability sensitivity. It is therefore best viewed
as complementary motivation rather than as evidence uniquely selecting the
present representation.

\subsection{Cognitive ambiguity attitudes}
\label{subsec:payro-takeoka-xia}

A particularly close recent model of endogenous nonadditive ambiguity
attitudes is \citet{PayroTakeokaXia2026}. Their Optimal Ambiguity
Attitude representation
lets the decision maker choose an evaluative capacity against a cognitive cost.
Schematically, letting $\mathcal V$ denote the feasible family of capacities,
\(U^{\mathrm{OAA}}(x) = \max_{\nu\in\mathcal V} \bigl\{C_\nu(x)-\kappa(\nu)\bigr\}\),
so the capacity selected for an act may vary with its payoff configuration.
The present model imposes a different discipline: all acts on the same
level-$v$ indifference surface share the same local capacity $\nu_v$, but the
capacity may change when the certainty-equivalent level changes.

Their key behavioral restrictions include Weak Certainty Independence and
Comonotonic Convexity. The latter permits a mixture of indifferent comonotonic
acts to be weakly worse when re-evaluating the mixture creates cognitive cost;
comonotonic mixture-betweenness in the present paper instead imposes local
neutrality for such mixtures. Weak Certainty Independence provides an even
sharper cross-level distinction. It is useful here as a literature-comparison
axiom rather than as one of the internal benchmark restrictions in
Section~\ref{sec:special-cases}.

Say that preferences satisfy \emph{Weak Certainty Independence} if, for all
$x,y\in X$, constants $c\mathbf1,d\mathbf1$, and $\alpha\in(0,1)$ for which
the mixtures are feasible,
\(\alpha x+(1-\alpha)c\mathbf1 \succeq \alpha y+(1-\alpha)c\mathbf1 \;\Longrightarrow\; \alpha x+(1-\alpha)d\mathbf1 \succeq \alpha y+(1-\alpha)d\mathbf1\).

Within the present representation, Weak Certainty Independence is especially
informative because it tests whether the certainty-equivalent-level index can affect the
capacity at all.

\begin{corollary}
\label{cor:wci-fixed}
Suppose the axioms of Theorem~\ref{thm:implicit-choquet} hold and, in addition,
preferences satisfy Weak Certainty Independence. Then there exists a single
normalized monotone capacity $\nu$, satisfying
$0<\nu(A)<1$ for every nontrivial event $A$, such that \( V(x)=C_\nu(x)\), so $\nu_v=\nu$ for every interior certainty-equivalent level $v$.
\end{corollary}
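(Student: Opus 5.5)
The plan is to show that Weak Certainty Independence (WCI) forces certainty translation invariance of $V$, and then invoke Proposition~\ref{prop:translation-fixed}. The fixed capacity then satisfies $0<\nu(A)<1$ on nontrivial events by the boundary implication already recorded in Proposition~\ref{prop:fixed-capacity}, since Axiom~\ref{ax:boundary-interiority} is maintained.

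The key step is to compare an act with the constant at its certainty equivalent. Fix $x\in X$ with $r=V(x)$, so $x\sim r\mathbf1$, and fix $c$ with $x+c\mathbf1\in X$. Write $x+c\mathbf1$ as a mixture of the form $\alpha x'+(1-\alpha)d\mathbf1$.

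1. Choose $\alpha\in(0,1)$ and constants $e,d$ such that $x=\alpha x'+(1-\alpha)e\mathbf1$ and $x+c\mathbf1=\alpha x'+(1-\alpha)d\mathbf1$ with $x'\in X$. Taking $d-e=c/(1-\alpha)$ and $x'=(x-(1-\alpha)e\mathbf1)/\alpha$ works.
2. Feasibility requires $\alpha$ close to $1$ and $e,d\in[0,1]$. For a small common translation $c$, pick $e$ in the interior so that $d=e+c/(1-\alpha)$ also lies in $[0,1]$ and $x'\in X$.
3. Use the same $\alpha,e,d$ with $y'=(r\mathbf1-(1-\alpha)e\mathbf1)/\alpha$, a constant act, so that $\alpha y'+(1-\alpha)e\mathbf1=r\mathbf1$ and $\alpha y'+(1-\alpha)d\mathbf1=(r+c)\mathbf1$.
4. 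Apply WCI in both directions to the two weak comparisons constituting $x\sim r\mathbf1$. This yields $x+c\mathbf1\sim (r+c)\mathbf1$, i.e.\ $V(x+c\mathbf1)=V(x)+c$ whenever both sides are feasible.

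The main obstacle is feasibility of these mixtures for large $c$, or when $x$ or $r\mathbf1$ lies near the boundary of $X$: $\alpha$ must be close enough to $1$ that $x'$ and $y'$ stay in $[0,1]^S$, yet $1-\alpha$ must be large enough to absorb the shift $c$.

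To handle this, I would first prove local translation invariance. For each $x$ in the interior of $X$ there is $\delta>0$ with $V(x+c\mathbf1)=V(x)+c$ for $|c|<\delta$. I would then chain small shifts along the segment $t\mapsto x+t\mathbf1$, which remains in $X$ throughout since its endpoints do. A compactness argument on that segment gives uniformly positive step sizes, and continuity of $V$ (Theorem~\ref{thm:implicit-choquet}) extends the identity to segment endpoints on $\partial X$.

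An alternative that avoids chaining uses the threshold form~\eqref{eq:main-threshold} directly. Local invariance at interior acts already yields $C_{\nu_v}(x+c\mathbf1)=C_{\nu_v}(x)+c$ and $V(x+c\mathbf1)=V(x)+c$ on a neighbourhood of each interior level. That suffices to show $v\mapsto\nu_v$ is locally constant, hence constant on the connected set $(0,1)$ by continuity of the admissible family. Either route ends with clause (ii) of Proposition~\ref{prop:translation-fixed}, hence $V=C_\nu$.
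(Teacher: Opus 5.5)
Your proposal is correct and follows the paper's overall strategy: derive certainty translation invariance from WCI, then invoke Proposition~\ref{prop:translation-fixed}. It takes a longer route to translation invariance, because the feasibility obstacle you identify does not arise. The paper uses exactly your step-1 template, but with the extreme constants as the common components. For $c>0$ it takes $e=0$, $d=1$, $\alpha=1-c$, so $x'=x/(1-c)$ and $y'=\frac{r}{1-c}\mathbf1$. For $c=-s<0$ it takes $e=1$, $d=0$, $\alpha=1-s$, so $x'=(x-s\mathbf1)/(1-s)$ and $y'=\frac{r-s}{1-s}\mathbf1$. Then $x'\in X$ is equivalent to $\max_s x_s\le 1-c$ (respectively $\min_s x_s\ge s$), which is precisely feasibility of $x+c\mathbf1$. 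Also $y'\in X$ follows from $\min_s x_s\le r\le\max_s x_s$. So one two-sided application of WCI gives $V(x+c\mathbf1)=V(x)+c$ for every feasible translation. Your requirement that $\alpha$ be close to $1$ is an artifact of placing $e$ in the interior.

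Both of your routes nonetheless work. In the chaining route, for $c\neq0$ the points $x+t\mathbf1$ with $t$ strictly between $0$ and $c$ all lie in $(0,1)^S$. Hence $t\mapsto V(x+t\mathbf1)-t$ is locally constant on an open interval, so it is constant there. Continuity of $V$ (Lemma~\ref{lem:certainty-equivalent}) then extends this to the endpoints. A compactness argument only works on closed subsegments of that open segment, since the local radius shrinks at boundary endpoints.

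Your second route is the more genuinely different alternative. Apply small-shift invariance to the interior binary acts used in the proof of Proposition~\ref{prop:translation-fixed}. This shows each $v\mapsto\nu_v(A)$ is locally constant, hence constant on the connected set $(0,1)$; continuity of the family is not even needed. This route never touches boundary acts or large shifts, so it would survive where only small common translations are available. The paper's route instead buys a one-step derivation of full certainty translation invariance, letting it cite Proposition~\ref{prop:translation-fixed} verbatim.

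Finally, for $0<\nu(A)<1$, cite the direct argument rather than Proposition~\ref{prop:fixed-capacity}, whose hypothesis is comonotonic independence. The indicator $\mathbf1_A$ is a nonconstant boundary act with $V(\mathbf1_A)=C_\nu(\mathbf1_A)=\nu(A)$, so Axiom~\ref{ax:boundary-interiority} gives the bounds.
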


Thus Weak Certainty Independence shuts down the paper's distinctive
cross-level margin and returns the fixed-capacity Choquet benchmark.

This restriction is also empirically nontrivial. \citet{TrautmannWakker2018}
report violations of Weak Certainty Independence in a simple
Anscombe--Aumann experiment, alongside reference dependence and departures
from universal ambiguity aversion. More directly,
\citet{KonigKerstingKopsTrautmann2023} test certainty independence and Weak
Certainty Independence and find systematic violations of restrictions retained
by prominent ambiguity models. These findings do not constitute a direct test
of utility-level dependence, but they support relaxing cross-level invariance
conditions that, within the present representation, force $\nu_v$ to be
constant.

The same CEU implication appears in \citet{PayroTakeokaXia2026}, under their Basic Axioms, Weak Certainty
Independence, and Comonotonic Neutrality. For the present model, however, the
corollary has a short direct proof. Weak Certainty Independence preserves an
indifference when the common certainty component of the two mixtures is
changed. Applied to $x\sim V(x)\mathbf1$, this implies certainty translation
invariance of $V$ for every feasible common translation. Proposition
\ref{prop:translation-fixed} then forces the capacity schedule to be constant.
The appendix gives the argument explicitly. This also clarifies the economic
difference: cognitive OAA allows the evaluative capacity to vary with the act
while preserving the relevant certainty-independence restriction, whereas the present model allows the act's certainty-equivalent level to index
the local capacity. This level should not be identified with an exogenous
wealth state, although common translations provide a behavioral comparative
static connecting the two ideas.

When the capacities in the cognitive model are convex, each Choquet term can
also be expressed as a minimum over its core. Their criterion then has the
schematic form
\(U^{\mathrm{OAA}}(x) = \max_{\mathcal P} \left\{ \min_{p\in\mathcal P}p\cdot x -\kappa(\mathcal P) \right\}\),
whereas the convex subclass of the present model gives
\(
    V(x)
    =
    \min_{p\in\mathcal P_{V(x)}}p\cdot x.
\)
The first optimizes over ambiguity evaluations for a given act; the second
reveals a schedule of locally relevant ambiguity sets across indifference
levels.

\subsection{Multiple priors and general uncertainty aversion}
\label{subsec:related-multiple-priors}

The convex-capacity results also connect the paper to the broader ambiguity-aversion literature. \citet{GilboaSchmeidler1989} evaluate every act against a
fixed set of priors, while \citet{MaccheroniMarinacciRustichini2006} replace
the fixed set by a global penalty over priors. \citet{ChateauneufFaro2009}
use confidence functions over probabilities to represent ambiguity-sensitive
preferences, and \citet{CerreiaVioglioEtAl2011} characterize the broad class of
complete, transitive, monotone, and convex uncertainty-averse preferences.
Relative to these models, the distinctive margin is that the locally relevant
ambiguity set may vary with the certainty-equivalent level. In the convex
subclass, the cores $\mathcal P_v=\operatorname{core}(\nu_v)$ therefore make
it possible to study expansion, contraction, or other cross-level changes in
the relevant set of probability models; Section~\ref{sec:ambiguity-analysis}
develops these comparative implications.

\section{Conclusion}
\label{sec:conclusion}

This paper characterizes a disciplined form of certainty-equivalent-dependent
ambiguity evaluation. After the standard Anscombe--Aumann calibration of
consequence utility, the behavioral axioms yield a unique continuous family of
capacities $\{\nu_v\}_{v\in(0,1)}$. Every nonendpoint act is evaluated using
the capacity associated with its own certainty-equivalent level, while all
acts on the same interior indifference surface share the same event weighting.
The representation therefore lies between a globally fixed capacity and an
arbitrary act-dependent weighting rule.

Three implications give the representation its economic content. First, local
uncertainty aversion is equivalent to convexity of $\nu_v$ and yields a
certainty-equivalent-indexed multiple-priors interpretation. Second, binary
event threshold comparisons point-identify $\nu_v(A)$ at each elicited level,
so choice can reveal how the evaluation of the same event changes across
certainty-equivalent levels; in the convex subclass, uniform movements of the
capacity correspond to expansion or contraction of the locally relevant core.
Third, certainty translation invariance holds if and only if the capacity
schedule is constant. Fixed-capacity Choquet expected utility is therefore
exactly the benchmark in which feasible common certainty shifts leave the
uncertainty-evaluation object unchanged. Throughout, the capacity schedule is
a reduced-form ambiguity weighting: its variation may reflect changing
ambiguity perception, changing ambiguity attitude, or both.

Several extensions are natural, including richer comparative orders on the
capacity schedule, dynamic updating, behavioral separation of ambiguity
perception from ambiguity attitude, richer state spaces, and empirical designs
for tracing event-wise profiles. The central message is that systematic
cross-level variation in ambiguity weighting need not be left unrestricted:
it can be organized by certainty-equivalent indifference surfaces, recovered
from choice, and linked to sharp behavioral restrictions on familiar
benchmarks.
\appendix

\section{Auxiliary results for the representation theorem}
\label{app:auxiliary}

The results in this appendix provide the local-linear, gluing, continuity,
and reverse-implication steps used in Theorem \ref{thm:implicit-choquet}.

\subsection{Certainty equivalents and local implicit linearity}
\label{sec:local}

The first step fixes the utility scale. Continuity and monotonicity place every
act between its worst and best state utility, while strict ordering of constant
acts makes the resulting certainty equivalent unique.

\begin{lemma}
\label{lem:certainty-equivalent}
Under Axioms \ref{ax:weak-order-continuity}--\ref{ax:strict-constants},
for every \(x\in X\) there exists a unique \(V(x)\in I\) such that \(x\sim V(x)\1\). Moreover, \(V:X\to I\) is continuous and represents \(\succeq\):
\begin{equation}
    x\succeq y
    \quad\text{if and only if}\quad
    V(x)\geq V(y).
    \label{eq:ce-representation}
\end{equation}
If Axiom \ref{ax:boundary-interiority} also holds, then
\(V(x)\in(0,1)\) for every
\(x\in X\setminus\{\mathbf0,\mathbf1\}\).
\end{lemma}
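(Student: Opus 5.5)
To prove Lemma \ref{lem:certainty-equivalent}, I would use the standard certainty-equivalent construction and proceed in four steps.

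\textbf{Existence.} Fix $x\in X$ and set $m:=\min_s x_s$ and $M:=\max_s x_s$. Statewise monotonicity (Axiom \ref{ax:monotonicity}) gives $M\mathbf1\succeq x\succeq m\mathbf1$. Consider the path $c\mapsto c\mathbf1$ on $[m,M]$ and the sets
\[
U:=\{c\in[m,M]: c\mathbf1\succeq x\},\qquad L:=\{c\in[m,M]: x\succeq c\mathbf1\}.
\]
Both are closed, since they are preimages of the closed contour sets of Axiom \ref{ax:weak-order-continuity} under a continuous map. They are nonempty, because $M\in U$ and $m\in L$. By completeness their union is the whole interval $[m,M]$. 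Since $[m,M]$ is connected, $U\cap L\neq\varnothing$, so some $c$ satisfies $x\sim c\mathbf1$.

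\textbf{Uniqueness.} Suppose $x\sim c\mathbf1$ and $x\sim c'\mathbf1$ with $c>c'$. Transitivity gives $c\mathbf1\sim c'\mathbf1$, contradicting Axiom \ref{ax:strict-constants}. Hence $V(x)$ is well defined.

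\textbf{Representation.} We have $x\succeq y$ iff $V(x)\mathbf1\succeq V(y)\mathbf1$ by transitivity. By Axiom \ref{ax:strict-constants} together with completeness, this holds iff $V(x)\geq V(y)$: if $V(x)<V(y)$ then $V(y)\mathbf1\succ V(x)\mathbf1$, and the converse direction follows from monotonicity.

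\textbf{Continuity.} This is the step needing care, though it is still routine. Let $x^k\to x$ and suppose $V(x^k)$ does not converge to $V(x)$. Since $[0,1]$ is compact, pass to a subsequence with $V(x^k)\to c\neq V(x)$, say $c>V(x)$. Pick $d$ strictly between $V(x)$ and $c$. Then eventually $V(x^k)>d$, so $x^k\succ d\mathbf1$ and in particular $x^k\succeq d\mathbf1$. The upper contour set of $d\mathbf1$ is closed, so $x\succeq d\mathbf1$. This gives $V(x)\geq d>V(x)$, a contradiction. The case $c<V(x)$ is symmetric and uses the lower contour set. Alternatively, continuity follows directly: for each $d$, the set $\{V\geq d\}$ is the upper contour set of $d\mathbf1$ and $\{V\leq d\}$ is the lower contour set of $d\mathbf1$, so all such sets are closed.

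\textbf{Interiority under Axiom \ref{ax:boundary-interiority}.} Let $x\neq\mathbf0,\mathbf1$. If $x\in\partial X$, the axiom gives $x\sim v\mathbf1$ with $v\in(0,1)$, and uniqueness forces $V(x)=v$. If $x$ lies in the interior, then $0<m\leq V(x)\leq M<1$: the bounds $m\leq V(x)\leq M$ come from the representation together with $M\mathbf1\succeq x\succeq m\mathbf1$, and interiority of $x$ gives $0<m$ and $M<1$.

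No step is a real obstacle. The only points needing attention are the connectedness argument for existence and the sequence argument for continuity.
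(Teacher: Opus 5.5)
Your proof is correct and follows the paper's argument essentially step for step: existence from the bracketing $(\min_s x_s)\mathbf1\preceq x\preceq(\max_s x_s)\mathbf1$ plus connectedness of the constant segment, then uniqueness and the representation from strict ordering of constants, then a subsequence argument for continuity, and finally the same interior/boundary case split for the interiority claim. The only difference is cosmetic and lies in the continuity step. You pass closedness of the contour sets of $d\mathbf1$ to the limit act, or alternatively note that $\{V\geq d\}$ and $\{V\leq d\}$ are closed, whereas the paper uses openness of the complementary strict contour sets along the sequence; the two arguments are equivalent.
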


\begin{proof}
Statewise monotonicity gives
\(
    (\min_s x_s)\1
    \preceq x\preceq
    (\max_s x_s)\1.
\)
Continuity and connectedness of the constant-act segment therefore yield a
constant \(c\) between \(\min_s x_s\) and \(\max_s x_s\) such that
\(x\sim c\1\). Axiom \ref{ax:strict-constants} makes \(c\) unique;
define it as \(V(x)\).
Transitivity and the strict ordering of constants then give
\eqref{eq:ce-representation}.

For continuity, let \(x^k\to x\). Compactness of \(I\) implies that any
subsequence of \(V(x^k)\) has a convergent subsubsequence, say
\(V(x^{k_j})\to c\). Write \(r:=V(x)\). We show directly from the closed
contour sets that \(c=r\).

Suppose first that \(c<r\), and choose \(d\) with \(c<d<r\). Since
\(
    x\sim r\mathbf1\succ d\mathbf1,
\)
we have \(x\succ d\mathbf1\). The lower contour set
\(
    \{y\in X:d\mathbf1\succeq y\}
\)
is closed by Axiom \ref{ax:weak-order-continuity}; by completeness, its
complement is exactly the strict upper set
\(
    \{y\in X:y\succ d\mathbf1\}.
\)
Hence this strict upper set is open. Since it contains \(x\), for all
sufficiently large \(j\),
\(
    x^{k_j}\succ d\mathbf1.
\)
The certainty-equivalent representation then gives
\(
    V(x^{k_j})>d,
\)
contradicting \(V(x^{k_j})\to c<d\).

If instead \(c>r\), choose \(d\) with \(r<d<c\). Then
\(
    d\mathbf1\succ r\mathbf1\sim x.
\)
The upper contour set
\(
    \{y\in X:y\succeq d\mathbf1\}
\)
is closed, so by completeness its complement
\(
    \{y\in X:d\mathbf1\succ y\}
\)
is open and contains \(x\). Therefore, for all sufficiently large \(j\),
\(
    d\mathbf1\succ x^{k_j},
\)
which implies \(V(x^{k_j})<d\), contradicting
\(V(x^{k_j})\to c>d\). Thus every convergent subsubsequence has limit
\(c=r=V(x)\), and consequently
\(
    V(x^k)\to V(x).
\)

If $x\in(0,1)^S$, statewise monotonicity already gives
$V(x)\in(0,1)$. If $x$ is a nonconstant boundary act, Axiom
\ref{ax:boundary-interiority} gives an interior constant indifferent to $x$,
and uniqueness of the certainty equivalent identifies its value with $V(x)$.
\end{proof}

This lemma fixes the cardinal level index used throughout the construction:
all subsequent local representations can be indexed by the unique certainty
equivalent \(V(x)\).

We now apply mixture-betweenness locally. For each rank cone, the restriction
of the preference satisfies ordinary mixture-betweenness, so \citet{Payro2025}'s general
mixture-space theorem provides an implicit mixture-linear functional at every
certainty-equivalent level. The next lemma specializes that result to the present
finite-dimensional domain.

\begin{lemma}
\label{lem:local-linear}
Suppose Axioms \ref{ax:weak-order-continuity}--
\ref{ax:boundary-interiority} hold. Fix a
permutation \(\sigma\). There exists a function
\(
    q^\sigma:(0,1)\to\R^S
\)
continuous in \(v\), with 
\begin{equation}
    \sum_{s\in S}q_s^\sigma(v)=1,
    \label{eq:local-sum-one}
\end{equation}
such that, for every \(x\in K_\sigma\) and every \(v\in(0,1)\),
\begin{equation}
    V(x)=v
    \quad\text{if and only if}\quad
    q^\sigma(v)\cdot x=v.
    \label{eq:local-root}
\end{equation}
\end{lemma}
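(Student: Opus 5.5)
The plan is to work entirely inside the fixed rank cone $K_\sigma$, which is a convex compact subset of $\R^S$ containing the whole constant diagonal $\{c\mathbf1:c\in I\}$ as well as $\mathbf0$ and $\mathbf1$. Every pair of acts in $K_\sigma$ is comonotonic, so Axiom \ref{ax:comonotonic-betweenness} is ordinary mixture-betweenness on this cone. Combined with the continuous representation $V$ of Lemma \ref{lem:certainty-equivalent}, this gives the following. For each $v\in(0,1)$, the level set $L_v:=\{x\in K_\sigma:V(x)=v\}$, the upper set $U_v:=\{x\in K_\sigma:V(x)\geq v\}$ and the lower set $D_v:=\{x\in K_\sigma:V(x)\leq v\}$ are all convex. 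Part (ii) of the axiom gives convexity of indifference classes; part (i) together with transitivity gives convexity of the weak upper and lower sets. Continuity of $V$ makes all three sets closed. By Lemma \ref{lem:certainty-equivalent} with Axiom \ref{ax:boundary-interiority}, $\mathbf0$ and $\mathbf1$ are the unique strict worst and best elements, so $U_v$ and $D_v$ are nonempty closed convex sets meeting exactly in $L_v$.

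The first real step is to show that $L_v$ is the intersection of $K_\sigma$ with a hyperplane. I would separate the disjoint convex sets $\{V>v\}$ and $\{V<v\}$ inside the affine hull of $K_\sigma$, which is all of $\R^S$ since $K_\sigma$ has nonempty interior. Both sets are relatively open in $K_\sigma$ by continuity. This yields a nonzero $p$ and a scalar $\alpha$ with $p\cdot x\geq\alpha$ on $U_v$ and $p\cdot x\leq\alpha$ on $D_v$, so $L_v\subseteq\{p\cdot x=\alpha\}$.

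For the converse inclusion I would argue by contradiction. Take $x\in K_\sigma$ with $p\cdot x=\alpha$ but, say, $V(x)>v$. Mixing $x$ with a point $y$ of $\{V<v\}$ produces points on the segment. By betweenness and continuity, that segment crosses $L_v$ exactly once and strictly, so it has one point with $p\cdot z=\alpha$ inside $\{V<v\}$ or $\{V>v\}$. Using interior points of $K_\sigma$ near $x$ on both sides of the hyperplane then contradicts the separation. This is essentially Dekel's and Payro's argument specialized to a finite-dimensional convex domain. I expect this step, which rules out that the hyperplane touches $K_\sigma$ in points off $L_v$ (especially at boundary faces of the cone), to be the main obstacle. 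If the direct argument becomes delicate, I would fall back on invoking \citet{Payro2025}'s mixture-space theorem on $K_\sigma$, which is legitimate because $K_\sigma$ is a mixture space with best and worst elements.

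Since $v\mathbf1\in L_v$, we have $\alpha=p\cdot v\mathbf1=v\sum_s p_s$. The sum $\sum_sp_s$ is nonzero: otherwise every constant $c\mathbf1$ would satisfy $p\cdot c\mathbf1=\alpha=0$ and hence lie in $L_v$, contradicting Axiom \ref{ax:strict-constants}. Define $q^\sigma(v):=p/\sum_sp_s$. This gives \eqref{eq:local-sum-one} and \eqref{eq:local-root}, and the normalization makes $q^\sigma(v)$ unique, because $L_v$ has affine dimension $n-1$, containing $v\mathbf1$ and nearby perturbations within the interior of the cone.

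Continuity in $v$ then follows from uniqueness plus a compactness argument. Take $v_k\to v$. First show that the $q^\sigma(v_k)$ are bounded: if not, normalize them by their norms and pass to a limit $r$ with $\sum_s r_s=0$ that annihilates limits of points of $L_{v_k}$. Those limits fill $L_v$, since $V$ is continuous and attains nearby levels along segments through $v\mathbf1$. This would force $L_v\subseteq\{r\cdot x=0\}$ with $r\cdot\mathbf1=0$, which contradicts the earlier dimension and constant-act argument. With boundedness in hand, any cluster point $\bar q$ satisfies $\bar q\cdot x=v$ on $L_v$ and $\sum_s\bar q_s=1$, so uniqueness gives $\bar q=q^\sigma(v)$.
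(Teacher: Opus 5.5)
Your proposal is correct, but it proves from scratch what the paper obtains by citation. The paper applies Theorem 3.1 of \citet{Payro2025} to the restriction of $\succeq$ to the mixture space $K_\sigma$. This yields a normalized $\Phi_\sigma(x,v)$, mixture-linear in $x$, whose unique root is $V(x)$. Full-dimensionality and the normalizations at $\mathbf0,\mathbf1$ then give $q^\sigma(v)$. Continuity is inherited from Payr\'o's continuity of $\Phi_\sigma$ in $v$, via $q^\sigma_{\sigma(i)}(v)=\Phi_\sigma(b_i^\sigma,v)-\Phi_\sigma(b_{i+1}^\sigma,v)$.

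You instead run the underlying Dekel-type argument directly: you separate $\{V>v\}$ from $\{V<v\}$ and then show that the hyperplane slice of $K_\sigma$ is exactly $L_v$. You then get continuity from uniqueness of the normalized normal plus compactness. That step needs the extra fact, which you supply, that $L_v$ meets the interior of $K_\sigma$.

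The paper's route is shorter but depends on the precise content of an external theorem, including its unique-root and continuity clauses. Yours is self-contained, and it also delivers the orientation of Lemma \ref{lem:local-threshold} at once. Since $\mathbf1\in\{V>v\}$ and $\mathbf0\in\{V<v\}$, you get $v\sum_s p_s\geq 0$, so you normalize by a positive number and the separating inequalities keep their direction.

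When writing it up, tighten three points.
\begin{enumerate}
\item Extending the separating inequalities from the strict sets to $U_v$ and $D_v$ uses $L_v\subseteq\overline{\{V>v\}}\cap\overline{\{V<v\}}$. This follows from strict betweenness of each $x\in L_v$ with $\mathbf1$ and with $\mathbf0$.
\item The converse inclusion is easier than you fear and needs no care at boundary faces. The set $\{V<v\}$ is relatively open in the full-dimensional convex set $K_\sigma$, so it contains a ball, hence a point $y$ with $p\cdot y<\alpha$. If $p\cdot x=\alpha$ and $V(x)>v$, the point of the segment $[y,x]\subseteq K_\sigma$ where $V=v$ differs from $x$. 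It therefore lies strictly below the hyperplane, which is a contradiction.
\item To show that limits of points of $L_{v_k}$ exhaust $L_v$, use the segments $[\mathbf0,x]$ and $[x,\mathbf1]$ for $x\in L_v$. Along these, betweenness makes $V$ continuous and strictly monotone. Segments through $v\mathbf1$ only approximate $v\mathbf1$ itself.
\end{enumerate}
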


\begin{proof}
Every pair of acts in \(K_\sigma\) is comonotonic. Hence the restriction of
\(\succeq\) to \(K_\sigma\) satisfies Weak Order, Continuity, and ordinary
Mixture-Betweenness. The set \(K_\sigma\) is nonempty, compact, convex, and
closed under mixtures, so it is a mixture space on which the restricted
preference is nontrivial. The constant acts \(\mathbf0\) and \(\mathbf1\) are
respectively weak worst and weak best by statewise monotonicity. Moreover,
Lemma \ref{lem:certainty-equivalent} together with Axiom
\ref{ax:boundary-interiority} gives \(V(x)\in(0,1)\) for every
\(x\in X\setminus\{\mathbf0,\mathbf1\}\). Hence, for every such \(x\),
\(x\sim V(x)\mathbf1\), and Axiom \ref{ax:strict-constants} implies
\(\mathbf1\succ x\succ\mathbf0\). Thus \(\mathbf0\) and \(\mathbf1\)
are respectively strict worst and strict best acts on \(K_\sigma\), so the
extremal hypotheses and normalization required for the local application below
are satisfied.

Apply Theorem 3.1 of \citet{Payro2025} to this restricted preference, using
\(\mathbf0\) and \(\mathbf1\) as the normalizing worst and best acts. It yields
\(
    \Phi_\sigma:K_\sigma\times[0,1]\to\R
\)
such that \(\Phi_\sigma(\cdot,v)\) is mixture-linear for every \(v\),
\(\Phi_\sigma(\mathbf0,v)=0\), \(\Phi_\sigma(\mathbf1,v)=1\), and \(V(x)\)
is the unique solution of
\(
    v=\Phi_\sigma(x,v).
\)
Since \(K_\sigma\) is full-dimensional, mixture-linearity is ordinary affine
linearity. The normalization at \(\mathbf0\) removes the intercept, so there
is a vector \(q^\sigma(v)\in\R^S\) satisfying
\(
    \Phi_\sigma(x,v)=q^\sigma(v)\cdot x.
\)
The normalization at \(\mathbf1\) gives
\(
    q^\sigma(v)\cdot\mathbf1=1,
\)
which is \eqref{eq:local-sum-one}, while \citet{Payro2025}'s unique-root property gives
\eqref{eq:local-root}.

It remains only to verify continuity of the coefficients. For
\(i=1,\ldots,n\), let
\(b_i^\sigma:=\mathbf1_{A_i^\sigma}, \; b_{n+1}^\sigma:=\mathbf0\).
Each \(b_i^\sigma\) belongs to \(K_\sigma\), and
\(
    \Phi_\sigma(b_i^\sigma,v)
    =
    \sum_{j=i}^{n}q_{\sigma(j)}^\sigma(v).
\)
Therefore
\(q_{\sigma(i)}^\sigma(v) = \Phi_\sigma(b_i^\sigma,v) - \Phi_\sigma(b_{i+1}^\sigma,v)\).
\citet{Payro2025}'s continuity of \(\Phi_\sigma\) in its second argument on \((0,1)\)
therefore implies continuity of every coordinate of \(q^\sigma\).
\end{proof}

Thus each rank cone admits a level-indexed supporting probability vector.
The remaining task is to make these cone-specific objects compatible across
rankings.

The unique-root representation in Lemma \ref{lem:local-linear} identifies the
indifference hyperplane. To use statewise monotonicity, we also need to know
which side of that hyperplane is preferred. Betweenness determines the
orientation.

\begin{lemma}
\label{lem:local-threshold}
Under Axioms \ref{ax:weak-order-continuity}--
\ref{ax:boundary-interiority}, with \(q^\sigma\) as in Lemma
\ref{lem:local-linear}, for every \(x\in K_\sigma\) and \(v\in(0,1)\),
\(x\succeq v\1 \;\text{if and only if}\; q^\sigma(v)\cdot x\geq v\).

\end{lemma}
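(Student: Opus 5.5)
We want to show: for $x\in K_\sigma$ and $v\in(0,1)$, $x\succeq v\mathbf1$ iff $q^\sigma(v)\cdot x\ge v$. The plan is to establish a single-crossing property for the function $g_x(w):=q^\sigma(w)\cdot x-w$ on $(0,1)$ and then translate it into the orientation statement.

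\emph{Step 1: reduce to the non-endpoint cone.} The endpoint acts are handled directly. For $x=\mathbf1$ both sides hold, since $q^\sigma(v)\cdot\mathbf1=1>v$ by \eqref{eq:local-sum-one}. For $x=\mathbf0$ both sides fail, since $q^\sigma(v)\cdot\mathbf0=0<v$. For any other $x\in K_\sigma$, Lemma~\ref{lem:certainty-equivalent} gives $r:=V(x)\in(0,1)$. By \eqref{eq:local-root}, $r$ is the unique zero of $g_x$ on $(0,1)$. Also $x\succeq v\mathbf1$ iff $r\ge v$, by the certainty-equivalent representation. So it suffices to prove that $g_x(v)>0$ for $v<r$ and $g_x(v)<0$ for $v>r$.

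\emph{Step 2: fix signs using betweenness along the segment from $\mathbf0$ or $\mathbf1$ to $x$.} Since $g_x$ is continuous (Lemma~\ref{lem:local-linear}) and has no zero other than $r$, it has constant sign on $(0,r)$ and on $(r,1)$. To find the sign on $(r,1)$, fix $v\in(r,1)$ and consider $y_\lambda:=\lambda x+(1-\lambda)\mathbf1$. These acts lie in $K_\sigma$, because constants are comonotonic with everything. By continuity of $V$ and the intermediate value theorem there is $\lambda\in(0,1)$ with $V(y_\lambda)=v$.

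Then \eqref{eq:local-root} gives $q^\sigma(v)\cdot y_\lambda=v$, that is, $\lambda\, q^\sigma(v)\cdot x+(1-\lambda)=v$. Since $v<1$, this forces
\[
q^\sigma(v)\cdot x=\frac{v-(1-\lambda)}{\lambda}=v-\frac{(1-\lambda)(1-v)}{\lambda}<v,
\]
so $g_x(v)<0$. Symmetrically, for $v\in(0,r)$ choose $\lambda$ with $V(\lambda x)=v$, mixing with $\mathbf0$. Then $\lambda\, q^\sigma(v)\cdot x=v$ gives $q^\sigma(v)\cdot x=v/\lambda>v$, so $g_x(v)>0$.

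This argument uses only the affine form of the local functional and $q^\sigma(v)\cdot\mathbf1=1$. It does not even need the constant-sign observation, although that observation is a consistency check. Comonotonic betweenness enters only through Lemma~\ref{lem:local-linear}.

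\emph{Main obstacle.} The delicate point is guaranteeing that a suitable $\lambda$ exists strictly inside $(0,1)$. We know $V(\mathbf1)=1$, $V(\mathbf0)=0$ and $V(x)=r$. The map $\lambda\mapsto V(\lambda x+(1-\lambda)\mathbf1)$ is continuous on $[0,1]$ and runs from $1$ to $r$, so it attains every $v\in(r,1)$ at some interior $\lambda$. Moreover $y_\lambda\neq\mathbf0,\mathbf1$ there, because its value $v$ is strictly between $0$ and $1$, so \eqref{eq:local-root} applies to it. The same reasoning works for the mixture with $\mathbf0$. Combining the two cases with $g_x(r)=0$ yields $x\succeq v\mathbf1\iff q^\sigma(v)\cdot x\ge v$.
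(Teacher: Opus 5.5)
Your proof is correct, but it reaches the orientation by a different route from the paper. The paper argues by contradiction at the fixed level $v$. If $x\succ v\mathbf1$ but $q^\sigma(v)\cdot x<v$, it mixes $x$ with a constant $c\mathbf1$, $c\in(v,1)$, and picks the mixture $z$ on the hyperplane $q^\sigma(v)\cdot z=v$. It then invokes comonotonic mixture-betweenness a second time to get $z\succ v\mathbf1$, contradicting the equality case of \eqref{eq:local-root}. You run the logic in the opposite direction. Continuity of $V$ (Lemma \ref{lem:certainty-equivalent}) and the intermediate value theorem give a mixture of $x$ with $\mathbf1$ (or with $\mathbf0$) whose certainty equivalent is exactly $v$. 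The ``only if'' half of \eqref{eq:local-root}, linearity, and $q^\sigma(v)\cdot\mathbf1=1$ then fix the sign of $q^\sigma(v)\cdot x-v$ by algebra. Your version is more economical. It shows the orientation already follows from the unique-root identity, continuity of $V$, and the endpoint normalizations, with betweenness entering only through Lemma \ref{lem:local-linear}. It also yields the full single-crossing profile of $w\mapsto q^\sigma(w)\cdot x-w$, the forward-direction counterpart of Lemma \ref{lem:single-crossing}. The paper's version works only with preference comparisons at level $v$. It never needs the certainty equivalent of $x$ itself, continuity of $V$ along segments, or a separate treatment of $\mathbf0$ and $\mathbf1$. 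It also makes explicit that betweenness is what selects the preferred side of the local hyperplane.
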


\begin{proof}
The weak statement has two cases: strict and equality. The equality statement follows immediately from
\eqref{eq:local-root} and the definition of the certainty equivalent.
Consider the strict upper contour. Suppose, toward a contradiction, that
\(x\succ v\1\) but
\(
    q^\sigma(v)\cdot x<v.
\)
Choose a constant \(c\in(v,1)\). Then \(c\1\succ v\1\), and
\(
    q^\sigma(v)\cdot(c\1)=c>v.
\)
Hence there exists \(\lambda\in(0,1)\) such that
\(
    z
    :=
    \lambda x+(1-\lambda)c\1
\)
satisfies \(q^\sigma(v)\cdot z=v\). The constant act is comonotonic with
\(x\), so Axiom \ref{ax:comonotonic-betweenness} implies that \(z\) is
strictly preferred to \(v\1\): if \(x\succ c\1\), then
\(x\succ z\succ c\1\succ v\1\); if \(c\1\succ x\), then
\(c\1\succ z\succ x\succ v\1\); and if \(x\sim c\1\), then
\(z\sim x\succ v\1\). This contradicts the equality statement, which gives \(z\sim v\1\).
Thus
\(
    x\succ v\1\Rightarrow q^\sigma(v)\cdot x>v.
\)
The reverse implication follows by applying the same argument to the strict
lower contour, using a constant \(c\in(0,v)\). The weak statement follows from
completeness together with the strict and equality cases.
\end{proof}

The lemma upgrades the local root equation from a description of an
indifference hyperplane to a full threshold representation, which is needed
for both monotonicity and cross-cone gluing.

Statewise monotonicity now turns each local hyperplane normal into a genuine
probability vector.

\begin{lemma}
\label{lem:local-probability}
Under Axioms \ref{ax:weak-order-continuity}--
\ref{ax:boundary-interiority}, \(q^\sigma(v)\in\Delta(S)
    \;
    \text{for every }\sigma\text{ and }v\in(0,1)\),
where \(q^\sigma\) is the local representation from Lemma
\ref{lem:local-linear}.

\end{lemma}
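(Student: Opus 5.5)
Since Lemma \ref{lem:local-linear} already gives $\sum_s q^\sigma_s(v)=1$, it remains only to show $q^\sigma_s(v)\geq 0$ for every $s$. Fix $\sigma$ and $v\in(0,1)$, and suppose toward a contradiction that $q^\sigma_{\sigma(i)}(v)<0$ for some index $i$. We will build two acts in $K_\sigma$ that are ordered statewise but are ordered the other way by the threshold representation of Lemma \ref{lem:local-threshold}; statewise monotonicity then gives the contradiction.

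Start from the constant act $v\mathbf1\in K_\sigma$, for which $q^\sigma(v)\cdot v\mathbf1=v$. Perturb it in a single state while staying inside the rank cone. Define $y$ by $y_{\sigma(j)}=v$ for $j<i$ and $y_{\sigma(j)}=v+\varepsilon$ for $j\geq i$, that is, $y=v\mathbf1+\varepsilon\mathbf1_{A_i^\sigma}$. Define $y'$ by the same formula with $A_{i+1}^\sigma$ in place of $A_i^\sigma$. For small $\varepsilon>0$ both acts lie in $K_\sigma\cap X$, and $y\geq y'$ statewise, differing only in state $\sigma(i)$. Their local values differ by exactly $\varepsilon q^\sigma_{\sigma(i)}(v)<0$. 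This approach of raising an entire upper tail avoids the problem that a single-coordinate increase can leave the cone.

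To get a contradiction we need an equality point on one side. Use the mixture between these two acts. Consider $z_t:=v\mathbf1+\varepsilon(\mathbf1_{A_{i+1}^\sigma}+t\,\mathbf1_{\{\sigma(i)\}})$ for $t\in[0,1]$. Each $z_t$ lies in $K_\sigma$, and $z_t$ increases statewise in $t$. Its local value is
\[
q^\sigma(v)\cdot z_t = v+\varepsilon\Bigl(\sum_{j>i}q^\sigma_{\sigma(j)}(v)+t\,q^\sigma_{\sigma(i)}(v)\Bigr),
\]
which is strictly decreasing in $t$.

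The main obstacle is that this linear expression need not cross $v$ on $t\in[0,1]$. To force a crossing, use constant shifts. Since $q^\sigma(v)\cdot(z_t-c\mathbf1)=q^\sigma(v)\cdot z_t-c$ and shifting by a constant preserves $K_\sigma$, choose
\[
c=\varepsilon\Bigl(\sum_{j>i}q^\sigma_{\sigma(j)}(v)+\tfrac12 q^\sigma_{\sigma(i)}(v)\Bigr).
\]
This constant is small in absolute value when $\varepsilon$ is small, so the shifted acts $w_t:=z_t-c\mathbf1$ remain in $X$ because $v$ is interior. The shifted family has local value $v+\varepsilon q^\sigma_{\sigma(i)}(v)(t-\tfrac12)$.

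By Lemma \ref{lem:local-threshold}, $w_0\succ v\mathbf1$, since its local value exceeds $v$, while $w_1\prec v\mathbf1$, since its local value falls below $v$. Yet $w_1\geq w_0$ statewise, so Axiom \ref{ax:monotonicity} gives $w_1\succeq w_0$, and hence $w_1\succ v\mathbf1$. This contradicts $w_1\prec v\mathbf1$. Therefore every coordinate of $q^\sigma(v)$ is nonnegative, and with the sum-one normalization, $q^\sigma(v)\in\Delta(S)$.
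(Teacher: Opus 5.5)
Your proof is correct. It reaches the same kind of contradiction as the paper, statewise monotonicity against the local threshold of Lemma \ref{lem:local-threshold}, but through a different construction.

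The paper's route is as follows. It first builds strictly ranked interior acts $p\succ v\mathbf1\succ n$ in $K_\sigma$. It then uses continuity along the segment between them to locate a tie-free point $e\sim v\mathbf1$, so that $q^\sigma(v)\cdot e=v$. Finally it raises a single coordinate $s$ by a small $\eta$. The absence of ties in $e$ is exactly what keeps $e+\eta\delta^s$ inside the cone, so the perturbation works the same way for every state $s$.

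Your route is different. You start from the tied profile $v\mathbf1$ and move along the upper-tail indicators $\mathbf1_{A_{i+1}^\sigma}$ and $\mathbf1_{A_i^\sigma}$, which stay in $K_\sigma$ despite the ties. You then use $q^\sigma(v)\cdot(x-c\mathbf1)=q^\sigma(v)\cdot x-c$, a consequence of the sum-one normalization, to center the pair. The result is $w_0\leq w_1$ statewise with $q^\sigma(v)\cdot w_0>v>q^\sigma(v)\cdot w_1$.

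What your version buys is that it avoids the intermediate-value step and the search for an exact indifference point. Given Lemma \ref{lem:local-threshold}, the argument is purely algebraic, and only the weak threshold is needed, since $w_1\succeq w_0\succeq v\mathbf1\succ w_1$ is already contradictory. The cost is that your construction must be tailored to the rank position $i$ of the offending state, whereas the paper's perturbation, once $e$ is found, is uniform across states.

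One small simplification: the acts $y,y'$ and the family $z_t$ are not needed. Defining $w_0$ and $w_1$ directly suffices.
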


\begin{proof}
By Lemma \ref{lem:local-linear}, the coordinates sum to one. It remains to
show nonnegativity.

Fix \(v\in(0,1)\). Choose numbers \(\delta>0\) and \(\varepsilon>0\), small
enough that all coordinates below lie in \((0,1)\), with
\(n\varepsilon<\delta\) and \(\delta+n\varepsilon<\min\{v,1-v\}\). Define two vectors in the strict interior of
\(K_\sigma\) by
\(p_{\sigma(i)}:=v+\delta+i\varepsilon, \; n_{\sigma(i)}:=v-\delta+i\varepsilon, \; i=1,\ldots,n\).
Then every coordinate of \(p\) exceeds \(v\), every coordinate of \(n\) is
below \(v\), and the whole line segment joining \(p\) to \(n\) has the same
strict ranking. More explicitly,
\(p\succeq (\min_s p_s)\1\succ v\1 \;\text{and}\; v\1\succ (\max_s n_s)\1\succeq n\),
where the strict comparisons use Axiom~\ref{ax:strict-constants}. Hence
\(
    p\succ v\1\succ n.
\)
By continuity, the segment contains some strict-ranking point
\(e\in K_\sigma\) with \(e\sim v\1\).

Let \(\delta^s\) denote the \(s\)th unit vector. Suppose
\(q_s^\sigma(v)<0\) for some state \(s\). Since \(e\) lies in the strict
interior of the rank cone and all its coordinates lie in \((0,1)\), for
sufficiently small \(\eta>0\), \(e+\eta\delta^s\in K_\sigma\subseteq X\).
Moreover, \(e+\eta\delta^s\geq e\) statewise, so statewise monotonicity
gives \(e+\eta\delta^s\succeq e\). Since \(e\sim v\1\), transitivity
therefore yields \(e+\eta\delta^s\succeq v\1\). But Lemma
\ref{lem:local-threshold} gives
\(
    q^\sigma(v)\cdot(e+\eta\delta^s)
    =v+\eta q_s^\sigma(v)
    <v,
\)
which implies \(v\1\succ e+\eta\delta^s\), a contradiction. Thus every
coordinate is nonnegative, and so \( q^\sigma(v)\in\Delta(S)
    \;
    \text{for every }\sigma\text{ and }v\in(0,1)\) follows.
\end{proof}

Hence the local hyperplane coefficients are genuine state weights rather
than unrestricted affine coefficients.

\subsection{Gluing rank cones into a certainty-equivalent-indexed capacity}
\label{sec:gluing}

The remaining issue is global consistency. Different rank cones may have
different local probability vectors \(q^\sigma(v)\). The relevant consistency
condition is not equality of these vectors. Instead, whenever an event \(A\)
is ranked above its complement, the aggregate local weight assigned to \(A\)
must be independent of how states are ordered within \(A\) and within
\(A^c\). Binary event acts make this implication transparent.

For \(A\subseteq S\), let \(\Sigma(A)\) denote the permutations under which
every state in \(A\) is ranked weakly above every state in \(A^c\). For
\(\sigma\in\Sigma(A)\), define \( r_\sigma(A,v)
    :=
    \sum_{s\in A}q_s^\sigma(v)\).

\begin{lemma}
\label{lem:binary-gluing}
For every \(v\in(0,1)\), every event \(A\subseteq S\), and every
\(\sigma,\tau\in\Sigma(A)\), 
\begin{equation}
    r_\sigma(A,v)=r_\tau(A,v).
    \label{eq:gluing-equality}
\end{equation}
\end{lemma}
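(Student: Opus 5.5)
Handle the trivial events first. If $A=\varnothing$, both sums in \eqref{eq:gluing-equality} vanish. If $A=S$, both equal one by \eqref{eq:local-sum-one}. Now fix a nontrivial event $\varnothing\subsetneq A\subsetneq S$, a level $v\in(0,1)$, and $\sigma,\tau\in\Sigma(A)$.

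The plan is to evaluate one binary act in both cones. For $a,b\ge0$ small enough that $v+a,\,v-b\in I$, consider
\[
x^{a,b}:=(v+a)\1_A+(v-b)\1_{A^c}.
\]
This act takes its high value on $A$ and its low value on $A^c$. It is weakly increasing along any ordering that ranks $A$ above $A^c$, so it lies in $K_\sigma\cap K_\tau$. Lemma \ref{lem:local-threshold} applies in each cone to the same primitive comparison between $x^{a,b}$ and $v\1$. Hence
\[
r_\sigma(A,v)(v+a)+(1-r_\sigma(A,v))(v-b)\ge v
\iff x^{a,b}\succeq v\1
\iff r_\tau(A,v)(v+a)+(1-r_\tau(A,v))(v-b)\ge v ,
\]
where we use $\sum_s q^\sigma_s(v)=1$. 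Writing $r=r_\sigma(A,v)$, the left side simplifies to $ra-(1-r)b\ge0$, and similarly for $r'=r_\tau(A,v)$. We conclude that $ra\ge(1-r)b$ if and only if $r'a\ge(1-r')b$, for all admissible small $a,b\ge0$.

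Next, extract equality from this sign agreement. By Lemma \ref{lem:local-probability}, $r,r'\in[0,1]$. Suppose toward a contradiction that $r<r'$. Choose $a,b>0$ small, in proportion $b/(a+b)=t$ with $r<t<r'$; this is possible because a ratio condition survives scaling $(a,b)$ down. For $r<1$, the inequality $ra\ge(1-r)b$ is equivalent to $r\ge b/(a+b)=t$. The same equivalence holds for $r'$ and is trivially true when $r'=1$. So with this choice $x^{a,b}\succeq v\1$ holds by the $\tau$-representation, since $r'>t$. It fails by the $\sigma$-representation, since $r<t$. This is a contradiction. The case $r>r'$ is symmetric, so $r=r'$.

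The main obstacle is feasibility, not the algebra. The binary act must lie in both rank cones, which holds precisely because $\sigma,\tau\in\Sigma(A)$ and ties within $A$ and within $A^c$ are allowed. It must also lie in $X$, which holds because $v$ is interior and $a,b$ are small. Only then can both local threshold representations be applied to one and the same primitive comparison.
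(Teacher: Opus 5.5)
Your proposal is correct and follows essentially the same route as the paper: place the binary act $(v+a)\1_A+(v-b)\1_{A^c}$ in both $K_\sigma$ and $K_\tau$, apply Lemma \ref{lem:local-threshold} in each cone, and pick the gain--loss proportion strictly between the two cutoffs so that the two cones rank the same primitive comparison in opposite directions. Your parametrization by $t=b/(a+b)$ avoids the paper's endpoint convention for $(1-r)/r$. Since $ra\ge(1-r)b$ is equivalent to $r\ge b/(a+b)$ for every $r$ whenever $a+b>0$, your restriction to $r<1$ is unnecessary.
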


\begin{proof}[Proof of Lemma \ref{lem:binary-gluing}]
Fix \(v\), \(A\), and two admissible rankings \(\sigma,\tau\). For positive
\(a,b\), small enough that \(v+a\leq1\) and \(v-b\geq0\), consider the binary
act 
\begin{equation}
    x^{A}_{a,b}
    :=
    (v+a)\mathbf1_A
    +(v-b)\mathbf1_{A^c}.
    \label{eq:binary-event-act}
\end{equation}
For later one-sided identifications, we use the same notation when one of
\(a,b\) is zero, provided \(a,b\geq0\), \(a+b>0\), and the resulting act is
feasible. The gluing argument here continues to use \(a,b>0\).
This act belongs to both \(K_\sigma\) and \(K_\tau\). Lemma
\ref{lem:local-threshold} implies, under ranking \(\sigma\),
\begin{equation}
    x^{A}_{a,b}\succeq v\1
    \quad\text{if and only if}\quad
    a\,r_\sigma(A,v)
    -b\,[1-r_\sigma(A,v)]
    \geq0.
    \label{eq:binary-sign-sigma}
\end{equation}
The same primitive comparison, evaluated using \(\tau\), is equivalent to
\begin{equation}
    a\,r_\tau(A,v)
    -b\,[1-r_\tau(A,v)]
    \geq0.
    \label{eq:binary-sign-tau}
\end{equation}
If the two aggregate weights differed, choose the ratio \(a/b\) strictly
between the two corresponding cutoff ratios
\(
    (1-r)/r
\)
(with the usual endpoint interpretation when \(r=0\) or \(r=1\)). Scaling
\(a\) and \(b\) by a common positive factor keeps this ratio unchanged and
ensures that \eqref{eq:binary-event-act} remains in \(X\). Then
\eqref{eq:binary-sign-sigma} and \eqref{eq:binary-sign-tau} would assign
opposite rankings to the same pair of acts, a contradiction. Hence
\eqref{eq:gluing-equality} holds.
\end{proof}

This is the key compatibility step: the aggregate weight assigned to an
event is identified independently of the ranking chosen within the event and
its complement.

Lemma \ref{lem:binary-gluing} makes it possible to define an event weight
without selecting a particular rank cone.

\begin{proposition}
\label{prop:capacity-construction}
For every \(v\in(0,1)\), define
\begin{equation}
    \nu_v(A)
    :=
    \sum_{s\in A}q_s^\sigma(v),
    \qquad
    \sigma\in\Sigma(A).
    \label{eq:capacity-definition}
\end{equation}
Then:
\begin{enumerate}[label=(\roman*),leftmargin=2.5em]
    \item \(\nu_v\) is well defined and is a normalized monotone capacity;
    \item for every rank cone \(K_\sigma\) and every \(x\in K_\sigma\),
    \(C_{\nu_v}(x) =q^\sigma(v)\cdot x\);
    \item for every \(x\in X\),
    \begin{equation}
        x\succeq v\1
        \quad\text{if and only if}\quad
        C_{\nu_v}(x)\geq v.
        \label{eq:global-threshold}
    \end{equation}
\end{enumerate}
\end{proposition}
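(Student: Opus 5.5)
Well-definedness in (i) follows from Lemma \ref{lem:binary-gluing}. The set \(\Sigma(A)\) is nonempty for every \(A\): list the states of \(A^c\) first and then those of \(A\). By the lemma, the sum \(\sum_{s\in A}q^\sigma_s(v)\) does not depend on which \(\sigma\in\Sigma(A)\) is chosen. The properties of a normalized capacity then follow from Lemma \ref{lem:local-probability}. For the empty set and for \(S\), every permutation lies in \(\Sigma\), so \(\nu_v(\varnothing)=0\) and \(\nu_v(S)=1\) by \eqref{eq:local-sum-one}. For monotonicity, take \(A\subseteq B\). There is a single permutation in both \(\Sigma(A)\) and \(\Sigma(B)\): order \(B^c\) first, then \(B\setminus A\), then \(A\). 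With this common \(\sigma\), nonnegativity of \(q^\sigma(v)\) gives \(\nu_v(B)-\nu_v(A)=\sum_{s\in B\setminus A}q^\sigma_s(v)\geq0\).

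For (ii), fix \(\sigma\) and \(x\in K_\sigma\), and compute the Choquet integral \eqref{eq:choquet-integral} along the same permutation \(\sigma\). Each upper event \(A_i^\sigma=\{\sigma(i),\ldots,\sigma(n)\}\) is ranked above its complement under \(\sigma\), so \(\sigma\in\Sigma(A_i^\sigma)\). Hence \(\nu_v(A_i^\sigma)=\sum_{j\geq i}q^\sigma_{\sigma(j)}(v)\). Substituting these values into the Choquet sum and applying Abel summation gives \(\sum_i x_{\sigma(i)}\bigl(\nu_v(A_i^\sigma)-\nu_v(A_{i+1}^\sigma)\bigr)=\sum_i x_{\sigma(i)}q^\sigma_{\sigma(i)}(v)=q^\sigma(v)\cdot x\). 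Ties cause no difficulty, because the Choquet integral is independent of how ties are ordered and \(x\) lies in \(K_\sigma\) for the chosen \(\sigma\).

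For (iii), take any \(x\in X\). Since the rank cones cover \(X\), \(x\) lies in some \(K_\sigma\). Lemma \ref{lem:local-threshold} then gives \(x\succeq v\1\) if and only if \(q^\sigma(v)\cdot x\geq v\), and by (ii) this holds if and only if \(C_{\nu_v}(x)\geq v\). This covers boundary acts as well, since Lemma \ref{lem:local-threshold} is stated for all of \(K_\sigma\subseteq X\).

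The main obstacle is the combinatorial bookkeeping, not any analytic estimate. Two points need care. First, every upper event of \(\sigma\) must have \(\sigma\) itself in its admissible set, so that the glued value agrees with the cone-specific partial sums. Second, nested events \(A\subseteq B\) must admit a common admissible ranking, which is what makes monotonicity go through. Once these are checked, the gluing lemma does all the substantive work.
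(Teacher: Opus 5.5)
Your proposal is correct and follows the paper's proof essentially step for step. Well-definedness comes from Lemma~\ref{lem:binary-gluing}. Normalization and monotonicity come from a common ranking with blocks $B^c$, $B\setminus A$, $A$ together with nonnegativity of $q^\sigma(v)$. The Choquet identity comes from $\nu_v(A_i^\sigma)=\sum_{j\geq i}q^\sigma_{\sigma(j)}(v)$ along the upper events of $\sigma$, and the global threshold comes from combining (ii) with Lemma~\ref{lem:local-threshold} on a rank cone containing $x$. Your explicit check that $\Sigma(A)$ is nonempty is a small point the paper leaves implicit.
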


\begin{proof}
Well-definedness follows from Lemma \ref{lem:binary-gluing}. The normalization
\(
    \nu_v(\varnothing)=0
\)
and
\(
    \nu_v(S)=1
\)
follow from \eqref{eq:capacity-definition} and
\(q^\sigma(v)\in\Delta(S)\).

To show monotonicity, take \(A\subseteq B\). Choose a ranking whose blocks,
from high to low, are \(A\), \(B\setminus A\), and \(B^c\). Both \(A\) and
\(B\) are upper events under that ranking. Therefore
\(
    \nu_v(B)-\nu_v(A)
    =
    \sum_{s\in B\setminus A}q_s^\sigma(v)
    \geq0.
\)
Hence \(\nu_v\) is a capacity.

For the Choquet identification, fix \(x\in K_\sigma\) and use the upper events
\(A_i^\sigma\) from \eqref{eq:upper-events}. Since each \(A_i^\sigma\) is an
upper event under \(\sigma\),
\(\nu_v(A_i^\sigma) = \sum_{j=i}^{n}q_{\sigma(j)}^\sigma(v)\).
Consequently,
\(q_{\sigma(i)}^\sigma(v) = \nu_v(A_i^\sigma)-\nu_v(A_{i+1}^\sigma)\).
Substitution into the finite Choquet formula
\eqref{eq:choquet-integral} gives
\(
    C_{\nu_v}(x)=q^\sigma(v)\cdot x.
\)
Combining this equality with Lemma \ref{lem:local-threshold} yields
\eqref{eq:global-threshold}. Since the rank cones cover \(X\), the result is
global.
\end{proof}

The proposition completes the forward geometric construction by turning the
cone-specific probability vectors into one Choquet capacity at each utility
level.

The representation also identifies the capacity directly from preferences.
For an event \(A\), the number \(\nu_v(A)\) determines the cutoff gain--loss
tradeoff around the certainty-equivalent level \(v\) at which the binary event
comparison changes sign. When this cutoff is interior, it is attained by an
indifferent binary bet.

\begin{proposition}
\label{prop:revealed-capacity}
Fix \(v\in(0,1)\) and \(A\subseteq S\). For any \(a,b\geq0\) with
\(a+b>0\) such that the binary act in \eqref{eq:binary-event-act} is feasible,
\begin{equation}
    x^{A}_{a,b}\sim v\1
    \quad\text{if and only if}\quad
    a\nu_v(A)=b[1-\nu_v(A)].
    \label{eq:revealed-capacity-condition}
\end{equation}
In particular, whenever an interior indifference with \(a,b>0\) exists,
\begin{equation}
    \nu_v(A)=\frac{b}{a+b}.
    \label{eq:revealed-capacity-ratio}
\end{equation}
At the endpoints, \(x^A_{a,0}\sim v\mathbf1\) for \(a>0\) if and only if
\(\nu_v(A)=0\), while \(x^A_{0,b}\sim v\mathbf1\) for \(b>0\) if and only if
\(\nu_v(A)=1\). The capacity \(\nu_v\) is unique among normalized monotone
capacities satisfying \eqref{eq:global-threshold} at level \(v\).
\end{proposition}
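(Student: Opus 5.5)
The plan is to compute the Choquet integral of the binary act directly and then read off each claim from the global threshold representation \eqref{eq:global-threshold} of Proposition \ref{prop:capacity-construction}. Since $x^A_{a,b}$ takes the value $v+a$ on $A$ and $v-b$ on $A^c$ with $v+a\geq v-b$, we may choose a ranking $\sigma\in\Sigma(A)$. For such a ranking, $A$ is the upper event at the single jump (when $A$ is nontrivial). The Choquet formula \eqref{eq:choquet-integral} then gives
\[
C_{\nu_v}(x^A_{a,b})=(v-b)+(a+b)\nu_v(A)=v+a\nu_v(A)-b[1-\nu_v(A)].
\]
For $A=\varnothing$ or $A=S$ the act is constant and the formula still holds with $\nu_v(\varnothing)=0$ and $\nu_v(S)=1$. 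Hence $C_{\nu_v}(x^A_{a,b})-v=a\nu_v(A)-b[1-\nu_v(A)]$.

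Next I would obtain the indifference version of \eqref{eq:global-threshold}. That display gives $x\succeq v\mathbf1$ if and only if $C_{\nu_v}(x)\geq v$, for every $x\in X$. For the reverse comparison $v\mathbf1\succeq x$, I would use the rank-cone threshold of Lemma \ref{lem:local-threshold} in its strict form, since its proof shows that $x\succ v\mathbf1$ is equivalent to $q^\sigma(v)\cdot x>v$. Combined with Proposition \ref{prop:capacity-construction}(ii), this yields $x\succ v\mathbf1$ if and only if $C_{\nu_v}(x)>v$. Therefore $x\sim v\mathbf1$ if and only if $C_{\nu_v}(x)=v$. Applied to the binary act, this gives \eqref{eq:revealed-capacity-condition}. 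When $a,b>0$, solving $a\nu=b(1-\nu)$ gives \eqref{eq:revealed-capacity-ratio}.

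The endpoint cases follow by substitution. With $b=0$ and $a>0$, the condition becomes $a\nu_v(A)=0$, that is, $\nu_v(A)=0$. With $a=0$ and $b>0$, it becomes $b[1-\nu_v(A)]=0$, that is, $\nu_v(A)=1$.

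For uniqueness, suppose $\mu$ is another normalized monotone capacity satisfying \eqref{eq:global-threshold} at level $v$. The step I expect to need care is that \eqref{eq:global-threshold} for $\mu$ is stated only with the weak inequality, so indifference is not immediately available for $\mu$. I would avoid needing it by working with cutoffs instead. Fix $A$ and consider the feasible binary acts with $a,b>0$ small. The set of ratios $a/b$ for which $x^A_{a,b}\succeq v\mathbf1$ is, under $\mu$, the set $\{a\mu(A)\geq b(1-\mu(A))\}$, and, under $\nu_v$, the set $\{a\nu_v(A)\geq b(1-\nu_v(A))\}$. Both sets are determined by the same preference. Each is an upper ray in $t=a/b\in(0,\infty)$ with threshold $(1-\mu(A))/\mu(A)$ (interpreted as $+\infty$ if $\mu(A)=0$, and including all $t$ if $\mu(A)=1$). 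Equal rays force equal thresholds. The map $r\mapsto(1-r)/r$ is strictly decreasing on $[0,1]$ in the extended sense, so it is injective, and hence $\mu(A)=\nu_v(A)$. This is exactly the contradiction argument of Lemma \ref{lem:binary-gluing}: if the weights differed, a ratio strictly between the two cutoffs, with $a$ and $b$ scaled down to remain feasible, would be ranked oppositely by the two capacities.
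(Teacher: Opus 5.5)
Your proposal is correct and follows essentially the same route as the paper. It computes the binary-act Choquet integral $v-b+(a+b)\nu_v(A)$, obtains the indifference condition from the equality threshold supplied by the rank-cone construction (Lemma \ref{lem:local-threshold} with Proposition \ref{prop:capacity-construction}), and proves uniqueness by choosing a ratio $a/b$ strictly between the two cutoffs, scaled to stay feasible, exactly as in Lemma \ref{lem:binary-gluing}. Your explicit note that the rival capacity $\mu$ is only assumed to satisfy the weak threshold is a small but welcome tightening of the paper's ``opposite sides'' phrasing; you handle it by arguing through the set of ratios with $x^A_{a,b}\succeq v\mathbf1$.
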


\begin{proof}
For a binary act with \(A\) as its high event, the Choquet integral is
\(
    C_{\nu_v}(x^A_{a,b})
    =
    v-b+(a+b)\nu_v(A).
\)
By Proposition \ref{prop:capacity-construction} and the local root
identification used in its construction,
\(x^A_{a,b}\sim v\mathbf1 \;\Longleftrightarrow\; C_{\nu_v}(x^A_{a,b})=v\).
Substituting the binary-act formula gives
\(-b+(a+b)\nu_v(A)=0\),
which is \eqref{eq:revealed-capacity-condition}; when \(a,b>0\), rearranging
gives \eqref{eq:revealed-capacity-ratio}. The same equality with \(b=0\) or
\(a=0\) yields the stated one-sided endpoint identifications.

For uniqueness, suppose \(\mu_v\) is another normalized monotone capacity
that gives the same threshold comparisons at level \(v\). If
\(\mu_v(A)\neq\nu_v(A)\) for some event \(A\), choose the ratio \(a/b\) between
the corresponding binary-act cutoff ratios, with the usual one-sided
interpretation if one of the event weights is \(0\) or \(1\). After scaling
\(a,b\) to maintain feasibility, the two capacities rank the same binary act
on opposite sides of \(v\mathbf1\), contradicting the common threshold
representation. Thus \(\mu_v(A)=\nu_v(A)\) for every event.
\end{proof}

Besides proving uniqueness, binary event comparisons give each local capacity
a direct revealed-preference identification; when the cutoff is interior,
\eqref{eq:revealed-capacity-ratio} gives the corresponding indifference-ratio
elicitation.

The continuity of preferences has an additional implication: the local
capacity cannot jump as the certainty-equivalent level changes.

\begin{proposition}
\label{prop:capacity-continuity}
The map \( v\longmapsto\nu_v\) is continuous on \((0,1)\) under the sup norm \( \|\nu-\mu\|_\infty
    :=
    \max_{A\subseteq S}|\nu(A)-\mu(A)|\).

\end{proposition}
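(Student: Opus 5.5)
The plan is to read continuity of $v\mapsto\nu_v$ off the continuity of the local probability vectors $q^\sigma(v)$ from Lemma \ref{lem:local-linear}, using the capacity construction in Proposition \ref{prop:capacity-construction}.

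First, for each event $A\subseteq S$ fix one permutation $\sigma_A\in\Sigma(A)$. Such a permutation exists: rank all states of $A$ above all states of $A^c$, ordering within each block arbitrarily. By Lemma \ref{lem:binary-gluing} and the definition \eqref{eq:capacity-definition}, we then have
\[
\nu_v(A)=\sum_{s\in A}q^{\sigma_A}_s(v)\qquad\text{for every } v\in(0,1).
\]
The key point is that the permutation $\sigma_A$ is the same for all $v$. Well-definedness in Proposition \ref{prop:capacity-construction} lets us use this single ranking at every level, instead of a ranking that might change with $v$.

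Second, Lemma \ref{lem:local-linear} states that each coordinate $v\mapsto q^{\sigma}_s(v)$ is continuous on $(0,1)$ for every fixed $\sigma$. Hence $v\mapsto\nu_v(A)$ is a finite sum of continuous functions and is continuous for each $A$.

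Third, $S$ is finite, so there are only $2^n$ events. The sup norm $\|\nu_w-\nu_v\|_\infty=\max_{A}|\nu_w(A)-\nu_v(A)|$ is therefore a maximum of finitely many functions of $w$, each tending to $0$ as $w\to v$. This gives continuity under $\|\cdot\|_\infty$.

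The only step that needs care is the first: making sure the representing ranking does not depend on $v$. This is exactly what Lemma \ref{lem:binary-gluing} delivers. Everything else is inherited from the continuity of Payro's $\Phi_\sigma$ in its level argument, which was already used to prove Lemma \ref{lem:local-linear}.

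If one wanted to avoid relying on that continuity, an alternative route uses the binary acts of Proposition \ref{prop:revealed-capacity} together with the continuity of $V$ from Lemma \ref{lem:certainty-equivalent}. Suppose $\nu_{v_k}(A)\to r\neq\nu_v(A)$ along some sequence $v_k\to v$. Pick a ratio $a/b$ separating the two cutoffs. Then the binary acts $x^{A}_{a,b}$, built around the moving levels $v_k$, would satisfy threshold comparisons that conflict in the limit with the closed contour sets. The first argument is shorter, so it is the one I would write.
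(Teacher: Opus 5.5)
Your proof is correct and is essentially the paper's argument: fix one ranking $\sigma_A\in\Sigma(A)$ for each event, write $\nu_v(A)=\sum_{s\in A}q_s^{\sigma_A}(v)$ using the well-defined construction, inherit continuity from Lemma \ref{lem:local-linear}, and pass to the sup norm because $2^S$ is finite. The alternative binary-act route you sketch, which uses continuity of $V$ and closed contour sets, would also go through, but it is not needed.
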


\begin{proof}
Fix an event \(A\), and choose once and for all a permutation
\(\sigma_A\in\Sigma(A)\). By definition,
\(
    \nu_v(A)
    =
    \sum_{s\in A}q_s^{\sigma_A}(v).
\)
Lemma \ref{lem:local-linear} establishes continuity of every coordinate of
\(q^{\sigma_A}(v)\), hence \(v\mapsto\nu_v(A)\) is continuous. Since \(S\) is
finite, \(2^S\) is finite. Coordinatewise continuity of the finitely many
numbers \(\{\nu_v(A):A\subseteq S\}\) is therefore equivalent to continuity
under the sup-norm.
\end{proof}

Thus the capacity schedule inherits continuity from the underlying local
representation rather than requiring an unrelated smoothness assumption.

\subsection{Auxiliary results for the reverse implication}

For the converse direction, uniqueness of the implicit root is not enough by
itself: we also need to know which side of the root corresponds to a better or
worse act. Because admissibility applies to every act other than the endpoint
constants, the orientation can be proved without relying on interior state bounds.

\begin{lemma}
\label{lem:single-crossing}
Let $\{\nu_v\}$ be admissible, and for
$x\in X\setminus\{\mathbf0,\mathbf1\}$ let $V(x)\in(0,1)$ be the unique
solution of $C_{\nu_v}(x)=v$. Then
\begin{equation}
    C_{\nu_v}(x)-v
    \begin{cases}
        >0,& v<V(x),\\
        =0,& v=V(x),\\
        <0,& v>V(x).
    \end{cases}
    \label{eq:single-crossing-sign}
\end{equation}
\end{lemma}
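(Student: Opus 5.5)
The argument is an intermediate-value argument on the continuous function $g_x(v):=C_{\nu_v}(x)-v$ on $(0,1)$. By uniqueness of the root, $g_x$ vanishes only at $V(x)$, so by continuity it has constant sign on $(0,V(x))$ and on $(V(x),1)$. It then suffices to find one point on each side where the sign is correct.

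Continuity of $g_x$ follows from admissibility (ii). For a fixed rank ordering $\sigma$ of $x$, the map $\nu\mapsto C_\nu(x)$ is Lipschitz in the sup norm with constant at most $\operatorname{osc}(x)$, since it is a finite sum of increments $\bigl(x_{\sigma(i)}-x_{\sigma(i-1)}\bigr)\nu(A_i^\sigma)$ with nonnegative coefficients. Hence $v\mapsto C_{\nu_v}(x)$ is continuous, and $g_x$ is continuous on $(0,1)$. By admissibility (iii), $g_x$ has exactly one zero, at $V(x)$. On the interval $(0,V(x))$ it therefore never vanishes, and by the intermediate value theorem it has a constant sign there; likewise on $(V(x),1)$.

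To fix the signs, examine $g_x$ near the endpoints. Every capacity satisfies $\min_s x_s\le C_\nu(x)\le\max_s x_s$. If $\min_s x_s>0$, then $g_x(v)\ge\min_s x_s-v>0$ for $v<\min_s x_s$, which settles the left interval. Symmetrically, if $\max_s x_s<1$, then $g_x(v)<0$ for $v>\max_s x_s$. These bounds fail for boundary acts, and this is the main obstacle: for example, if $\min_s x_s=0$ the crude lower bound gives only $g_x(v)\ge -v$.

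For boundary acts I would use admissibility on other acts, as the paper's remark preceding the lemma suggests. The first option is a comparison argument. Monotonicity of the Choquet integral in $x$ for fixed $\nu$ gives $g_y(v)\le g_x(v)$ whenever $y\le x$ statewise. On the left, take $y$ to be a nonzero act with $y\le x$ and $\min_s y_s>0$ would be ideal, but this is impossible when $\min_s x_s=0$. So I would instead compare with mixtures $y_t:=(1-t)x+t\,c\mathbf1$ for a constant $c\in(0,1)$.

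The function $(t,v)\mapsto g_{y_t}(v)$ is jointly continuous. For each $t\in(0,1]$, $y_t$ lies in the interior of $X$, so by the interior case $g_{y_t}$ is positive to the left of its root $V(y_t)$ and negative to the right. I would show that the root map $t\mapsto V(y_t)$ converges to $V(x)$ as $t\to0$. Let $w$ be any limit point of $V(y_t)$. Joint continuity gives $g_x(w)=0$, provided $w\in(0,1)$, and uniqueness then yields $w=V(x)$.

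Once convergence holds, fix $v<V(x)$. For small $t$ we have $v<V(y_t)$, hence $g_{y_t}(v)>0$, and letting $t\to0$ gives $g_x(v)\ge0$. Since $g_x(v)\ne0$ for $v\ne V(x)$, in fact $g_x(v)>0$. The right side is handled the same way.

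The delicate point is ruling out that $V(y_t)$ escapes to $0$ or to $1$. Here I would use a compactness argument: let $t\to0$ along a subsequence with $V(y_t)\to0$, and derive a contradiction from the strict signs of $g_{y_t}$ at a fixed $v_0\in(0,V(x))$. I expect this step to need the most care, possibly with a case split according to which coordinates of $x$ are $0$ or $1$.
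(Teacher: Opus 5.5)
Your argument for interior acts is correct. For $x\in(0,1)^S$, continuity of $g_x$ (from admissibility (ii)), uniqueness of the root, and the bounds $\min_s x_s\le C_\nu(x)\le\max_s x_s$ fix the signs on both sides of $V(x)$. Note that $(0,\min_s x_s)\subseteq(0,V(x))$ because $V(x)=C_{\nu_{V(x)}}(x)\ge\min_s x_s$.

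The lemma, however, covers every act in $X\setminus\{\mathbf0,\mathbf1\}$, including nonconstant boundary acts, and the converse direction of the theorem uses the orientation for those acts too. There your proof stops at the step you flag. The route you sketch for closing it does not work. If $V(y_{t_k})\to0$ and $v_0\in(0,V(x))$, the strict signs give $g_{y_{t_k}}(v_0)<0$ for large $k$, so the limit yields only $g_x(v_0)\le 0$. That is not a contradiction. It is exactly the alternative ($g_x<0$ to the left of the root) that the lemma must exclude, so the argument is circular at this point.

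The missing idea is to choose the constant relative to $r:=V(x)$ rather than arbitrarily. By translation invariance and positive homogeneity of the Choquet integral, $g_{y_t}(v)=(1-t)g_x(v)+t(c-v)$. Take $c=r$.
\begin{itemize}
\item Then $C_{\nu_r}(y_t)=r$, so uniqueness of the root gives $V(y_t)=r$ for every $t\in(0,1)$, and there is no convergence issue at all.
\item For $v<r$, the interior case applied to $y_t$ gives $(1-t)g_x(v)+t(r-v)>0$ for all $t$. Letting $t\to0$ gives $g_x(v)\ge0$, hence $g_x(v)>0$ by uniqueness. The case $v>r$ is symmetric.
\item Alternatively, take $c>r$ for the left side and $c<r$ for the right side. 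Since $g_{y_t}(r)=t(c-r)$, this places $V(y_t)$ on the correct side of $r$ for all $t$.
\end{itemize}

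The paper argues even more directly. Suppose $g_x(v)<0$ for some $v<r$. Choose $t$ with $(1-t)g_x(v)+t(r-v)=0$, that is, $\beta=1-t=(r-v)/(r-C_{\nu_v}(x))\in(0,1)$. Then the interior act $y=(1-\beta)r\mathbf1+\beta x$ has the two roots $v$ and $r$, contradicting admissibility (iii). The case $v>r$ is symmetric. This treats interior and boundary acts uniformly and needs neither continuity of $v\mapsto\nu_v$ nor the intermediate value theorem.
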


\begin{proof}
Fix $x\in X\setminus\{\mathbf0,\mathbf1\}$ and write $r:=V(x)$. Let
$h_x(v):=C_{\nu_v}(x)-v$. By admissibility, $h_x(r)=0$ and $r$ is the
unique root in $(0,1)$.

Take $v<r$. If $h_x(v)=0$, uniqueness is contradicted. Suppose instead that
$h_x(v)<0$, so $C_{\nu_v}(x)<v<r$. Define
\(\beta := \frac{r-v}{r-C_{\nu_v}(x)} \in(0,1), \; a:=(1-\beta)r\),
and set
\(
    y:=a\mathbf1+\beta x.
\)
Since $0<r<1$ and $0<\beta<1$, every coordinate of $y$ lies strictly
between $0$ and $1$, so $y\in(0,1)^S$. Translation invariance and positive
homogeneity of the Choquet integral give
\(
    C_{\nu_r}(y)=a+\beta r=r
\)
and, by the definition of $\beta$,
\(
    C_{\nu_v}(y)
    =a+\beta C_{\nu_v}(x)
    =v.
\)
Thus the act $y$ has two distinct roots, $v$ and $r$, contrary to
admissibility. Hence $h_x(v)>0$ for every $v<r$.

The argument for $v>r$ is symmetric. If $h_x(v)>0$, define
\(\beta := \frac{v-r}{C_{\nu_v}(x)-r} \in(0,1), \; a:=(1-\beta)r\).
Then the same interior act $y=a\mathbf1+\beta x$ satisfies both
$C_{\nu_r}(y)=r$ and $C_{\nu_v}(y)=v$, again contradicting uniqueness.
Therefore $h_x(v)<0$ for every $v>r$, proving
\eqref{eq:single-crossing-sign}.
\end{proof}

This orientation converts the implicit equation into threshold comparisons
for every act other than the endpoint constants, including nonconstant boundary
acts, and is the bridge from an
admissible family back to the full-domain behavioral axioms.

With the orientation established, the remaining converse task is to verify
that the root-defined preference satisfies exactly the maintained behavioral
requirements on $X$.

\begin{lemma}
\label{lem:admissible-generates-axioms}
Let $\{\nu_v:v\in(0,1)\}$ be an admissible implicit-capacity family. Define
$V:X\to[0,1]$ by
\(
    V(\mathbf0):=0,
    \; \text{and}\;
    V(\mathbf1):=1,
\)
and, for $x\in X\setminus\{\mathbf0,\mathbf1\}$, by the unique solution of
\(
    C_{\nu_{V(x)}}(x)=V(x).
\)
Define $x\succeq y$ if and only if $V(x)\geq V(y)$. Then $\succeq$ satisfies
Axioms \ref{ax:weak-order-continuity}--\ref{ax:boundary-interiority} on $X$.
\end{lemma}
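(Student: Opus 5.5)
The plan is to derive, from Lemma \ref{lem:single-crossing}, a threshold characterization and then verify the five axioms one at a time. The central fact to establish first is
\[
x\succeq v\mathbf1 \iff V(x)\geq v \iff C_{\nu_v}(x)\geq v,\qquad x\in X,\ v\in(0,1).
\]
For $x\notin\{\mathbf0,\mathbf1\}$ this is immediate from the sign pattern \eqref{eq:single-crossing-sign}. For the endpoints it follows because $C_{\nu_v}(\mathbf0)=0<v$ and $C_{\nu_v}(\mathbf1)=1>v$.

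Next we check that $V(c\mathbf1)=c$ for $c\in(0,1)$: Choquet translation invariance gives $C_{\nu_v}(c\mathbf1)=c$, so the root is $v=c$. Weak order is then automatic, since $\succeq$ is defined through the real-valued $V$. Strict ordering of constants follows from $V(c\mathbf1)=c$ together with the endpoint normalization. Boundary interiority also holds by construction: every $x\neq\mathbf0,\mathbf1$ has $V(x)\in(0,1)$, so $x\sim V(x)\mathbf1$.

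For statewise monotonicity, suppose $x\geq y$. If $y=\mathbf0$ or $x=\mathbf1$ there is nothing to prove. If $x=\mathbf0$ then $y=\mathbf0$, and if $y=\mathbf1$ then $x=\mathbf1$. Otherwise set $r=V(y)$. Monotonicity of the Choquet integral (each $\nu_r$ is monotone) gives $C_{\nu_r}(x)\geq C_{\nu_r}(y)=r$, and the threshold characterization then yields $V(x)\geq r$.

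The main obstacle is continuity, which amounts to closedness of the contour sets. Since $\{y:y\succeq x\}=\{y:V(y)\geq V(x)\}$, it suffices to show that $V$ is continuous on $X$. Write $\{V\geq t\}=\{y: C_{\nu_t}(y)\geq t\}$ for $t\in(0,1)$. For $t\le 0$ this set is all of $X$, and for $t\geq 1$ it is $\{\mathbf1\}$ (using $V<1$ off $\mathbf1$), which is closed.

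For $t\in(0,1)$ we want closedness of $\{V\geq t\}$ and openness of $\{V>t\}$. The first is closed because $y\mapsto C_{\nu_t}(y)$ is continuous: it is Lipschitz, being piecewise linear across finitely many cones. For the second we use $\{V>t\}=\bigcup_{s>t}\{V\geq s\}$, which has the wrong form for openness. Instead we argue $\{V>t\}=\{C_{\nu_t}>t\}$ via strict single crossing, with the endpoint $\mathbf1$ included since $C_{\nu_t}(\mathbf1)=1>t$. That set is open by continuity of $C_{\nu_t}$. Similarly $\{V\le t\}=\{C_{\nu_t}\le t\}$ is closed. The delicate points are the endpoint acts and boundary acts. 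Lemma \ref{lem:single-crossing} already covers nonconstant boundary acts, so the only care needed is that $\mathbf0$ and $\mathbf1$ land on the correct side of every threshold, which the direct computation above provides. With upper and lower level sets closed for every $t$, $V$ is continuous and Axiom \ref{ax:weak-order-continuity} follows.

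Finally, for comonotonic mixture-betweenness, take comonotonic $x,y$ in a common cone $K_\sigma$ with $V(x)\geq V(y)$, and set $z=\lambda x+(1-\lambda)y$. On $K_\sigma$, each $C_{\nu_v}$ is linear, so $h_z(v)=\lambda h_x(v)+(1-\lambda)h_y(v)$, where $h_w(v):=C_{\nu_v}(w)-v$.

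If $V(x)=V(y)=r$, then $h_z(r)=0$, so $V(z)=r$. This requires $z\notin\{\mathbf0,\mathbf1\}$; when one endpoint is involved, comonotonic mixtures with $\mathbf0$ or $\mathbf1$ are handled by the same signs.

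If $V(x)>V(y)$, then for $v\in[V(y),V(x)]$ with $v$ at either end, the signs show $h_z(V(y))>0$ and $h_z(V(x))<0$. Concretely, at $v=V(y)$ we have $h_y=0$ and $h_x>0$, with $\lambda>0$. By the threshold characterization, $V(y)<V(z)<V(x)$. Cases where $x=\mathbf1$ or $y=\mathbf0$ are treated through the direct values $C_{\nu_v}(\mathbf1)-v=1-v>0$ and $C_{\nu_v}(\mathbf0)-v=-v<0$, so the same sign argument applies uniformly.
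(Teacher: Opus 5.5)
Your proof is correct. For weak order, strict ordering of constants, boundary interiority, statewise monotonicity and comonotonic mixture-betweenness it is essentially the paper's argument. Single crossing converts comparisons of $V$ into signs of $C_{\nu_v}(\cdot)-v$, and linearity of $C_{\nu_v}$ on a common rank cone makes $h_z$ the corresponding mixture of $h_x$ and $h_y$.

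The genuine difference is the continuity step.
\begin{itemize}
\item The paper proves sequential continuity of $V$. For $x^k\to x$ it traps $V(x^k)$ in $(a,b)$ around $V(x)$ via single crossing. It then passes to the limit in the root equation using joint continuity of $(x,v)\mapsto C_{\nu_v}(x)-v$; this is where admissibility condition (ii), continuity of $v\mapsto\nu_v$, is invoked. The endpoints are handled separately through $\min_s x_s\le V(x)\le\max_s x_s$.
\item You instead identify the level sets directly: $\{V\ge t\}=\{C_{\nu_t}\ge t\}$ and $\{V\le t\}=\{C_{\nu_t}\le t\}$ for $t\in(0,1)$, with $\mathbf0$ and $\mathbf1$ checked by hand. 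Closedness of the contour sets then needs only continuity in $x$ of one fixed Choquet functional.
\end{itemize}
Your route treats the endpoints uniformly. It also shows that, in this converse, the continuity axiom already follows from the unique-root condition (iii) through Lemma~\ref{lem:single-crossing}, whose proof does not use (ii). The paper's appeal to schedule continuity is therefore dispensable here; its first trapping step is your argument in sequential form. Its formulation does, however, make the implicit-function character of $V$ more visible.

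Three details to tighten:
\begin{itemize}
\item For $t>1$ the set $\{V\ge t\}$ is empty, not $\{\mathbf1\}$. It is closed either way.
\item In the betweenness step, the claim that \emph{the same sign argument applies uniformly} is not literally true at the boundary levels, because the schedule has no $\nu_0$ or $\nu_1$. If $x=\mathbf1$ and $V(y)\in(0,1)$, the sign at level $V(y)$ gives $V(z)>V(y)$. But $V(z)<1$ must come from $z\neq\mathbf1$ together with $V(X\setminus\{\mathbf0,\mathbf1\})\subseteq(0,1)$. The case $y=\mathbf0$ is symmetric, and the case $x=\mathbf1$, $y=\mathbf0$ needs $V(\lambda\mathbf1)=\lambda$ directly.
\item In the equal-value case, $V(x)=V(y)\in\{0,1\}$ forces $x=y$. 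For $V(x)=V(y)\in(0,1)$, the mixture $z$ is automatically a nonendpoint act, since $z=\mathbf0$ or $z=\mathbf1$ would force $x=y=z$.
\end{itemize}
These are exactly the cases the paper writes out. They are immediate, so this is a matter of presentation rather than a missing idea.
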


\begin{proof}
Completeness and transitivity follow directly from the real-valued
representation. For every $c\in(0,1)$, normalized capacities satisfy
$C_{\nu_v}(c\mathbf1)=c$, so the unique root for $c\mathbf1$ is $v=c$.
Together with the definitions at $\mathbf0$ and $\mathbf1$, this gives
$V(c\mathbf1)=c$ for every $c\in[0,1]$ and therefore strict ordering of
constant acts.

We next establish continuity of $V$ on the full act domain. The finite Choquet
integral is jointly continuous in $(x,\nu)$. In particular, \(|C_\nu(x)-C_\nu(y)|
    \leq
    \|x-y\|_\infty\),
and, for fixed $x$, \(|C_\nu(x)-C_\mu(x)|
    \leq
    \operatorname{osc}(x)
    \|\nu-\mu\|_\infty\), where $\operatorname{osc}(x)=\max_s x_s-\min_s x_s$. Thus continuity of
$v\mapsto\nu_v$ implies joint continuity of
$(x,v)\mapsto C_{\nu_v}(x)-v$ on $X\times(0,1)$.

Let $x^k\to x$ with $x\notin\{\mathbf0,\mathbf1\}$. Choose
$0<a<V(x)<b<1$. By Lemma \ref{lem:single-crossing},
\(
    C_{\nu_a}(x)-a>0,
    \;
    C_{\nu_b}(x)-b<0.
\)
For all sufficiently large $k$, the same strict inequalities hold with
$x^k$ in place of $x$. Lemma \ref{lem:single-crossing} then implies
$V(x^k)\in(a,b)$. Any subsequence therefore has a further subsequence with
$V(x^{k_j})\to r\in[a,b]$. Passing to the limit in the root equation gives
$C_{\nu_r}(x)=r$, so uniqueness yields $r=V(x)$. Hence
$V(x^k)\to V(x)$.

At the endpoints, the root equation and normalization imply
\(
    \min_s x_s\leq V(x)\leq\max_s x_s
\)
for every nonconstant $x$. Therefore, if $x^k\to\mathbf0$,
$0\leq V(x^k)\leq\max_sx_s^k\to0$; similarly, if
$x^k\to\mathbf1$, then $\min_sx_s^k\leq V(x^k)\leq1$ and hence
$V(x^k)\to1$. Thus $V$ is continuous on all of $X$.

For statewise monotonicity, suppose $x\geq y$. If $y=\mathbf0$, then
$V(y)=0\leq V(x)$; if $x=\mathbf1$, then
$V(x)=1\geq V(y)$. Otherwise both $x$ and $y$ are nonendpoint acts. Put
$v:=V(y)\in(0,1)$. Monotonicity of the Choquet integral gives
\(
    C_{\nu_v}(x)
    \geq
    C_{\nu_v}(y)
    =v.
\)
Lemma \ref{lem:single-crossing} therefore implies
\(
    V(x)\geq v=V(y).
\)

Finally, let $x,y$ be comonotonic and let
$z_\lambda:=\lambda x+(1-\lambda)y$ for $\lambda\in(0,1)$. Put
$a:=V(x)$ and $b:=V(y)$. If $1>a>b>0$, the usual comonotonic-affinity
argument applies: at level $a$,
\(
    C_{\nu_a}(z_\lambda)<a,
\)
while at level $b$,
\(
    C_{\nu_b}(z_\lambda)>b,
\)
so Lemma \ref{lem:single-crossing} gives
$b<V(z_\lambda)<a$.

If $a=1>b>0$, then $x=\mathbf1$. At level $b$,
$C_{\nu_b}(z_\lambda)>b$, while $z_\lambda\neq\mathbf1$, so
$b<V(z_\lambda)<1=a$. The case $1>a>b=0$ is symmetric because then
$y=\mathbf0$. If $a=1$ and $b=0$, then
$z_\lambda=\lambda\mathbf1$ and $V(z_\lambda)=\lambda$. Thus strict
betweenness holds in every case.

If $a=b=v\in(0,1)$, comonotonic affinity gives
\(
    C_{\nu_v}(z_\lambda)
    =
    \lambda v+(1-\lambda)v
    =v,
\)
so uniqueness gives $V(z_\lambda)=v$. If $a=b\in\{0,1\}$, then both
acts equal the corresponding endpoint constant and the conclusion is
immediate. Hence comonotonic mixture-betweenness holds on all of $X$.

Finally, if $x\in\partial X\setminus\{\mathbf0,\mathbf1\}$, admissibility
gives $V(x)\in(0,1)$, and $V(V(x)\mathbf1)=V(x)$. Thus
$x\sim V(x)\mathbf1$, establishing Axiom \ref{ax:boundary-interiority}.
\end{proof}

This closes the full-domain converse: admissibility of the interior capacity
schedule is sufficient for the complete axiomatic package on $X$.

\section{Proof of the main representation theorem}
\label{app:representation-proofs}

The auxiliary ingredients have been established in Appendix
\ref{app:auxiliary}. We now combine them to prove the full-domain
characterization.

\begin{proof}[Proof of Theorem \ref{thm:implicit-choquet}]
\emph{Forward implication.}\par
Lemma \ref{lem:certainty-equivalent} gives the continuous certainty-equivalent
representation $V:X\to[0,1]$. By Axiom \ref{ax:boundary-interiority}, every
$x\in X\setminus\{\mathbf0,\mathbf1\}$ has $V(x)\in(0,1)$. The local-linear,
orientation, probability-weight, and gluing results in Appendix
\ref{app:auxiliary} then construct a normalized monotone capacity $\nu_v$ at
every interior certainty-equivalent level. Proposition \ref{prop:capacity-construction}
gives
\(
    x\succeq v\mathbf1
    \;\text{if and only if}\;
    C_{\nu_v}(x)\geq v
    \; \text{for all}\; x\in X,
\)
which is \eqref{eq:main-threshold}. The proof of Lemma
\ref{lem:local-threshold} also establishes the corresponding strict and equality
relations; together with
$C_{\nu_v}(x)=q^\sigma(v)\cdot x$ on any rank cone containing $x$, these give the corresponding strict and equality threshold relations. Proposition
\ref{prop:capacity-continuity} gives continuity of
$v\mapsto\nu_v$.

Fix $x\in X\setminus\{\mathbf0,\mathbf1\}$. Choose a rank cone containing
$x$. Lemma \ref{lem:local-threshold} together with Proposition
\ref{prop:capacity-construction} gives
\(
    x\sim v\mathbf1
    \;\text{if and only if}\;
    C_{\nu_v}(x)=v.
\)
Since $V(x)\in(0,1)$, $v=V(x)$ is a root of \(C_{\nu_v}(x)=v\).
If another $v\in(0,1)$ were a root, the same
equivalence would imply $x\sim v\mathbf1$, contradicting uniqueness of the
certainty equivalent. Thus the family is admissible in the sense of Definition
\ref{def:admissible-family}.

It remains to establish uniqueness of the admissible representing family. Let
$\{\mu_v:v\in(0,1)\}$ be any other admissible representing family satisfying
part (ii) for the same preference, and let $W$ be its root-defined map.
For every $c\in(0,1)$, normalization gives
$C_{\mu_t}(c\mathbf1)=c$ for every $t\in(0,1)$, so the unique root for the
constant act $c\mathbf1$ is $t=c$. Together with the endpoint definitions,
$W(c\mathbf1)=c$ for every $c\in[0,1]$. Since $W$ represents the same
preference, $x\sim W(x)\mathbf1$ for every $x\in X$. Lemma
\ref{lem:certainty-equivalent} gives a unique certainty equivalent on this
normalization, and therefore $W(x)=V(x)$ for every $x\in X$.

Apply Lemma \ref{lem:single-crossing} to the admissible family $\{\mu_v\}$.
For every $v\in(0,1)$ and every nonendpoint act $x$,
\(C_{\mu_v}(x)\geq v \;\Longleftrightarrow\; W(x)\geq v \;\Longleftrightarrow\; V(x)\geq v \;\Longleftrightarrow\; x\succeq v\mathbf1\).
The endpoint acts satisfy the same equivalence by normalization. Thus, at each
level $v$, $\mu_v$ and $\nu_v$ generate the same threshold comparisons.
Proposition \ref{prop:revealed-capacity} then implies $\mu_v=\nu_v$ for every
$v\in(0,1)$. Hence the admissible representing family is unique.

The endpoint values are
$V(\mathbf0)=0$ and $V(\mathbf1)=1$, and
\eqref{eq:main-value-representation} is exactly the representation in Lemma
\ref{lem:certainty-equivalent}.

\emph{Reverse implication.}\par
Let an admissible family be given and define $V$ on $X$ as in part (ii) of the
theorem. Lemma \ref{lem:admissible-generates-axioms} shows that the preference
represented by $V$ satisfies Axioms
\ref{ax:weak-order-continuity}--\ref{ax:boundary-interiority} on the full
domain $X$. This proves the reverse implication and completes the proof of the
theorem.
\end{proof}

\paragraph{Threshold consequences.}
Take $v\in(0,1)$. If
$x\notin\{\mathbf0,\mathbf1\}$, Lemma \ref{lem:single-crossing} gives
\(V(x)>v \Longleftrightarrow C_{\nu_v}(x)>v, \; V(x)=v \Longleftrightarrow C_{\nu_v}(x)=v\),
and
\(V(x)<v \Longleftrightarrow C_{\nu_v}(x)<v\).
Since $V(v\mathbf1)=v$, these are exactly the corresponding strict,
equality, and reverse strict preference relations; the weak relation
\eqref{eq:main-threshold} follows as well. For
$x=\mathbf0$ or $x=\mathbf1$, the same relations follow directly from
normalization of the capacity. Hence all threshold statements hold for every
$x\in X$.

\section{Proofs of the remaining results}
\label{app:remaining-proofs}

\begin{proof}[Proof of Remark \ref{rem:admissibility-sufficient}]
We proceed in four steps.

\emph{Step 1: endpoint limits.}
Because \(S\) is finite, a capacity can be identified with a vector in
\([0,1]^{2^{|S|}}\), endowed with the sup norm. The set of normalized monotone
capacities is closed in this finite-dimensional space and hence complete.

Let \(v_n\downarrow0\). The Lipschitz condition implies
\(
    \|\nu_{v_n}-\nu_{v_m}\|_\infty
    \leq
    L|v_n-v_m|.
\)
Since \((v_n)\) is Cauchy, \((\nu_{v_n})\) is Cauchy under the sup norm and
therefore converges to some normalized monotone capacity, denoted by
\(\nu_0\). The limit is independent of the sequence approaching \(0\): if
\(w_n\downarrow0\) and \(\nu_{w_n}\to\widetilde\nu_0\), then
\(
    \|\nu_{v_n}-\nu_{w_n}\|_\infty
    \leq
    L|v_n-w_n|
    \longrightarrow0,
\)
so \(\nu_0=\widetilde\nu_0\). Hence
\(
    \nu_0
    :=
    \lim_{v\downarrow0}\nu_v
\)
is well defined. The same argument gives the unique limit
\(
    \nu_1
    :=
    \lim_{v\uparrow1}\nu_v.
\)
Indeed, taking limits in the Lipschitz inequality also gives
\(
    \|\nu_v-\nu_0\|_\infty\leq Lv
    \;\text{and}\;
    \|\nu_v-\nu_1\|_\infty\leq L(1-v).
\)

\emph{Step 2: endpoint signs.}
Fix
\(
    x\in X\setminus\{\mathbf0,\mathbf1\}.
\)
Since \(x\neq\mathbf0\), there exists a nonempty event \(A\subseteq S\) and
a number \(m>0\) such that
\(
    x\geq m\mathbf1_A
\)
statewise; for example, take \(A\) to be the set of states attaining
\(\max_s x_s\) and \(m=\max_s x_s\). By monotonicity of the Choquet integral,
\(
    C_{\nu_0}(x)
    \geq
    C_{\nu_0}(m\mathbf1_A)
    =
    m\nu_0(A)
    >0.
\)
Thus
\(
    C_{\nu_0}(x)>0.
\)

Likewise, since \(x\neq\mathbf1\), the event
\(
    B:=\{s\in S:x_s<1\}
\)
is nonempty. Because \(S\) is finite,
\(
    \delta
    :=
    \min_{s\in B}(1-x_s)
    >0.
\)
Hence
\(
    x
    \leq
    \mathbf1-\delta\mathbf1_B.
\)
By monotonicity,
\(
    C_{\nu_1}(x)
    \leq
    C_{\nu_1}(\mathbf1-\delta\mathbf1_B).
\)
The act on the right takes the value \(1-\delta\) on \(B\) and \(1\) on
\(B^c\), so
\(
    C_{\nu_1}(\mathbf1-\delta\mathbf1_B)
    =
    1-\delta+\delta\nu_1(B^c).
\)
Since \(B\neq\varnothing\), the event \(B^c\) is proper, and therefore
\(
    \nu_1(B^c)<1.
\)
Consequently,
\(
    C_{\nu_1}(x)
    \leq
    1-\delta+\delta\nu_1(B^c)
    <1.
\)
Thus
\(
    C_{\nu_1}(x)<1.
\)

\emph{Step 3: existence and uniqueness of the implicit root.}
Fix \(x\in X\), and define
\(
    h_x(v)
    :=
    C_{\nu_v}(x)-v,
    \; v\in(0,1).
\)
For any two capacities \(\nu,\mu\), the finite Choquet integral satisfies
\(
    |C_\nu(x)-C_\mu(x)|
    \leq
    \operatorname{osc}(x)\|\nu-\mu\|_\infty,
\)
where
\(
    \operatorname{osc}(x)
    :=
    \max_sx_s-\min_sx_s
    \leq1.
\)
Hence, for \(w>v\),
\begin{align*}
    h_x(w)-h_x(v)
    &=
    C_{\nu_w}(x)-C_{\nu_v}(x)-(w-v)\\
    &\leq
    |C_{\nu_w}(x)-C_{\nu_v}(x)|-(w-v)\\
    &\leq
    \operatorname{osc}(x)\|\nu_w-\nu_v\|_\infty-(w-v)\\
    &\leq
    L(w-v)-(w-v)\\
    &=
    -(1-L)(w-v)
    <0.
\end{align*}
Thus \(h_x\) is strictly decreasing.

Now suppose
\(
    x\notin\{\mathbf0,\mathbf1\}.
\)
By continuity of the Choquet integral with respect to the capacity,
\(
    \lim_{v\downarrow0}h_x(v)
    =
    C_{\nu_0}(x)
    >0,
\)
whereas
\(
    \lim_{v\uparrow1}h_x(v)
    =
    C_{\nu_1}(x)-1
    <0.
\)
Therefore \(h_x\) is positive for \(v\) sufficiently close to \(0\) and
negative for \(v\) sufficiently close to \(1\). Since \(h_x\) is continuous,
the intermediate value theorem gives at least one
\(v^*\in(0,1)\) satisfying
\(
    C_{\nu_{v^*}}(x)=v^*.
\)
Since \(h_x\) is strictly decreasing, this root is unique.

For an interior constant act \(x=c\mathbf1\), \(c\in(0,1)\), normalization
of every capacity gives
\(
    C_{\nu_v}(c\mathbf1)=c
    \;\text{for every }v,
\)
so
\(
    C_{\nu_v}(c\mathbf1)=v
    \;\text{if and only if}\;
    v=c.
\)
Hence its unique root is \(v=c\).

It follows that the schedule satisfies the unique-root requirement in
Definition \ref{def:admissible-family}. Together with the assumed Lipschitz
continuity and the fact that every \(\nu_v\) is a normalized monotone capacity,
the family is admissible.

\emph{Step 4: the explicit nonconstant family.}
Let
\(
    \nu_v
    =
    [1-\varepsilon g(v)]\nu^0
    +
    \varepsilon g(v)\nu^1,
    \; v\in(0,1),
\)
where \(\nu^0\neq\nu^1\),
\(g:[0,1]\to[0,1]\) is nonconstant and Lipschitz, and
\(
    \varepsilon\operatorname{Lip}(g)<1.
\)
Since
\(
    0\leq\varepsilon g(v)\leq1,
\)
each \(\nu_v\) is a convex combination of normalized monotone capacities and
is therefore itself a normalized monotone capacity.

Moreover, \( \|\nu_w-\nu_v\|_\infty=
    \varepsilon|g(w)-g(v)|
    \|\nu^1-\nu^0\|_\infty \leq
    \varepsilon\operatorname{Lip}(g)|w-v|\), because
\(
    \|\nu^1-\nu^0\|_\infty\leq1.
\)
Thus the schedule is Lipschitz with constant
\(
    L
    \leq
    \varepsilon\operatorname{Lip}(g)
    <1.
\)

Since \(g\) is Lipschitz, it is continuous on \([0,1]\), and the endpoint
limits are
\(
    \nu_0
    =
    [1-\varepsilon g(0)]\nu^0
    +
    \varepsilon g(0)\nu^1
\)
and
\(
    \nu_1
    =
    [1-\varepsilon g(1)]\nu^0
    +
    \varepsilon g(1)\nu^1.
\)
For every nonempty proper event \(A\),
\(
    0<\nu^k(A)<1,
    \; k=0,1.
\)
Hence every convex combination of \(\nu^0(A)\) and \(\nu^1(A)\) also lies
strictly between \(0\) and \(1\). In particular,
\(
    \nu_0(A)>0
    \;\text{for every nonempty }A,
\)
and
\(
    \nu_1(A)<1
    \;\text{for every proper }A.
\)
The preceding steps therefore imply that \(\{\nu_v\}_{v\in(0,1)}\) is
admissible.

The family is nonconstant because \(g\) is nonconstant,
\(\varepsilon>0\), and \(\nu^0\neq\nu^1\): there exists some event \(A\)
with \(\nu^0(A)\neq\nu^1(A)\), and hence \(v\mapsto\nu_v(A)\) is nonconstant. Finally, if \(\nu^0\) and \(\nu^1\) are convex capacities, then for every
\(v\), \(\nu_v\) is a convex combination of convex capacities. Since
supermodularity is preserved under convex combinations, every \(\nu_v\) is
convex.
\end{proof}

\begin{proof}[Proof of Proposition \ref{prop:global-betweenness}]
If mixture-betweenness holds globally, the maintained boundary-interiority and
strict-constant axioms make \(\mathbf0\) and \(\mathbf1\) respectively strict
worst and strict best acts, by Lemma \ref{lem:certainty-equivalent}. Apply
\citet{Payro2025}'s Theorem 3.1 directly to \(X\), rather than cone by cone.
Mixture-linearity and the normalization at
\(\mathbf0\) and \(\mathbf1\) give a global linear functional
\(
    p(v)\cdot x
\)
with \(p(v)\in\Delta(S)\) by statewise monotonicity. Continuity of the
family follows directly from \citet{Payro2025}'s continuity in the level argument: for
each state $s$, writing $\mathbf e_s$ for the singleton indicator gives
$ p_s(v)=\Phi(\mathbf e_s,v)$, so every coordinate $p_s(v)$ is continuous
on $(0,1)$. The additive capacity
\(
    \widetilde\nu_v(A):=\sum_{s\in A}p_s(v)
\)
therefore generates the same level-\(v\) threshold representation. Uniqueness
in Proposition \ref{prop:revealed-capacity} implies
\(\nu_v=\widetilde\nu_v\).

Conversely, suppose every \(\nu_v\) is additive, so
\(
    C_{\nu_v}(x)=p(v)\cdot x
\)
is affine in \(x\) on the whole utility domain. Let $a:=V(x)$,
$b:=V(y)$, and $z_\lambda:=\lambda x+(1-\lambda)y$ for
$\lambda\in(0,1)$. If $1>a>b>0$, then the threshold representation gives
\(
    p(a)\cdot x=a
    \;\text{and}\;
    p(a)\cdot y<a,
\)
and therefore
\(
    p(a)\cdot z_\lambda<a.
\)
Likewise,
\(
    p(b)\cdot x>b
    \; \text{and}\;
    p(b)\cdot y=b,
\)
so $p(b)\cdot z_\lambda>b$. The single-crossing orientation then yields
\(
    b<V(z_\lambda)<a.
\)
If $a=1>b>0$, then $x=\mathbf1$; the level-$b$ inequality gives
$V(z_\lambda)>b$, while $z_\lambda\neq\mathbf1$ implies
$V(z_\lambda)<1=a$. The case $1>a>b=0$ is symmetric, and if
$a=1$, $b=0$, then $z_\lambda=\lambda\mathbf1$ and
$V(z_\lambda)=\lambda$. Finally, if $a=b=v\in(0,1)$, global affinity gives
\(
    p(v)\cdot z_\lambda
    =\lambda v+(1-\lambda)v
    =v,
\)
so uniqueness of the implicit root gives $V(z_\lambda)=v$; the endpoint
equality cases are immediate. Hence mixture-betweenness holds for arbitrary
pairs in $X$.
\end{proof}

\begin{proof}[Proof of Proposition \ref{prop:fixed-capacity}]
Suppose first that preferences satisfy comonotonic independence. We first show
that the certainty-equivalent map is translation invariant. Fix $x\in X$ and
$t$ such that $x+t\mathbf1\in X$. The case $t=0$ is immediate.

Suppose $t\in(0,1)$. Feasibility gives
$\max_s x_s\leq1-t$. Put $\alpha:=1-t$ and define
\(f:=\frac{x}{\alpha}, \; g:=\frac{V(x)}{\alpha}\mathbf1\).
Statewise monotonicity gives
$V(x)\leq\max_s x_s\leq\alpha$, so $f,g\in X$. Moreover,
\(x=\alpha f+(1-\alpha)\mathbf0, \; V(x)\mathbf1=\alpha g+(1-\alpha)\mathbf0\).
Since $x\sim V(x)\mathbf1$, comonotonic independence applied in both
directions to the pairwise comonotonic acts $f,g,\mathbf0$ gives $f\sim g$.
Applying comonotonic independence again, now with common component
$\mathbf1$, yields
\(x+t\mathbf1 =\alpha f+(1-\alpha)\mathbf1 \sim \alpha g+(1-\alpha)\mathbf1 =[V(x)+t]\mathbf1\).
Uniqueness of the certainty equivalent therefore gives
\(
    V(x+t\mathbf1)=V(x)+t.
\)

If $t=-s$ with $s\in(0,1)$, feasibility gives
$\min_s x_s\geq s$, and hence $V(x)\geq s$. Put
$\alpha:=1-s$ and define
\(f:=\frac{x-s\mathbf1}{\alpha}, \; g:=\frac{V(x)-s}{\alpha}\mathbf1\).
Then $f,g\in X$ and
\(x=\alpha f+(1-\alpha)\mathbf1, \; V(x)\mathbf1=\alpha g+(1-\alpha)\mathbf1\).
From $x\sim V(x)\mathbf1$, comonotonic independence with common component
$\mathbf1$ gives $f\sim g$. Replacing the common component by $\mathbf0$
therefore gives
\(x-s\mathbf1 =\alpha f+(1-\alpha)\mathbf0 \sim \alpha g+(1-\alpha)\mathbf0 =[V(x)-s]\mathbf1\).
Thus $V(x-s\mathbf1)=V(x)-s$. The endpoint translations $|t|=1$ can occur
only between $\mathbf0$ and $\mathbf1$ and are immediate. Hence $V$ is
certainty translation invariant on its feasible domain.

We now show directly that the capacity family is constant. Fix interior
$v,w$, an event $A\subseteq S$, and put $c:=w-v$ and
$r:=\nu_v(A)$. Choose $\varepsilon>0$ sufficiently small that the binary act
below and its translate by $c\mathbf1$ both belong to $X$, and set
\(a:=\varepsilon(1-r), \; b:=\varepsilon r, \; x:=(v+a)\mathbf1_A+(v-b)\mathbf1_{A^c}\).
Here $a,b\geq0$ and $a+b=\varepsilon>0$. The binary-act formula gives
\(C_{\nu_v}(x) =v-b+(a+b)r =v\).
Since $C_{\nu_v}(x)=v$, the equality threshold relation gives
$x\sim v\mathbf1$, so $V(x)=v$. Translation invariance then gives
\(x+c\mathbf1\sim w\mathbf1\).
The translated act has the same increments $a,b$ around $w$. Applying the
same level-$w$ equality relation and the binary Choquet formula gives
\(-b+(a+b)\nu_w(A)=0\).
Therefore
\(\nu_w(A)=\frac{b}{a+b}=r=\nu_v(A)\).
This argument also covers $r\in\{0,1\}$, in which case one of $a,b$ is zero.
Since $A,v,w$ were arbitrary, $\nu_v=\nu_w$ for all interior levels. Writing
the common capacity as $\nu$, Theorem \ref{thm:implicit-choquet} gives
\(
    V(x)=C_\nu(x)
\)
for every $x\in X$.

For any nontrivial event $A$, the indicator $\mathbf1_A$ is a nonconstant
boundary act. Axiom \ref{ax:boundary-interiority} and
$V(\mathbf1_A)=C_\nu(\mathbf1_A)=\nu(A)$ therefore imply
\(
    0<\nu(A)<1.
\)

Conversely, suppose $\nu_v=\nu$ for every interior $v$, with
$0<\nu(A)<1$ for every nontrivial event. Theorem
\ref{thm:implicit-choquet} gives $V(x)=C_\nu(x)$ on $X$. If $x,y,z$ are
pairwise comonotonic, comonotonic affinity of the Choquet integral gives
\(C_\nu\!\left(\alpha x+(1-\alpha)z\right) =\alpha C_\nu(x)+(1-\alpha)C_\nu(z)\)
and
\(C_\nu\!\left(\alpha y+(1-\alpha)z\right) =\alpha C_\nu(y)+(1-\alpha)C_\nu(z)\).
Subtracting the two affine identities shows that the sign of
\(C_\nu(x)-C_\nu(y)\) is preserved when both acts are mixed with the common
component $z$, which is comonotonic independence. The strict event bounds are exactly the
maintained boundary nondegeneracy in the fixed-capacity subclass.
\end{proof}

\begin{proof}[Proof of Corollary \ref{cor:seu-intersection}]
Proposition~\ref{prop:global-betweenness} implies that global
mixture-betweenness is equivalent to additivity of every local capacity:
\(
    \nu_v(A)=\sum_{s\in A}p_s(v)
\)
for some probability vector $p(v)$. Proposition~\ref{prop:fixed-capacity}
implies that comonotonic independence is equivalent to independence of the
capacity from the certainty-equivalent level. Imposing both therefore gives a single
probability vector $p\in\Delta(S)$ such that
\(
    \nu_v(A)=\sum_{s\in A}p_s
    \;\text{for every }A\subseteq S\text{ and }v\in(0,1).
\)
Proposition \ref{prop:fixed-capacity} implies that every singleton has
strictly positive weight, so $p$ has full support. The Choquet integral of an
additive capacity is the ordinary expectation, so
Theorem~\ref{thm:implicit-choquet} reduces to
\(
    V(x)=p\cdot x
\), which is full-support subjective expected utility on the maintained affine
consequence scale. Conversely, any full-support subjective expected utility is
globally affine in mixtures, satisfies both global mixture-betweenness and
comonotonic independence, and satisfies Axiom \ref{ax:boundary-interiority}.
\end{proof}

\begin{proof}[Proof of Proposition \ref{prop:local-ua}]
Suppose first that $C_{\nu_v}$ is concave. If $x,y\sim v\mathbf1$, then
Theorem \ref{thm:implicit-choquet} gives
\(
    C_{\nu_v}(x)=C_{\nu_v}(y)=v.
\)
Hence
\(
    C_{\nu_v}(\lambda x+(1-\lambda)y)
    \geq
    \lambda v+(1-\lambda)v
    =v,
\)
and the threshold representation implies \( \lambda x+(1-\lambda)y
    \succeq
    v\mathbf1\).

Conversely, suppose local uncertainty aversion holds. Take arbitrary
$x,y\in X$ and put
\(
    a:=C_{\nu_v}(x)
    \;\text{and}\;
    b:=C_{\nu_v}(y).
\)
Because $v$ is interior, for sufficiently small $\varepsilon>0$ the profiles
\(
    \widetilde x
    :=v\mathbf1+\varepsilon(x-a\mathbf1),
    \;\text{and}\;
    \widetilde y
    :=v\mathbf1+\varepsilon(y-b\mathbf1)
\)
belong to $X$. Using the canonical extension of the Choquet integral to
$\mathbb R^S$ stated after \eqref{eq:choquet-integral}---only the intermediate
centered vectors $x-a\mathbf1$ and $y-b\mathbf1$ need lie outside $X$---its
translation invariance and positive homogeneity give
\(
    C_{\nu_v}(\widetilde x)
    =C_{\nu_v}(\widetilde y)
    =v.
\)
Thus both profiles are indifferent to $v\mathbf1$. Local uncertainty
aversion implies
\(
    C_{\nu_v}
    \bigl(\lambda\widetilde x+(1-\lambda)\widetilde y\bigr)
    \geq v.
\)
Using the same extended-domain identities once more and dividing by
$\varepsilon$ yields
\(
    C_{\nu_v}(\lambda x+(1-\lambda)y)
    \geq
    \lambda C_{\nu_v}(x)+(1-\lambda)C_{\nu_v}(y).
\)
Hence $C_{\nu_v}$ is concave. For finite capacities, concavity of the
Choquet integral is equivalent to supermodularity of the capacity, giving
(i)--(iii).

If local uncertainty neutrality holds, the same scaling argument gives both
concavity and convexity of $C_{\nu_v}$, hence affinity. A finite Choquet
integral is affine on the whole domain if and only if its capacity is additive.
The converse is immediate.
\end{proof}

\begin{proof}[Proof of Corollary \ref{cor:implicit-multiple-priors}]
By Proposition \ref{prop:local-ua}, each $\nu_v$ is convex. The standard
core representation of a convex capacity gives
\(
    C_{\nu_v}(x)
    =
    \min_{p\in\operatorname{core}(\nu_v)}p\cdot x.
\)
Substituting this equality into Theorem \ref{thm:implicit-choquet} gives the
result.
\end{proof}

\begin{proof}[Proof of Proposition \ref{prop:core-expansion}]
For a convex capacity,
\(
    \nu_z(A)=\min_{p\in\mathcal P_z}p(A)
    \;
    \text{for every }A.
\)
Moreover, by definition,
\(
    \mathcal P_z
    =
    \{p\in\Delta(S):p(A)\geq\nu_z(A)\text{ for all }A\}.
\)
Hence (i) relaxes every lower-probability constraint defining
$\mathcal P_v$, which is equivalent to $\mathcal P_v\subseteq\mathcal P_w$.
The core representation then gives
\(
    C_{\nu_w}(x)
    =\min_{p\in\mathcal P_w}p\cdot x
    \leq
    \min_{p\in\mathcal P_v}p\cdot x
    =C_{\nu_v}(x),
\)
so (ii) implies (iii). Finally, applying (iii) to indicator acts
$\mathbf1_A$ gives
$\nu_w(A)\leq\nu_v(A)$, proving (iii) implies (i).
\end{proof}

\begin{proof}[Proof of Proposition \ref{prop:translation-fixed}]
If (ii) holds, writing the common capacity as $\nu$ gives (iii), and
translation invariance of the Choquet integral then gives (iii)$\Rightarrow$(i).
It remains to show (i)$\Rightarrow$(ii); together these implications establish
the equivalence.

Fix interior $v,w$ and an event $A$, and put $c:=w-v$. Let
$r:=\nu_v(A)$. Choose $\varepsilon>0$ small enough that the following binary
act and its translate remain in $X$, and set
\(a:=\varepsilon(1-r), \; b:=\varepsilon r, \; x:=(v+a)\mathbf1_A+(v-b)\mathbf1_{A^c}\).
Here $a,b\geq0$ and $a+b=\varepsilon>0$; when $r\in\{0,1\}$ one increment is
zero, a case explicitly covered by Proposition \ref{prop:revealed-capacity}.
The binary-act formula therefore gives
\(
    C_{\nu_v}(x)
    =v-b+(a+b)r
    =v,
\)
so $V(x)=v$. Translation invariance implies
$V(x+c\mathbf1)=w$. The translated act has exactly the same increments
$a,b$ around $w$. Evaluating its level-$w$ indifference gives
\(
    -b+(a+b)\nu_w(A)=0.
\)
Thus $\nu_w(A)=b/(a+b)=r=\nu_v(A)$. Since $A,v,w$ were arbitrary,
$\nu_v$ is constant across certainty-equivalent levels.
\end{proof}

\begin{proof}[Proof of Corollary \ref{cor:wci-fixed}]
We first show that Weak Certainty Independence implies certainty translation
invariance. Fix $x\in X$ and $t$ such that $x+t\mathbf1\in X$. The case
$t=0$ is immediate.

Suppose first that $t\in(0,1)$. Feasibility gives
$\max_s x_s\leq1-t$, and therefore $V(x)\leq1-t$. Put
$\alpha:=1-t$ and define
\(f:=\frac{x}{\alpha}, \; g:=\frac{V(x)}{\alpha}\mathbf1\).
Then $f,g\in X$ and
\(
    x=\alpha f+(1-\alpha)\mathbf0
    \; \text{and}\;
    V(x)\mathbf1=\alpha g+(1-\alpha)\mathbf0.
\)
Since $x\sim V(x)\mathbf1$, applying Weak Certainty Independence in both
directions and replacing the common constant $\mathbf0$ by $\mathbf1$
gives
\(
    x+t\mathbf1
    =\alpha f+(1-\alpha)\mathbf1
    \sim
    \alpha g+(1-\alpha)\mathbf1
    =[V(x)+t]\mathbf1.
\)
Uniqueness of the certainty equivalent therefore yields
$V(x+t\mathbf1)=V(x)+t$.

If $t=-s$ with $s\in(0,1)$, feasibility gives
$\min_s x_s\geq s$ and hence $V(x)\geq s$. Put
$\alpha:=1-s$ and define
\(f:=\frac{x-s\mathbf1}{\alpha}, \; g:=\frac{V(x)-s}{\alpha}\mathbf1\).
Then $f,g\in X$ and
\(
    x=\alpha f+(1-\alpha)\mathbf1
    \; \text{and}\;
    V(x)\mathbf1=\alpha g+(1-\alpha)\mathbf1.
\)
Applying Weak Certainty Independence in both directions and replacing the
common constant $\mathbf1$ by $\mathbf0$ gives
\(
    x-s\mathbf1\sim[V(x)-s]\mathbf1,
\)
so again $V(x-s\mathbf1)=V(x)-s$. The endpoint translations $|t|=1$
can occur only between $\mathbf0$ and $\mathbf1$ and are immediate. Thus
$V$ satisfies certainty translation invariance.

Proposition~\ref{prop:translation-fixed} now implies that
$\nu_v=\nu$ for all $v\in(0,1)$ and $V(x)=C_\nu(x)$ on $X$.
Finally, Axiom~\ref{ax:boundary-interiority} applied to indicator acts gives
$0<\nu(A)<1$ for every nontrivial event $A$.
\end{proof}


\begin{thebibliography}{99}

\bibitem[Anscombe and Aumann(1963)]{AnscombeAumann1963}
Anscombe, F. J. and R. J. Aumann (1963),
``A Definition of Subjective Probability,''
\emph{The Annals of Mathematical Statistics}, 34(1), 199--205.

\bibitem[Baillon and Bleichrodt(2015)]{BaillonBleichrodt2015}
Baillon, A. and H. Bleichrodt (2015),
``Testing Ambiguity Models through the Measurement of Probabilities for Gains
and Losses,''
\emph{American Economic Journal: Microeconomics}, 7(2), 77--100.

\bibitem[Baillon et al.(2025)]{BaillonBleichrodtLiWakker2025}
Baillon, A., H. Bleichrodt, C. Li, and P. P. Wakker (2025),
``Source Theory: A Tractable and Positive Ambiguity Theory,''
\emph{Management Science}, 71(10), 8767--8782.

\bibitem[Baillon and Placido(2019)]{BaillonPlacido2019}
Baillon, A. and L. Placido (2019),
``Testing Constant Absolute and Relative Ambiguity Aversion,''
\emph{Journal of Economic Theory}, 181, 309--332.

\bibitem[Bouchouicha et al.(2017)]{BouchouichaEtAl2017}
Bouchouicha, R., P. Martinsson, H. Medhin, and F. M. Vieider (2017),
``Stake Effects on Ambiguity Attitudes for Gains and Losses,''
\emph{Theory and Decision}, 83(1), 19--35.

\bibitem[Cerreia-Vioglio et al.(2011)]{CerreiaVioglioEtAl2011}
Cerreia-Vioglio, S., F. Maccheroni, M. Marinacci, and L. Montrucchio (2011),
``Uncertainty Averse Preferences,''
\emph{Journal of Economic Theory}, 146(4), 1275--1330.

\bibitem[Cerreia-Vioglio, Maccheroni, and Marinacci(2022)]{CerreiaVioglioMaccheroniMarinacci2022}
Cerreia-Vioglio, S., F. Maccheroni, and M. Marinacci (2022),
``Ambiguity Aversion and Wealth Effects,''
\emph{Journal of Economic Theory}, 199, 104898.

\bibitem[Chateauneuf and Faro(2009)]{ChateauneufFaro2009}
Chateauneuf, A. and J. H. Faro (2009),
``Ambiguity through Confidence Functions,''
\emph{Journal of Mathematical Economics}, 45(9--10), 535--558.

\bibitem[Chew and Epstein(1989)]{ChewEpstein1989}
Chew, S. H. and L. G. Epstein (1989),
``A Unifying Approach to Axiomatic Non-Expected Utility Theories,''
\emph{Journal of Economic Theory}, 49(2), 207--240.

\bibitem[Chew, Epstein, and Wakker(1993)]{ChewEpsteinWakker1993}
Chew, S. H., L. G. Epstein, and P. P. Wakker (1993),
``A Unifying Approach to Axiomatic Non-Expected Utility Theories: Correction
and Comment,''
\emph{Journal of Economic Theory}, 59(1), 183--188.

\bibitem[Chew and Karni(1994)]{ChewKarni1994}
Chew, S. H. and E. Karni (1994),
``Choquet Expected Utility with a Finite State Space: Commutativity and
Act-Independence,''
\emph{Journal of Economic Theory}, 62(2), 469--479.

\bibitem[Chew and Wakker(1996)]{ChewWakker1996}
Chew, S. H. and P. P. Wakker (1996),
``The Comonotonic Sure-Thing Principle,''
\emph{Journal of Risk and Uncertainty}, 12(1), 5--27.

\bibitem[Dekel(1986)]{Dekel1986}
Dekel, E. (1986),
``An Axiomatic Characterization of Preferences under Uncertainty: Weakening
the Independence Axiom,''
\emph{Journal of Economic Theory}, 40(2), 304--318.

\bibitem[Denti and Pomatto(2022)]{DentiPomatto2022}
Denti, T. and L. Pomatto (2022),
``Model and Predictive Uncertainty: A Foundation for Smooth Ambiguity Preferences,''
\emph{Econometrica}, 90(2), 551--584.

\bibitem[Ghirardato, Maccheroni, and Marinacci(2004)]{GhirardatoMaccheroniMarinacci2004}
Ghirardato, P., F. Maccheroni, and M. Marinacci (2004),
``Differentiating Ambiguity and Ambiguity Attitude,''
\emph{Journal of Economic Theory}, 118(2), 133--173.

\bibitem[Ghirardato and Marinacci(2002)]{GhirardatoMarinacci2002}
Ghirardato, P. and M. Marinacci (2002),
``Ambiguity Made Precise: A Comparative Foundation,''
\emph{Journal of Economic Theory}, 102(2), 251--289.

\bibitem[Gilboa and Schmeidler(1989)]{GilboaSchmeidler1989}
Gilboa, I. and D. Schmeidler (1989),
``Maxmin Expected Utility with Non-Unique Prior,''
\emph{Journal of Mathematical Economics}, 18(2), 141--153.

\bibitem[Grant, Kajii, and Polak(2000)]{GrantKajiiPolak2000}
Grant, S., A. Kajii, and B. Polak (2000),
``Decomposable Choice under Uncertainty,''
\emph{Journal of Economic Theory}, 92(2), 169--197.

\bibitem[Grant and Polak(2013)]{GrantPolak2013}
Grant, S. and B. Polak (2013),
``Mean-Dispersion Preferences and Constant Absolute Uncertainty Aversion,''
\emph{Journal of Economic Theory}, 148(4), 1361--1398.

\bibitem[Grant, Roorda, and Yang(2025)]{GrantRoordaYang2025}
Grant, S., B. Roorda, and J. Yang (2025),
``Expected Balanced Uncertain Utility,''
\emph{Theoretical Economics}, 20(1), 1--25.

\bibitem[Hazen(1987)]{Hazen1987}
Hazen, G. B. (1987),
``Subjectively Weighted Linear Utility,''
\emph{Theory and Decision}, 23(3), 261--282.

\bibitem[Klibanoff, Marinacci, and Mukerji(2005)]{KlibanoffMarinacciMukerji2005}
Klibanoff, P., M. Marinacci, and S. Mukerji (2005),
``A Smooth Model of Decision Making under Ambiguity,''
\emph{Econometrica}, 73(6), 1849--1892.

\bibitem[K\"onig-Kersting, Kops, and Trautmann(2023)]{KonigKerstingKopsTrautmann2023}
K\"onig-Kersting, C., C. Kops, and S. T. Trautmann (2023),
``A Test of (Weak) Certainty Independence,''
\emph{Journal of Economic Theory}, 209, 105623.

\bibitem[Maccheroni, Marinacci, and Rustichini(2006)]{MaccheroniMarinacciRustichini2006}
Maccheroni, F., M. Marinacci, and A. Rustichini (2006),
``Ambiguity Aversion, Robustness, and the Variational Representation of Preferences,''
\emph{Econometrica}, 74(6), 1447--1498.

\bibitem[Marinacci, Principi, and Stanca(2026)]{MarinacciPrincipiStanca2026}
Marinacci, M., G. Principi, and L. Stanca (2026),
``Recursive Preferences and Ambiguity Attitudes,''
\emph{Journal of Economic Theory}, 236, 106224.

\bibitem[Payr\'o(2025)]{Payro2025}
Payr\'o, F. (2025),
``Mixture-Betweenness: Uncertainty and Commitment,''
\emph{Journal of Economic Theory}, 230, 106097.

\bibitem[Payr\'o, Takeoka, and Xia(2026)]{PayroTakeokaXia2026}
Payr\'o, F., N. Takeoka, and J. Xia (2026),
``A Cognitive Theory of Ambiguity Attitudes,''
BSE Working Paper 1587, Barcelona School of Economics, July 2026 version.

\bibitem[Roorda and Joosten(2026)]{RoordaJoosten2026}
Roorda, B. and R. Joosten (2026),
``Balancing Acts,''
\emph{Journal of Mathematical Economics}, 126, 103273.

\bibitem[Schmeidler(1989)]{Schmeidler1989}
Schmeidler, D. (1989),
``Subjective Probability and Expected Utility without Additivity,''
\emph{Econometrica}, 57(3), 571--587.

\bibitem[Siniscalchi(2009)]{Siniscalchi2009}
Siniscalchi, M. (2009),
``Vector Expected Utility and Attitudes Toward Variation,''
\emph{Econometrica}, 77(3), 801--855.

\bibitem[Trautmann and Wakker(2018)]{TrautmannWakker2018}
Trautmann, S. T. and P. P. Wakker (2018),
``Making the Anscombe--Aumann Approach to Ambiguity Suitable for Descriptive
Applications,''
\emph{Journal of Risk and Uncertainty}, 56(1), 83--116.

\end{thebibliography}
\end{document}